\def\br#1{\left(#1\right)}
\def\Br#1{\left[#1\right]}
\def\BR#1{\left\{#1\right\}}
\def\angles#1{\left<#1\right>}
\newtheorem{theorem}{Theorem}
\newtheorem{defi}[theorem]{Definition}
\newtheorem{corollary}[theorem]{Corollary}
\newtheorem{definition}[theorem]{Definition}
\newtheorem{example}[theorem]{Example}
\newtheorem{lemma}[theorem]{Lemma}
\newtheorem{notation}[theorem]{Notation}
\newtheorem{proposition}[theorem]{Proposition}
\newtheorem{remark}[theorem]{Remark}
\def\br#1{\left(#1\right)}
\def\Br#1{\left[#1\right]}
\def\BR#1{\left\{#1\right\}}
\def\angles#1{\left<#1\right>}
\DeclareMathOperator{\diag}{diag} \DeclareMathOperator{\dive}{div}
\DeclareMathOperator{\Hol}{Hol}
\DeclareMathOperator{\LAhol}{\mathfrak{hol}}
\DeclareMathOperator{\spa}{Span}
\begin{document}

\title{Geometric Arbitrage Theory and Market Dynamics Reloaded}

\author{Simone Farinelli\\
        Core Dynamics GmbH\\
        Scheuchzerstrasse 43\\
        CH-8006 Zurich\\
        Email: simone@coredynamics.ch
        }

%

\bigskip
\bigskip
\bigskip

\maketitle \thispagestyle{empty} \nopagebreak

\begin{abstract}
This paper is essentially a new version of \cite{Fa15}, where some flaws have been amended.
We have embedded the classical theory of stochastic finance into a
differential geometric framework called \textit{Geometric Arbitrage
Theory} and show that it is possible to:

\begin{itemize}
\item Write arbitrage as curvature of a principal fibre bundle.
\item Parameterize arbitrage strategies by its holonomy.
\item Give the Fundamental Theorem of Asset Pricing a
differential homotopic characterization.
\item Characterize Geometric Arbitrage Theory by five principles and
show they are consistent with the classical theory of
stochastic finance.
\item Derive for a closed market the equilibrium solution for market portfolio and
dynamics in the cases where:
 \begin{itemize}
  \item Arbitrage is allowed but minimized.
  \item Arbitrage is not allowed.
 \end{itemize}
\item Prove that the no-free-lunch-with-vanishing-risk condition
implies the zero curvature condition. The converse is in general
not true and additionally requires the Novikov condition for the
instantaneous Sharpe Ratio to be satisfied.
\end{itemize}

\end{abstract}

\section{Introduction}
This paper develops a conceptual structure - called
\textit{Geometric Arbitrage Theory or GAT} - embedding the
classical stochastic finance into a stochastic differential
geometric framework. The main contribution of this approach
consists of modelling markets made of basic financial instruments
together with their term structures as principal fibre bundles.
Financial features of this market - like no arbitrage and
equilibrium - are then characterized in terms of standard
differential geometric constructions - like curvature - associated
to a natural connection in this fibre bundle or to a stochastic
Lagrangian structure that can be associated to it.\\Several
research areas can benefit from the GAT approach:
\begin{itemize}
\item Risk management, with the development of a consistent scenario
generators reducing the complexity of the market, while
maintaining the fundamental connections between financial
instruments and allowing for a reconciliation of econometric
forecasting with SDEs techniques. See Smith and Speed
(\cite{SmSp98}).
\item Pricing, hedging and statistical arbitrage, with the
development of generalized Black-Scholes equations accounting for
arbitrage and the computation of positive arbitrage strategies in
intraday markets. See Farinelli and Vazquez (\cite{FaVa12}) for a
practical application leading to an almost one probability growth
portfolios with real assets.
\end{itemize}

Principal fibre bundle theory has been heavily exploited in
theoretical physics as the language in which laws of nature can be
best formulated by providing an invariant framework to describe
physical systems and their dynamics. These ideas can be carried
over to mathematical finance and economics. A market is a
financial-economic system that can be described by an appropriate
principle fibre bundle. A principle like the invariance of market
laws under change of num\'{e}raire can be seen then as gauge
invariance. The fact that gauge theories are the natural language
to describe economics was first proposed by Malaney and Weinstein
in the context of the economic index problem (\cite{Ma96},
\cite{We06}). Ilinski (see \cite{Il00} and \cite{Il01}) and Young
(\cite{Yo99}) proposed to view arbitrage as the curvature of a
gauge connection, in analogy to some physical theories.
Independently, Cliff and Speed (\cite{SmSp98}) further developed
Flesaker and Hughston seminal work (\cite{FlHu96}) and utilized
techniques from differential geometry (indirectly mentioned by
allusive wording) to reduce the complexity of asset models before
stochastic modelling. Perhaps due to its borderline nature lying
at the intersection between stochastic finance and differential
geometry, there was almost no further mathematical research, and
the subject, unfairly considered as an exotic topic, remained
confined to econophysics, (see \cite{FeJi07}, \cite{Mo09} and
\cite{DuFiMu00}). We would like to demonstrate that Geometric
Arbitrage Theory can be given a rigorous mathematical background
and can bring new insights to mathematical finance by looking at
the same concepts from a different perspective. That for we will
utilize the formal background of stochastic differential geometry
as in Schwartz (\cite{Schw80}), Elworthy (\cite{El82}), Em\'{e}ry
(\cite{Em89}), Hackenbroch and Thalmaier (\cite{HaTh94}), Stroock
(\cite{St00}) and Hsu (\cite{Hs02}).\par This paper is structured
as follows. In Section 2, after an introductory review of
classical stochastic finance, the primitives of Geometric
Arbitrage Theory are explained. Section 3 develops the foundations
of GAT, allowing an interpretation of arbitrage as curvature of a
principal fibre bundle representing the market and defining the
quantity of arbitrage associated to a market or to a
self-financing strategy. The no-free-lunch-with-vanishing-risk (or
NFLVR for short) condition implies the vanishing of the curvature.
The converse is in general not true and additionally requires the
instantaneous Sharpe Ratio for the asset value dynamics to satisfy
the Novikov condition. The NFLVR condition has the interpretation
of a continuity equation satisfied by value density and current
of the market, as fluid density and current in the hydrodynamics
of an incompressible flow. If all market agents follow the principle
 of expected utility maximization, then the curvature vanishes and viceversa.
Section 4 provides a guiding example for a market whose asset
prices are It\^{o} processes. In Section 5 the connections between
mathematical finance and differential topology are analyzed.
Homotopic equivalent self-financing arbitrage strategies can be
parameterized by the Lie algebra of the holonomy group of the
principal fibre bundle. The no-free-lunch-with-vanishing-risk
condition is seen to be equivalent to the triviality of the
holonomy group or to the triviality of the homotopy group. This is
a differential-homotopic formulation of the Fundamental Theorem of
Asset Pricing. In Section 6 we express the market model in terms
of a stochastic Lagrangian system, whose dynamics is given by the
stochastic Euler-Lagrange Equations. Symmetries of the Lagrange
function can be utilized to derive first integrals of the dynamics
by means of the stochastic version of N\"other's Theorem.
Equilibrium and non-equilibrium solutions are explicitly computed.
Section 7 concludes.

\section{Geometric Arbitrage Theory Fundamentals}\label{section2}

\subsection{The Classical Market Model}\label{StochasticPrelude}
In this subsection we will summarize the classical set up, which
will be rephrased in Section \ref{foundations} in differential
geometric terms. We basically follow \cite{HuKe04} and the ultimate
reference \cite{DeSc08}.\par We assume continuous time trading and
that the set of trading dates is $[0,+\infty[$. This assumption is
general enough to embed the cases of finite and infinite discrete
times as well as the one with a finite horizon in continuous time.
Note that while it is true that in the real world trading occurs at
discrete times only, these are not known a priori and can be
virtually any points in the time continuum. This motivates the
technical effort of continuous time stochastic finance.\par The
uncertainty is modelled by a filtered probability space
$(\Omega,\mathcal{A}, \mathbb{P})$, where $\mathbb{P}$ is the
statistical (physical) probability measure,
$\mathcal{A}=(\mathcal{A}_t)_{t\in[0,+\infty[}$ an increasing
family of sub-$\sigma$-algebras of $\mathcal{A}_{\infty}$ and
$(\Omega,\mathcal{A}_{\infty}, \mathbb{P})$ is a probability space.
The filtration $\mathcal{A}$ is assumed to satisfy the usual
conditions, that is
\begin{itemize}
\item right continuity: $\mathcal{A}_t=\bigcap_{s>t}\mathcal{A}_s$ for all $t\in[0,+\infty[$.
\item $\mathcal{A}_0$ contains all null sets of
$\mathcal{A}_{\infty}$.
\end{itemize}
The market consists of finitely many \textbf{assets} indexed by
$j=1,\dots,N$, whose \textbf{nominal prices} are given by the
vector valued semimartingale
$S:[0,+\infty[\times\Omega\rightarrow\mathbf{R}^N$ denoted by
$(S_t)_{t\in[0,+\infty[}$ adapted to the filtration $\mathcal{A}$.
The stochastic process $(S^ j_t)_{t\in[0,+\infty[}$ describes the
price at time $t$ of the $j$th asset in terms of  unit of cash
\textit{at time $t=0$}. More precisely, we assume the existence of a
$0$th asset, the \textbf{cash}, a strictly positive
semimartingale, which evolves according to
$S_t^0=\exp(\int_0^tdu\,r^0_u)$, where the integrable
semimartingale $(r^0_t)_{t\in[0,+\infty[}$ represents the
continuous interest rate provided by the cash account: one always
knows in advance what the interest rate on the own bank account
is, but this can change from time to time. The cash account is
therefore considered the locally risk less asset in contrast to
the other assets, the risky ones. In the following we will mainly
utilize \textbf{discounted prices}, defined as
$\hat{S}_t^j:=S_t^j/S^{0}_t$, representing the asset prices in
terms of \textit{current} unit of cash.\par
 We remark that there is no need to
assume that asset prices are positive. But, there must be at least
one strictly positive asset, in our case the cash. If we want to
renormalize the prices by choosing another asset instead of the
cash as reference, i.e. by making it to our
\textbf{num\'{e}raire}, then this asset must have a strictly
positive price process. More precisely, a generic num\'{e}raire is
an asset, whose nominal price is represented by a strictly
positive stochastic process $(B_t)_{t\in[0,+\infty[}$, and
 which is a portfolio of the original assets $j=0,1,2,\dots,N$. The discounted prices of the original
assets are  then represented in terms of the num\'{e}raire by the
semimartingales $\hat{S}_t^j:=S_t^j/B_t$.\par We assume that there
are no transaction costs and that short sales are allowed. Remark
that the absence of transaction costs can be a serious limitation
for a realistic model. The filtration $\mathcal{A}$ is not
necessarily generated by the price process
$(S_t)_{t\in[0,+\infty[}$: other sources of information than
prices are allowed. All agents have access to the same information
structure, that is to the filtration $\mathcal{A}$.\par A
 \textbf{strategy} is a predictable stochastic
process $x:[0,+\infty[\times\Omega\rightarrow\mathbf{R}^N$
describing the portfolio holdings. The stochastic process $(x^
j_t)_{t\in[0,+\infty[}$ represents the number of pieces of $j$th
asset portfolio held by the portfolio as time goes by. Remark that
the It\^{o} stochastic integral
\begin{equation}
\int_0^tx\cdot dS=\int_0^tx_u\cdot dS_u,
\end{equation}
\noindent and the Stratonovich's stochastic integral
\begin{equation}\label{strat}
\int_0^tx\circ dS:=\int_0^tx\cdot d
S+\frac{1}{2}\int_0^td\left<x,S\right>=\int_0^tx_u\cdot d
S_u+\frac{1}{2}\int_0^td \left<x,S\right>_u
\end{equation}
 are well defined
for this choice of integrator ($S$) and integrand ($x$), as long as
the strategy is \textbf{admissible}. We mean by this that $x$  is a
predictable semimartingale for which the It\^{o} integral $\int_0^tx\cdot dS\ge-v$ is a.s.
for some $v>0$ and all $t$. Thereby, the bracket $\left<\cdot,
\cdot\right>$ denotes the continuous part of the quadratic covariation of two processes. In a general context
strategies do not need to be semimartingales, but if we want the quadratic covariation in (\ref{strat}) and hence
 Stratonovich's integral to be well defined, we must require this additional assumption.
For details about stochastic integration we refer to Appendix A in
\cite{Em89}, which summarizes Chapter VII of the authoritative
\cite{DeMe80}. The portfolio value is the process
$(V_t)_{t\in[0,+\infty[}$ defined by
\begin{equation}
V_t:=V_t^x:=x_t\cdot S_t.
\end{equation}
An admissible strategy $x$ is said to be \textbf{self-financing}
if and only if the portfolio value at time $t$ is given by
\begin{equation}
V_t=V_0+\int_0^tx_u\cdot dS_u.
\end{equation}
\noindent This means that the portfolio gain is  the It\^{o}
integral of the strategy with the price process as integrator: the
change of portfolio value is purely due to changes of the assets'
values. The self-financing condition can be rewritten in
differential form as
\begin{equation}
dV_t=x_t\cdot dS_t.
\end{equation}
As pointed out in \cite{BjHu05}, if we want to utilize
Stratonovich's integral to rephrase the self-financing condition,
while maintaining its economical interpretation (which is necessary
for the subsequent constructions of mathematical finance), we write
\begin{equation}
V_t=V_0+\int_0^tx_u\circ dS_u-\frac{1}{2}\int_0
^td\left<x,S\right>_u
\end{equation}
or, equivalently
\begin{equation}
dV_t=x_t\circ dS_t-\frac{1}{2}\,d\left<x,S\right>_t.
\end{equation}
\par An \textbf{arbitrage strategy} (or arbitrage for short) for the market model is an admissible self-financing
strategy $x$, for which one of the following condition holds for
some horizon $T>0$:
\begin{itemize}
\item $P[V_0^{x}<0]=1$ and $P[V_T^{x}\ge0]=1$,
\item $P[V_0^{x}\le0]=1$ and $P[V_T^{x}\ge0]=1$ with $P[V_T^{x}>0]>0$.
\end{itemize}
In Chapter 9 of \cite{DeSc08} the no arbitrage condition is given
a topological characterization. In view of the fundamental Theorem
of asset pricing, the no-arbitrage condition is substituted by a
stronger condition, the so called
no-free-lunch-with-vanishing-risk.
\begin{definition}\label{def1}
Let $(S_t)_{t\in[0,+\infty[}$ be a semimartingale and
$(x_t)_{t\in[0,+\infty[}$ and admissible strategy. We denote by
$(x\cdot S)_{+\infty}:=\lim_{t\rightarrow +\infty}\int_0^tx_u\cdot
dS_u$, if such limit exists, and by $K_0$ the subset of
$L^0(\Omega,\mathcal{A}_{\infty},P)$ containing all such $(x\cdot
S)_{+\infty}$. Then, we define
\begin{itemize}
\item $C_0:=K_0 - L^0_{+}(\Omega,\mathcal{A}_{\infty},\mathbb{P})$.
\item $C:=C_0\cap L^{\infty}_{+}(\Omega,\mathcal{A}_{\infty},\mathbb{P})$.
\item $\bar{C}$: the closure of $C$ in $L^{\infty}$ with respect to
the norm topology.
\end{itemize}
The market model satisfies
\begin{itemize}
\item the \textbf{no arbitrage condition (NA)} if and only if
$C\cap L^{\infty}(\Omega,\mathcal{A}_{\infty},\mathbb{P})=\{0\}$,
and
\item the \textbf{ no-free-lunch-with-vanishing-risk condition (NFLVR)} if and only if
$\bar{C}\cap
L^{\infty}(\Omega,\mathcal{A}_{\infty},\mathbb{P})=\{0\}$.
\end{itemize}
\end{definition}

\noindent Delbaen and Schachermayer proved in 1994 (see
\cite{DeSc08} Chapter 9.4, in particular the main Theorem 9.1.1)
\begin{theorem}[\textbf{Fundamental Theorem of Asset Pricing in Continuous
Time}]\label{ThmDeSch} Let $(S_t)_{t\in[0,+\infty[}$ and
$(\hat{S}_t)_{t\in[0,+\infty[}$ be  bounded semimartingales. There
is an equivalent martingale measure $\mathbb{P}^*$ for the
discounted prices $\hat{S}$ if and only if the market model
satisfies the (NFLVR).
\end{theorem}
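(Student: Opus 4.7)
The plan is to prove the two implications of this equivalence separately, with the understanding that one direction is essentially a Fatou-type computation while the other is the celebrated hard half of Delbaen-Schachermayer and must rely on a substantial closure argument.

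For the easy direction (existence of $P^*$ implies (NFLVR)), I would fix an equivalent local martingale measure $P^*$ and an admissible strategy $x$. Since $S$ is a local martingale under $P^*$ and $x$ is locally bounded, $(x\cdot S)_t=\int_0^t x_u\cdot dS_u$ is a local martingale under $P^*$. Admissibility forces this local martingale to be uniformly bounded from below, so by Fatou's lemma it is a supermartingale, and in particular $E^{P^*}[(x\cdot S)_\infty]\le 0$ whenever the limit exists. This shows that every element of $K_0$, and hence of $C_0$ and $C$, has non-positive $P^*$-expectation. Since $P^*\sim P$, any element of $\bar C\cap L^\infty_+$ must in fact be zero: I would close up the expectation estimate under $L^\infty$ norm convergence (which dominates $L^1(P^*)$ convergence because $P^*\sim P$ and we work in $L^\infty$), and then use strict positivity of $P^*$ on $\{f>0\}$ to conclude $f=0$ a.s.

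The hard direction (NFLVR implies existence of $P^*$) I would structure in three steps following the Delbaen-Schachermayer blueprint. First, I would establish the key analytic result that under (NFLVR) the convex cone $C$ is already weak-$*$ (i.e., $\sigma(L^\infty,L^1)$) closed in $L^\infty$. Second, I would apply the Kreps-Yan separation theorem to $\bar C$ and $L^\infty_+\setminus\{0\}$: since $\bar C$ is a weak-$*$ closed convex cone containing $-L^\infty_+$ and meeting $L^\infty_+$ only at $0$, there exists a strictly positive continuous linear functional $\varphi\in(L^\infty)^*$ with $\varphi|_{\bar C}\le 0$ and $\varphi|_{L^\infty_+\setminus\{0\}}>0$. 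Weak-$*$ closedness guarantees that $\varphi$ can be chosen in $L^1(P)$, i.e., represented by a density $dP^*/dP$ defining an equivalent probability measure. Third, I would translate the inequality $E^{P^*}[(x\cdot S)_\infty]\le 0$ for all admissible $x$ into the local martingale property of $S$ under $P^*$: for bounded elementary $x$ of the form $x=h\,\mathbf 1_{]s,t]}$ with $h$ bounded and $\mathcal{A}_s$-measurable, the inequality yields $E^{P^*}[h(S_t-S_s)]\le 0$, and applying the same to $-x$ gives equality; localisation via stopping times (using local boundedness of $S$) then extends this to the local martingale property on $[0,+\infty[$.

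The main obstacle is, unambiguously, the first step of the hard direction: proving that $C$ is weak-$*$ closed in $L^\infty$. This is where (NFLVR) is genuinely used and where the continuous-time theory departs sharply from the finite-dimensional Harrison-Pliska setup. I expect to need the full Delbaen-Schachermayer machinery here, in particular a Memin-type convergence theorem for stochastic integrals together with the crucial lemma that if $(f_n)$ is a sequence in $K_0$ converging a.s. to some $f$ with $f_n\ge -1$, then $(f_n)$ is Fatou-convergent in the sense that one can extract forward convex combinations converging in probability, and the limit again lies in $K_0$. The Krein-\v Smulian theorem then reduces weak-$*$ closedness to showing that $\bar C\cap\{\|\cdot\|_\infty\le 1\}$ is closed under a.s. (equivalently $L^0$) convergence, which is precisely what (NFLVR) buys via the no-free-lunch condition ruling out pathological limits. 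The remaining pieces (Kreps-Yan and the identification of $P^*$ as a local martingale measure) are then comparatively standard functional-analytic manipulations.
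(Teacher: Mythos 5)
The paper does not prove this theorem: it is quoted as an external result, explicitly attributed to Delbaen and Schachermayer (1994) with a pointer to Chapter 9.4 of \cite{DeSc08}, so there is no internal proof to compare yours against. Judged on its own terms, your outline faithfully reproduces the architecture of the original argument. The easy direction --- $x\cdot S$ is a $P^*$-local martingale bounded below, hence a supermartingale by Fatou, hence every element of $C$ and then of $\bar C$ has non-positive $P^*$-expectation, forcing $\bar C\cap L^\infty_+=\{0\}$ by equivalence of measures --- is complete in all essentials. The hard direction is correctly decomposed into (i) weak-$*$ closedness of $C$ under (NFLVR), (ii) Kreps--Yan separation producing a strictly positive $\varphi\in L^1(P)$, and (iii) identification of $dP^*/dP:=\varphi$ as a local martingale measure by testing against elementary integrands $h\,\mathbf 1_{]s,t]}$ and localising via the local boundedness of $S$.

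The caveat is that step (i) is not a step one can merely name: the Krein--\v{S}mulian reduction to $L^0$-closedness of the unit ball of $C$, the extraction of forward convex combinations, the maximal-element argument, and the Memin-type stability of $K_0$ under Fatou limits together constitute essentially the entire technical content of the Delbaen--Schachermayer paper, and your proposal invokes these ingredients without supplying them. As a self-contained proof it therefore has a large hole exactly where you say it does; as a citation-backed roadmap it is accurate, and no more is done in the paper under review. Two further remarks. First, for a merely locally bounded $S$ the correct conclusion is an equivalent \emph{local} martingale measure; your easy direction already works at that level of generality, which is the right one, whereas the paper's statement elides the distinction. Second, the paper's displayed definitions contain typographical slips ($C_0:=K_0\setminus L^0_+$ should be the algebraic difference $K_0-L^0_+(\Omega,\mathcal{A}_\infty,P)$, and the (NA) condition should read $C\cap L^\infty_+=\{0\}$); your argument implicitly and correctly uses the standard definitions.
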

This is a generalization for continuous time of the
Dalang-Morton-Willinger Theorem proved in 1990 (see \cite{DeSc08},
Chapter 6) for the discrete time case, where the (NFLVR) is relaxed
to the (NA) condition. The Dalang-Morton-Willinger Theorem
generalizes to arbitrary probability spaces the Harrison and Pliska
Theorem (see \cite{DeSc08}, Chapter 2) which holds true in discrete
time for finite probability spaces.\par An equivalent alternative to
the martingale measure approach for asset pricing purposes is given
by the pricing kernel (state price deflator) method.
\begin{definition}
Let $(S_t)_{t\in[0,+\infty[}$  be a semimartingale describing the
price process for the assets of our market model. The positive
semimartingale $(\beta_t)_{t\in[0,+\infty[}$ is called
\textbf{pricing kernel (or state price deflator)} for $S$ if and only if
$(\beta_tS_t)_{t\in[0,+\infty[}$ is a $\mathbb{P}$-martingale.
\end{definition}

\begin{theorem}\label{ThmZ}
Let $(S_t)_{t\in[0,+\infty[}$ and $(\hat{S}_t)_{t\in[0,+\infty[}$
be  bounded semimartingales. The process $\hat{S}$ admits an
equivalent martingale measure $\mathbb{P}^*$ if and only if there
is a pricing kernel $\beta$ for $S$ (or for $\hat{S}$), which is a $\mathbb{P}$-martingale.
\end{theorem}

As shown in \cite{HuKe04} (Chapter 7, definitions 7.18, 7.47 and
Theorem 7.48), the existence of a pricing kernel is equivalent to
the existence of an equivalent martingale measure for a specific choice of num\'{e}raire.
If we want the num\'{e}raire to be arbitrary, like the one we originally choose for the model, then we have to additionally assume that the pricing kernel $\beta$ is a $\mathbb{P}$-martingale.\par

In economic theory the value of an investment is given by the
present value of its future cashflows. This idea can be
mathematically formalized in terms of the market model presented
so far by introducing the following
\begin{definition}[\textbf{Cashflows and Intensities}] Let $(S_t)_{t\in[0,+\infty[}$ be the $\mathbf{R}^N$
valued semimartingale representing nominal prices,
given a certain num\'{e}raire with value process
$(B_t)_{t\in[0,+\infty[}$. All process are adapted to the
filtration $\mathcal{A}$. The asset \textbf{stochastic cashflow intensities}
are given by the semimartingale $(c_t)_{t\in[0,+\infty[}$ defined
as
\begin{equation}
c_t:=-\lim_{h\rightarrow 0^+}\mathbb{E}_t\left[\frac{S_{t+h}
-S_{t}}{h}\right]+r^0_tS_t,
\end{equation}
wherever the limit is defined.
\noindent The components of a vector valued process
$(C_t)_{t\in[0,+\infty[}$ satisfying the It\^{o} integral equation
\begin{equation}
C_t=\int_{t^{-}}^{t^+}dc_h
\end{equation}
are termed  \textbf{stochastic cashflows}.
\end{definition}
\noindent For example, a bond is identified with its future
coupons and its nominal, and a stock is identified with all its
future dividends. In the (straight) bond case the cashflow
 is deterministic, has discontinuities at the coupon
payment dates and vanishes after maturity. In the stock case the
cashflow  is stochastic, has discontinuities at the dividend
payment dates and has an unbounded support. In these two cases
intensities exist as stochastic generalized functions.
\begin{theorem}\label{ThmNPV}
Let $(S_t)_{t\in[0,+\infty[}$ and $(c_t)_{t\in[0,+\infty[}$ be
 bounded semimartingales, and the cash account $j=0$ be the num\'{e}raire. If the market model satisfies the
NFLVR condition, then
\begin{equation}
S_t=\mathbb{E}^*_t\left[\int_t^{+\infty}dh\,c_h
\exp\left(-\int_t^hdu\,r^0_u\right)\right]=\frac{1}{\beta_t}\mathbb{E}_t\left[\int_t^{+\infty}dh\,c_h\beta_{h}\right],
\end{equation}
where $\mathbb{E}_t^*$ denotes
the risk neutral conditional expectation, and the martingale
$\beta$ the state price deflator.
\end{theorem}

\subsection{Geometric Reformulation of the Market Model: Primitives}
We are going to introduce a more general representation of the
market model introduced in section \ref{StochasticPrelude}, which
better suits to the arbitrage modelling task. In this subsection
we extend the terminology introduced by \cite{SmSp98} for the time
discrete case to the generic one.
\begin{defi}\label{defi1}
A \textbf{gauge} is an ordered pair of two $\mathcal{A}$-adapted
real valued semimartingales $(D, P)$, where
$D=(D_t)_{t\ge0}:[0,+\infty[\times\Omega\rightarrow\mathbf{R}$ is
called \textbf{deflator} and
$P=(P_{t,s})_{t,s}:\mathcal{T}\times\Omega\rightarrow\mathbf{R}$,
which is called \textbf{term structure}, is considered as a
stochastic process with respect to the time $t$, termed
\textbf{valuation date} and
$\mathcal{T}:=\{(t,s)\in[0,+\infty[^2\,|\,s\ge t\}$. The parameter
$s\ge t$ is referred as \textbf{maturity date}. The following
properties must be satisfied a.s. for all $t, s$ such that $s\ge
t\ge 0$:
 \begin{itemize}
  \item [(i)] $P_{t,s}>0$,
  \item [(ii)] $P_{t,t}=1$.
 \end{itemize}
\end{defi}

\begin{remark}
Deflators and term structures can be considered
\textit{outside the context of fixed income.} An arbitrary
financial instrument is mapped to a gauge $(D, P)$ with the
following economic interpretation:
\begin{itemize}
\item Deflator: $D_t$ is the value of the financial instrument at time $t$ expressed in terms of some num\'{e}raire. If we
choose the cash account, the $0$-th asset as num\'{e}raire, then
we can set $D_t^j:=\hat{S}_t^j=\frac{S_t^j}{S_t^0}\quad(j=1,\dots
N)$.
\item Term structure: $P_{t,s}$ is the value at time $t$ (expressed in units of
deflator at time $t$) of a synthetic zero coupon bond with
maturity $s$ delivering one unit of financial instrument at time
$s$. It represents a term structure of forward prices with respect
to the chosen num\'{e}raire.
\end{itemize}
\noindent We point out that there is no unique choice for
deflators and term structures describing an asset model. For
example, if a set of deflators qualifies, then we can multiply
every deflator  by the same positive semimartingale to obtain
another suitable set of deflators. Of course term structures have
to be modified accordingly. The term "deflator" is clearly
inspired by actuarial mathematics. In the present context it
refers to a nominal asset value up division by a strictly positive
semimartingale (which can be the state price deflator if this
exists and it is made to the num\'{e}raire). There is no need to
assume that a deflator is a positive process. However, if we want
to make an asset to our num\'{e}raire, then we have to make sure
that the corresponding deflator is a strictly positive stochastic
process.
\end{remark}
\begin{example}{Stock Index}\\
Let us consider a total return stock index, where the dividends are
reinvested.
\begin{itemize}
\item $D_t=$ stock index value at time $t$ expressed in terms of the cash asset (risk free discounting).
\item $P_{t,s}=$ price of a forward on the stock index issued at time
$t$ maturing at time $s$ expressed in terms of $D_t$.
\end{itemize}
\end{example}

\begin{example}{Zero Bonds}\\
Let us consider a family of maturing zero bonds.
\begin{itemize}
\item $D_t\equiv1=$ value of a zero bond maturing at time $t$ = value of one unit of cash at time $t$ expressed in terms of the cash asset itself.
\item $P_{t,s}=$ price of a zero bond issued at time $t$ and delivering one unit of cash at time $s$ expressed in terms of $D_t$.
\end{itemize}
\end{example}

\noindent Deflators typically represent for a currency the time
evolution of inflation or deflation. Quotients of deflators are
exchange rates.
\begin{example}{Exchange Rates}
\begin{equation}\frac{D_t^{USD}}{D_t^{CHF}}=FX_t^{CHF \rightarrow USD}.\end{equation}
\end{example}

\subsection{Geometric Reformulation of the Market Model: Portfolios}\label{trans}
We want now to introduce transforms of deflators and term structures
in order to group gauges containing the same (or less) stochastic
information. That for, we will consider \textit{deterministic}
linear combinations of assets modelled by the same gauge (e. g. zero
bonds of the same credit quality with different maturities).

\begin{defi}\label{gaugeTransforms2}
Let $\pi:[0, +\infty[\longrightarrow \mathbf{R}$ be a deterministic
cashflow intensity (possibly generalized) function. It induces a
\textbf{gauge transform} $(D,P)\mapsto
\pi(D,P):=(D,P)^{\pi}:=(D^{\pi}, P^{\pi})$ by the formulae
\begin{equation}
\boxed{ D_t^{\pi}:=D_t\int_0^{+\infty}dh\,\pi_h P_{t, t+h}\qquad
P_{t,s}^{\pi}:=\frac{\int_0^{+\infty}dh\,\pi_h P_{t,
s+h}}{\int_0^{+\infty}dh\,\pi_h P_{t, t+h}}.}
\end{equation}

\end{defi}

\begin{remark} The cashflow intensity $\pi$ specifies the bond cashflow
structure. The bond value at time $t$ expressed in terms of the
market model  num\'{e}raire is given by $D_t^{\pi}$.  The term
structure of forward prices for the bond future expressed in terms
of the bond current value is given by $P_{t,s}^{\pi}$. \par A
gauge transform is well defined if and only if the integrals are
convergent, which is the case if
$\limsup_{h\rightarrow\infty}\exp\left(\frac{\log(|\pi_h|)}{h}\right)\le
1$. A gauge transform with positive cashflows always maps a gauge
to another gauge. A generic gauge transform does not, since the
positivity of term structures is not a priori preserved.
Therefore, when referring to a generic gauge transform, it is
necessary to specify its domain of definition, that is the set of
gauges which are mapped to other gauges.
\end{remark}
We can use gauge
transforms to construct portfolios of instruments already modelled
by a known gauge.

\begin{example}{Coupon Bonds}\\
Let $(D,P)$ the gauge describing the family of zero bonds. To model
a family of straight coupon bonds with coupon rate $g$ and term to
maturity $T$ let us choose the (generalized) cashflow intensity
function
\begin{equation}
\pi_t:=\sum_{s=1}^{T-1}g\delta_{t-s}+(1+g)\delta_{t-T}.
\end{equation}
\noindent Thereby $\delta$ denotes the Dirac-delta generalized
function.
\begin{itemize}
\item $D_t^{\pi}=$ value at time $t$ of a coupon bond issued at time $t$.
\item $P_{ts}^{\pi}=$ price of a synthetic zero bond issued at time $t$ and delivering at time $s$ a coupon bond (issued at time $s$), expressed in terms of $D_t^{\pi}$.
\end{itemize}
\end{example}

\begin{proposition}\label{conv}
Gauge transforms induced by cashflow vectors have the following
property:
\begin{equation}((D,P)^{\pi})^{\nu}= ((D,P)^{\nu})^{\pi} = (D,P)^{\pi\ast\nu},\end{equation} where
$\ast$ denotes the convolution product of two cashflow vectors or
intensities respectively:
\begin{equation}\label{convdef}
    (\pi\ast\nu)_t:=\int_0^tdh\,\pi_h\nu_{t-h}.
\end{equation}

\end{proposition}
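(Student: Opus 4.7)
Since the convolution product $\ast$ on cashflow intensities is commutative, the second equality $((D,P)^\pi)^\nu=((D,P)^\nu)^\pi$ follows from the first, $((D,P)^\pi)^\nu=(D,P)^{\pi\ast\nu}$, by interchanging the roles of $\pi$ and $\nu$. So the plan is to establish the single identity $((D,P)^\pi)^\nu=(D,P)^{\pi\ast\nu}$ by direct computation, treating the deflator and the term structure in turn.

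For the deflator, I would substitute the definitions of $D_t^\pi$ and $P_{t,t+h}^\pi$ into the formula for the outer $\nu$-transform. The key observation is that the factor $\int_0^{+\infty}dk\,\pi_k P_{t,t+k}$ appearing as the prefactor of $D_t^\pi$ is exactly the denominator of $P_{t,t+h}^\pi$, so after unfolding these two factors cancel and one is left with $D_t$ times the double integral of $\nu_h\pi_k P_{t,t+h+k}$ over $(h,k)\in[0,+\infty[^2$. Substituting $u=h+k$ and applying Fubini turns the inner integration into $\int_0^u dk\,\pi_k\nu_{u-k}=(\pi\ast\nu)_u$, and the resulting single integral is by definition $D_t^{\pi\ast\nu}$.

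For the term structure, the same cancellation-plus-substitution pattern applies to both the numerator $\int_0^{+\infty}dh\,\nu_h P_{t,s+h}^\pi$ and the denominator $\int_0^{+\infty}dh\,\nu_h P_{t,t+h}^\pi$ of $(P_{t,s}^\pi)^\nu$. In each case the common factor $\int_0^{+\infty}dk\,\pi_k P_{t,t+k}$ pulls out of the $h$-integral and cancels between numerator and denominator, while Fubini and the change of variable $u=h+k$ rewrite each as $\int_0^{+\infty}du\,(\pi\ast\nu)_u P_{t,r+u}$ with $r\in\{s,t\}$; their quotient is then $P_{t,s}^{\pi\ast\nu}$.

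The only real technical obstacle is justifying the application of Fubini, which requires absolute convergence of the double integrals. This rests on the growth hypothesis $\limsup_{h\to\infty}\exp(\log|\pi_h|/h)\le 1$ (and analogously for $\nu$) together with the decay $P_{t,s}=o(\exp(-s))$ as $s\to+\infty$ from Definition \ref{defi1}(iii). One should also check that $\pi\ast\nu$ inherits the same growth bound so that $(D,P)^{\pi\ast\nu}$ is itself a legitimate gauge transform; this follows from an elementary convolution estimate on exponential growth rates. For generalized intensities such as the Dirac combs used for coupon bonds the entire computation is purely symbolic and reduces to convolution of compactly supported measures, requiring no separate integrability argument.
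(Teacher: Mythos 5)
Your proof is correct. Note that the paper itself states Proposition \ref{conv} without any proof, so there is no argument to compare yours against; what you have written is the natural direct verification that the paper leaves implicit. Your key observation is exactly right: when you unfold $((D,P)^{\pi})^{\nu}$, the factor $\int_0^{+\infty}dk\,\pi_k P_{t,t+k}$ cancels between the prefactor of $D_t^{\pi}$ and the denominator of $P_{t,t+h}^{\pi}$ (and between numerator and denominator in the term-structure quotient), after which Fubini and the substitution $u=h+k$ produce $(\pi\ast\nu)_u$ under a single integral. Your attention to the two points the paper glosses over --- justifying Fubini via the growth bound $\limsup_{h\to\infty}|\pi_h|^{1/h}\le 1$ combined with $P_{ts}=o(\exp(-s))$, and checking that $\pi\ast\nu$ inherits that bound so the composite transform is well defined --- is a genuine improvement on the paper's presentation, as is the remark that for distributional intensities with compact support (the case actually used for the group $G$ later in the paper) the identity reduces to associativity and commutativity of convolution of compactly supported distributions.
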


The convolution of two non-invertible gauge transform is
non-invertible. The convolution of a non-invertible with an
invertible gauge transform is non-invertible.

\begin{defi}
An invertible gauge transform is called \textbf{non-singular}. Two
gauges are said to be in \textbf{same orbit} if and only if there is
a non-singular gauge transform mapping one onto the other. A
singular gauge transform $\pi$ defines a partial ordering $(D,
P)\succ (D^\pi, P^\pi)$ in the set of gauges. $(D, P)$ is said to be
in a \textbf{higher orbit} than $(D^\pi, P^\pi)$.
\end{defi}

It is therefore possible to construct gauges in a lower orbit from
higher orbits, but not the other way around. Orbits represent assets
containing equivalent information. For every orbit it suffices
therefore to specify only one gauge.\\\\

\begin{figure}[h!]
\centering
\includegraphics[width = 12.5cm, angle=-90]{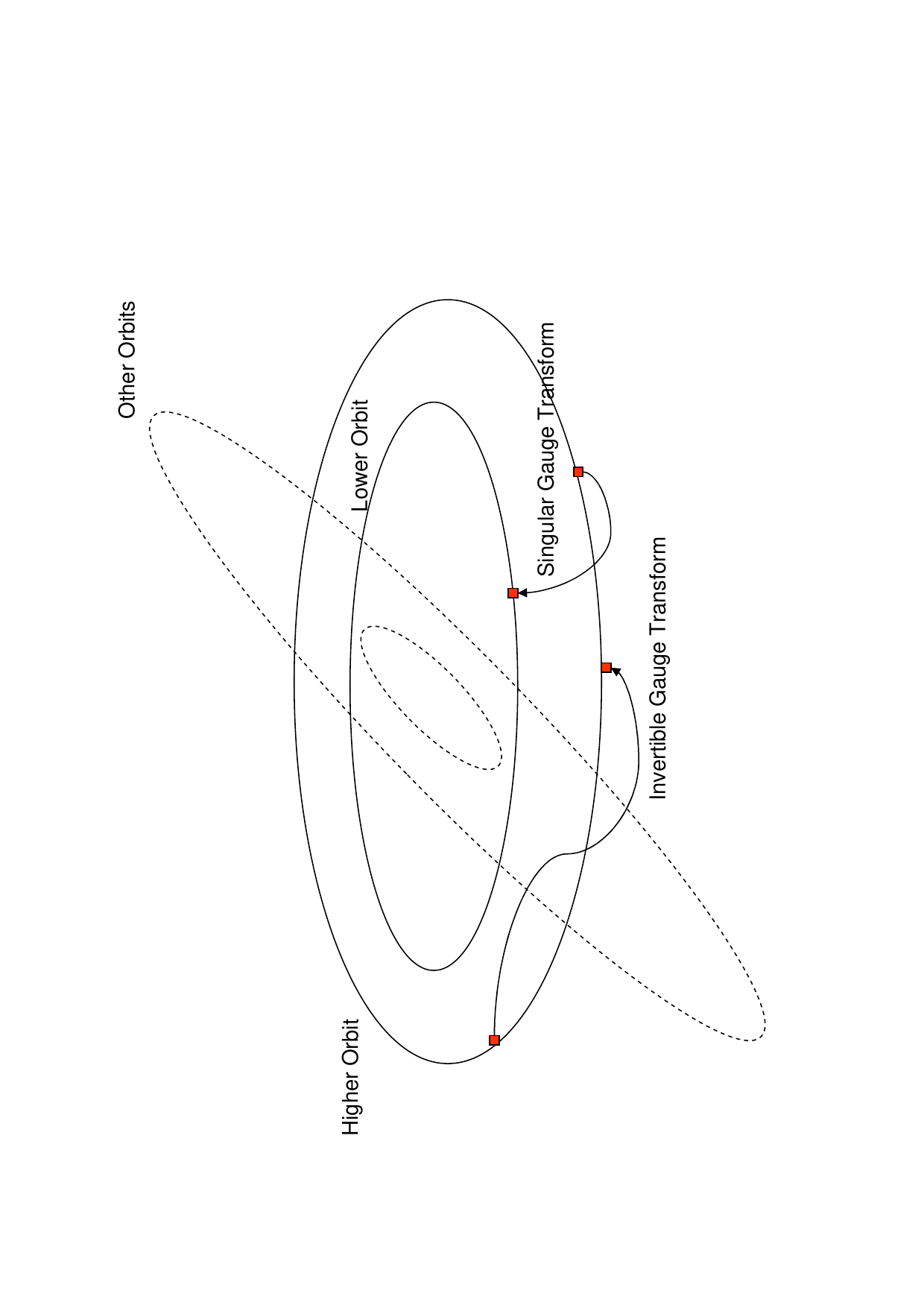}
\caption{Gauge Transforms}\label{MM1}
\end{figure}

\begin{defi}\label{int}
A gauge $(D, P)$ with term structure $P=(P_{t,s})_{t,s}$ satisfies
the \textbf{positive interest condition} if and only if for all
$t$ the function $s\mapsto P_{ts}$ is strictly monotone
decreasing. Such a gauge is said to be \textbf{positive}. A gauge
not satisfying this property is termed \textbf{principal gauge}.
The term structure can be written as a functional of the
\textbf{instantaneous forward rate } f defined as
\begin{equation}
\boxed{
  f_{t,s}:=-\frac{\partial}{\partial s}\log P_{t,s},\quad
  P_{t,s}=\exp\left(-\int_t^sdhf_{t,h}\right). }
\end{equation}
\noindent and
\begin{equation}
\boxed{
 r_t:=\lim_{s\rightarrow t^+}f_{t,s}
 }
\end{equation}
\noindent is termed \textbf{short rate}.
\end{defi}

\begin{remark}
Since $(P_{t,s})_{t,s}$ is a $t$-stochastic process (semimartingale) depending on a parameter $s\ge t$, the $s$-derivative can be defined deterministically, and the expressions above make sense pathwise in a both classical and generalized sense. In a generalized sense we will always have a $\mathcal{D}^{\prime}$ derivative for any $\omega\in \Omega$; this corresponds to a classic $s$-continuous  derivative if $P_{t,s}(\omega)$ is a $C^1$-function of $s$ for any fixed $t\ge0$ and $\omega\in\Omega$.
\end{remark}

We see that the positive interest condition is satisfied if and only
if $f_{t,s}>0$ for all $t,s$, $s\ge t$. The positive interest
condition is associated with the \textit{storage requirement}.
Whenever it is always more valuable to get a piece of a financial
object today than in the future, then it should be modelled with a
gauge satisfying the positive interest condition. Examples are: non
perishable goods, currencies, price indices for equities and real
estates, total return indices. Examples of financial quantities not
satisfying the positive interest condition and thus reflecting items
which are not storable, are: inflation indices, short rates,
dividend indices for equities, rental indices for real estates.

\begin{defi}
The cash flow intensity $[-1]:=\delta^{\prime}$, first derivative
of the Dirac delta generalized function, defines the \textbf{short
rate transform},
\begin{equation}
\begin{split}
D_t^{[-1]}&=D_t\int_0^{+\infty}dh\,\delta^{\prime}_h P_{t, t+h}=D_tr_t\\
P_{t,s}^{[-1]}&=\frac{\int_0^{+\infty}dh\,\delta^{\prime}_h P_{t,
s+h}}{\int_0^{+\infty}dh\,\delta^{\prime}_h P_{t,
t+h}}=\frac{f_{t,s}}{r_t}P_{t,s}
\end{split}
\end{equation}
while the cash flow intensity $[+1]:=\Theta$, Heavyside function,
defines the \textbf{perpetuity transform}
\begin{equation}
\begin{split}
D_t^{[+1]}&=D_t\int_0^{+\infty}dh\,P_{t,t+h}\\
P_{t,s}^{[+1]}&=\frac{\int_0^{+\infty}dh\,P_{t,
s+h}}{\int_0^{+\infty}dh\,P_{t, t+h}}.
\end{split}
\end{equation}
\end{defi}
\begin{figure}[h!]
\centering
\includegraphics[width = 12.5cm, angle=-90]{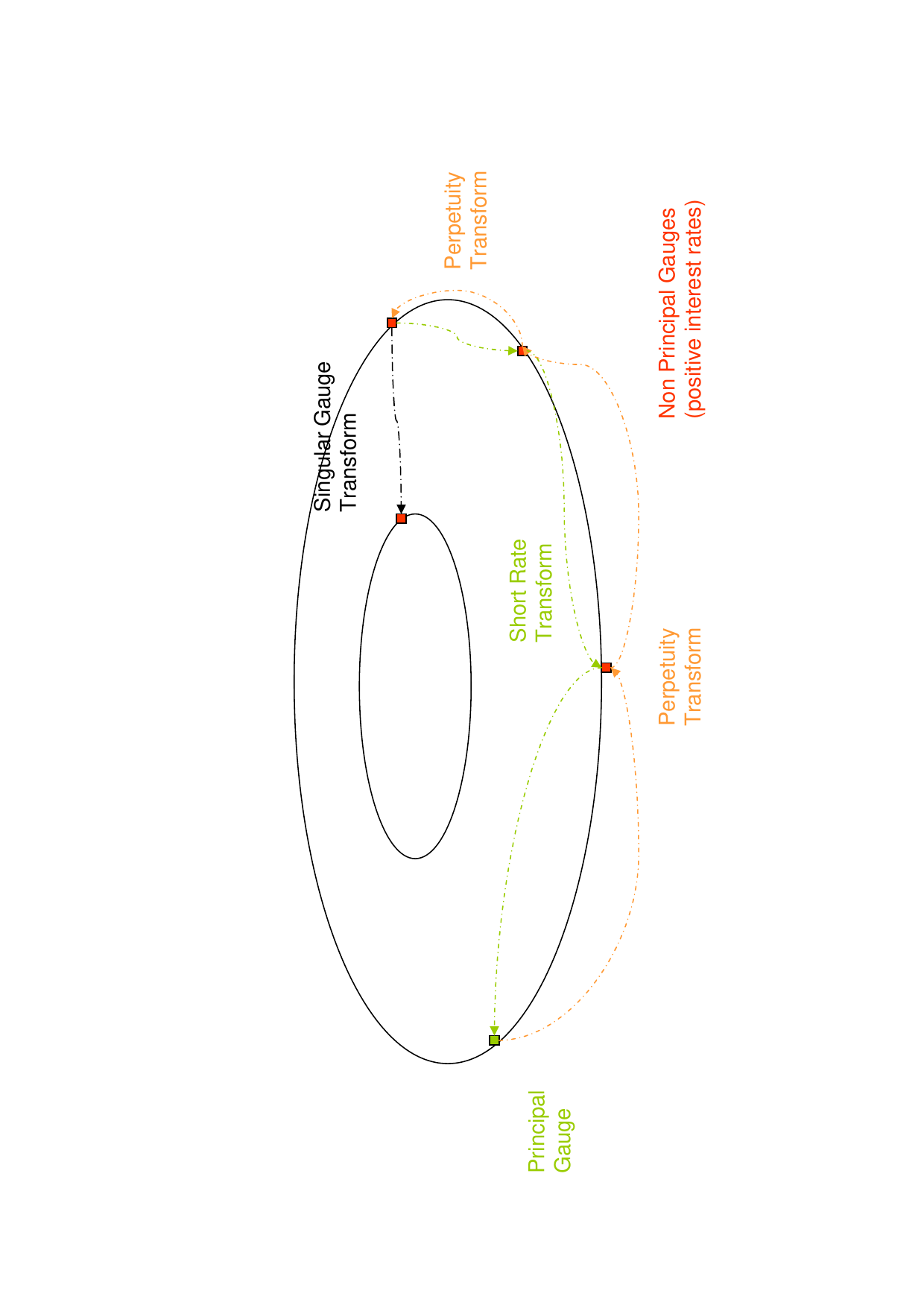}
\caption{Short Rate and Perpetuity Transforms}\label{MM2}
\end{figure}

\begin{notation}
Repeated application of perpetuity and short rate transforms are
given by:
\begin{equation}
\begin{split}
[0]:&=\delta\text{: Dirac delta generalized function}\\
[+1]:&=\Theta\text{: Heavyside function}\\
[+k]_t:&=\frac{t^{k-1}}{(k-1)!}\quad(k\ge2)\\
[-1]_t:&=\delta^{\prime}\text{: first derivative of Dirac delta}\\
[-k]_t:&=\delta^{(k)}\text{: k-th derivative of Dirac
delta,}\,(k\ge2)
\end{split}
\end{equation}
 Thereby, for any integers $m,n$ one
has $[k]\ast[l]=[k+l]$ (cf. \cite{Ho03} Chapter IV).
\end{notation}

The short rate and the perpetuity transform are inverse to another,
as one can see from Proposition \ref{conv} and $[+1]\ast[-1]=[0]$.
The short rate transform can be applied only to a positive gauge
producing a gauge which possibly does not satisfy the positive
interest rate condition. The perpetuity transform is a gauge
transform that can be applied to any gauge producing always a
positive gauge.

\begin{proposition}
A gauge satisfies the positive interest condition if and only if it
can be obtained as the perpetuity transform of some other gauge.
\end{proposition}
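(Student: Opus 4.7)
The plan is to prove the biconditional by two explicit computations: direct differentiation for sufficiency, and an explicit construction of a preimage gauge for necessity. For sufficiency, suppose $(D,P)=(\tilde D,\tilde P)^{[+1]}$ for some gauge $(\tilde D,\tilde P)$. After the substitution $u=s+h$ in the numerator of the definition of $P^{[+1]}_{t,s}$, one has
\[
P_{t,s}=\frac{\int_s^{+\infty}\tilde P_{t,u}\,du}{\int_0^{+\infty}\tilde P_{t,t+h}\,dh}.
\]
The denominator is independent of $s$, so Leibniz's rule gives
\[
\partial_s P_{t,s}=\frac{-\tilde P_{t,s}}{\int_0^{+\infty}\tilde P_{t,t+h}\,dh}<0
\]
by the strict positivity of $\tilde P$. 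Hence $s\mapsto P_{t,s}$ is strictly monotone decreasing, i.e., the positive interest condition holds.

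For necessity, given a positive gauge $(D,P)$, I would define the candidate predecessor
\[
\tilde D_t:=D_t r_t,\qquad \tilde P_{t,s}:=\frac{f_{t,s}}{r_t}P_{t,s}.
\]
Positivity of $\tilde P$ and the normalization $\tilde P_{t,t}=1$ follow immediately from $f_{t,s}>0$ (the positive interest condition) together with $r_t=f_{t,t}>0$. The identity $f_{t,u}P_{t,u}=-\partial_u P_{t,u}$, obtained by differentiating the defining formula $P_{t,u}=\exp\bigl(-\int_t^u f_{t,h}\,dh\bigr)$, then lets me compute
\[
\int_0^{+\infty}\tilde P_{t,s+h}\,dh=\frac{1}{r_t}\int_s^{+\infty}\bigl(-\partial_u P_{t,u}\bigr)\,du=\frac{P_{t,s}}{r_t},
\]
where the boundary term at infinity vanishes by condition (iii) of Definition~\ref{defi1}. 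Specialising $s=t$ yields $\int_0^{+\infty}\tilde P_{t,t+h}\,dh=1/r_t$, so $\tilde D_t\cdot(1/r_t)=D_t$ recovers the deflator, and the ratio in the definition of $P^{[+1]}_{t,s}$ reduces to $P_{t,s}$, recovering the term structure.

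The main obstacle is the decay requirement $\tilde P_{t,s}=o(\exp(-s))$ needed for $(\tilde D,\tilde P)$ to be a bona fide gauge. Since $\tilde P_{t,s}=f_{t,s}P_{t,s}/r_t$ and $P_{t,s}$ already decays faster than $\exp(-s)$, it suffices that $f_{t,s}$ grow at most subexponentially in $s$, a mild tacit regularity hypothesis on the instantaneous forward rate which I would state as a standing assumption on the class of admissible gauges. A secondary technical point is that the renormalisation by $1/r_t$ requires $r_t>0$ almost surely; this I would read off from the strict monotonicity of $s\mapsto P_{t,s}$ at $s=t^+$ combined with continuous differentiability of $P$ in the maturity variable.
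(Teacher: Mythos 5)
Your argument is correct, and it is the natural realization of what the paper only gestures at: the proposition is stated without proof, the intended justification being the surrounding remarks that the perpetuity and short rate transforms are mutually inverse and that the perpetuity transform always yields a positive gauge. Your sufficiency computation ($\partial_s P_{t,s}=-\tilde P_{t,s}/\int_0^{+\infty}\tilde P_{t,t+h}\,dh<0$) and your explicit predecessor for necessity make both remarks precise. Two observations. First, your candidate $\tilde D_t=D_t r_t$ differs in sign from the paper's short rate transform $D^{[-1]}_t=-D_tr_t$; yours is the sign actually required for the perpetuity transform to invert the construction. Indeed, with the paper's convention $[-1]=-\delta'$ one computes $[+1]\ast[-1]=-\Theta'=-\delta\neq[0]$, so the paper's claim that the two transforms are inverse carries a sign slip which your write-up silently corrects --- worth flagging rather than hiding. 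Second, your caveat on condition (iii) is well taken and is a genuine gap in the paper's tacit argument: $P_{t,s}=o(e^{-s})$ does not by itself imply $\tilde P_{t,s}=f_{t,s}P_{t,s}/r_t=-\partial_sP_{t,s}/r_t=o(e^{-s})$ (a monotone function can decay fast while its derivative spikes), so the ``only if'' direction does require the subexponential growth hypothesis on $f_{t,\cdot}$ that you state; similarly, strict monotonicity alone gives $f_{t,s}\ge 0$ rather than $f_{t,s}>0$, so your appeal to the paper's identification of positivity of the gauge with $f_{ts}>0$ is the right way to secure $r_t>0$.
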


\noindent The positive interest condition is difficult to satisfy
for a stochastic model of a gauge.

\begin{example}{Fixed Income, Equity and Real Estate Gauges}\\
\begin{figure}[h!]
\centering
\includegraphics[height = 12.5cm, angle=-90]{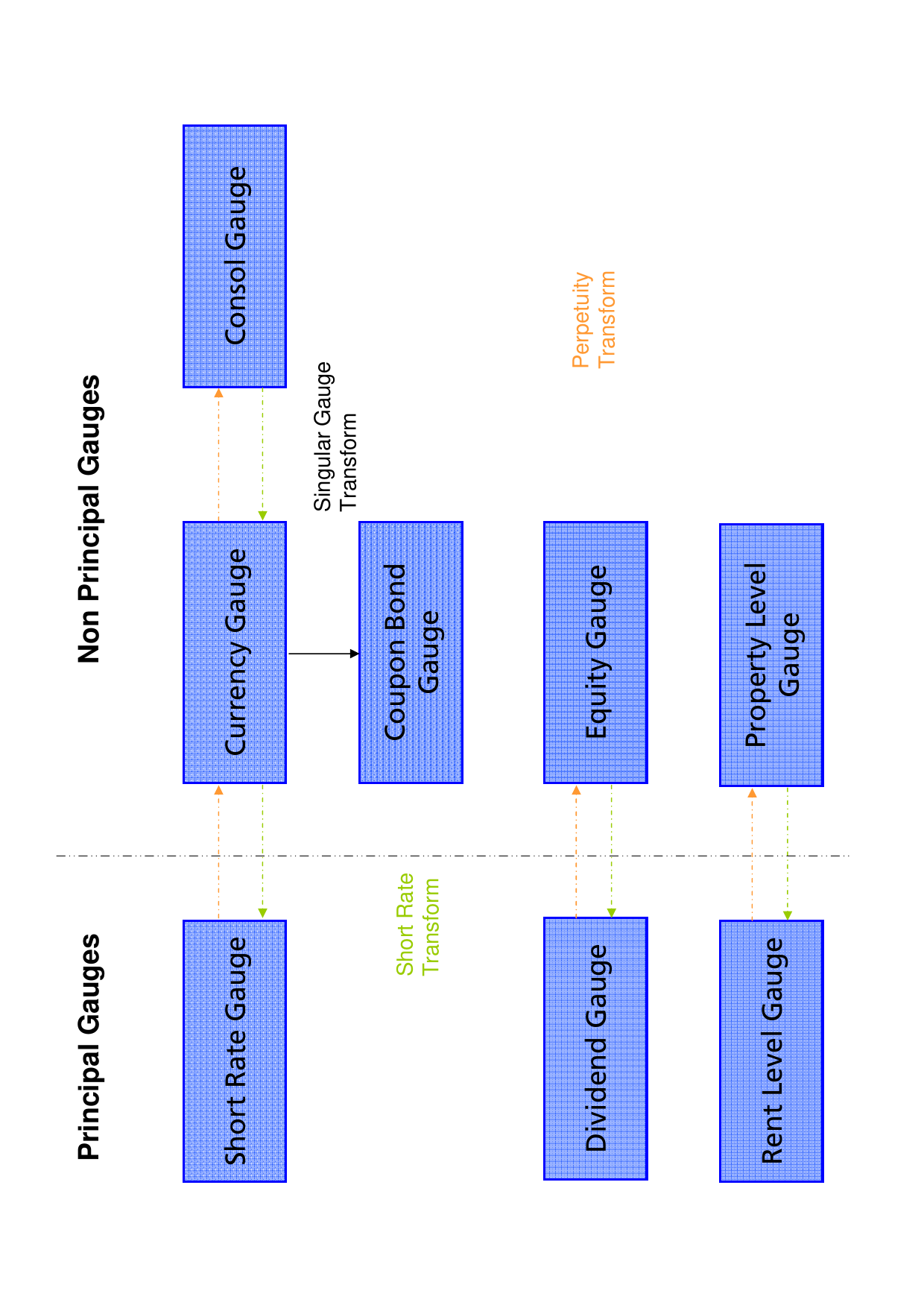}
\caption{Gauges}\label{FIERE}
\end{figure}
\end{example}

\begin{remark} The special choice of vanishing interest rate $r\equiv0$ or flat term structure
$P\equiv1$ for all assets corresponds to the classical model,
where only asset prices and their dynamics are relevant. We will
analyze this case in detail in the guiding example presented in
section \ref{GuidEx}.
\end{remark}

\section{Arbitrage Theory in a Differential Geometric
Framework}\label{foundations} Now we are in the position to
rephrase the asset model presented in subsection
\ref{StochasticPrelude} in terms of a natural geometric language.
That for, we will unify Smith's and Ilinski's ideas to model a
simple market of $N$ base assets. In Smith and Speed
(\cite{SmSp98}) there is no explicit differential geometric
modelling but the use of an allusive terminology (e.g. gauges,
gauge transforms). In Ilinski (\cite{Il01}) there is a
construction of a principal fibre bundle allowing to express
arbitrage in terms of curvature. Our construction of the principal
fibre bundle will differ from Ilinski's one in the choice of the
group action and the bundle covering the base space. Our choice
encodes Smith's intuition in differential geometric language. \par
In this paper we explicitly model no derivatives of the base
assets, that is, if derivative products have to be considered,
then they have to be added to the set of base assets. The
treatment of derivatives of base assets is tackled in
(\cite{FaVa12}).  Given $N$ base assets we want to construct a
portfolio theory and study arbitrage. Since arbitrage is
explicitly allowed, we cannot a priori assume the existence of a
risk neutral measure or of a state price deflator. In terms of
differential geometry, we will adopt the mathematician's and not
the physicist's approach. The market model is seen as a principal
fibre bundle of the (deflator, term structure) pairs, discounting
and foreign exchange as a parallel transport, num\'{e}raire as
global section of the gauge bundle, arbitrage as curvature. The
Ambrose-Singer Theorem allows to parameterize arbitrage strategies
as element of the Lie algebra of the holonomy group. The
no-free-lunch-with-vanishing-risk condition is proved to be
equivalent to a zero curvature condition or to a continuity
equation allowing for an hydrodynamics study of arbitrage flows.

\subsection{Market Model as Principal Fibre Bundle} As a concise
general reference for principle fibre bundles we refer to
Bleecker's book (\cite{Bl81}). More extensive treatments can be
found in Dubrovin, Fomenko and Novikov (\cite{DuFoNo84}), and in
the classical Kobayashi and Nomizu (\cite{KoNo96}). Let us
consider -in continuous time- a market with $N$ assets and a
num\'{e}raire. A general portfolio at time $t$ is described by the
vector of nominals $x\in X$, for an open set
$X\subset\mathbf{R}^N$. Following Definition \ref{defi1}, the asset model induces for $j=1,\dots,N$ the gauge
\begin{equation}(D^j,P^j)=((D_t^j)_{t\in[0, +\infty[},(P_{t,s}^j)_{s\ge t}),\end{equation}
\noindent where $D^j$ denotes the deflator and $P^j$ the term
structure. This can be written as
\begin{equation}P_{t,s}^j=\exp\left(-\int_t^sf^j_{t,u}du\right),\end{equation}
where $f^j$ is the instantaneous forward rate process for the $j$-th asset and the corresponding short rate is given by $r_t^j:=\lim_{u\rightarrow 0^+}f^j_{t,u}$. For a
portfolio with nominals $x\in X\subset\mathbf{R}^N$ we define
\begin{equation}
\boxed{ D_t^x:=\sum_{j=1}^Nx_jD_t^j\quad
f_{t,u}^x:=\sum_{j=1}^N\frac{x_jD_t^j}{\sum_{j=1}^Nx_jD_t^j}f_{t,u}^j\quad
P_{t,s}^x:=\exp\left(-\int_t^sf^x_{t,u}du\right).}
\end{equation}
The short rate writes
\begin{equation}
r_t^x:=\lim_{u\rightarrow t^+}f^x_{t,u}=\sum_{j=1}^N\frac{x_jD_t^j}{\sum_{j=1}^Nx_jD_t^j}r_t^j.
\end{equation}

\noindent The image space of all possible strategies reads
\begin{equation}\boxed{M:=\{(x,t)\in X\times[0, +\infty[\}.}\end{equation}
In subsection \ref{trans} cashflow intensities and the corresponding
gauge transforms were introduced. They have the structure of an
Abelian semigroup
\begin{equation}
\boxed{
 G:=\mathcal{E}^{\prime}([0,
+\infty[,\mathbf{R})=\{F\in\mathcal{D}^{\prime}([0,+\infty[)\mid
\text{supp}(F)\subset[0, +\infty[\text{ is compact}\},
}
\end{equation}
where the semigroup operation on distributions with compact support
is the convolution (see \cite{Ho03}, Chapter IV), which extends the
convolution of regular functions as defined by formula
(\ref{convdef}).
\begin{defi}
The \textbf{Market Fibre Bundle} is defined as the fibre bundle of
gauges
\begin{equation}
\boxed{ \mathcal{B}:=\{({{D^\pi}_t}^x,{{P^\pi}_{t,\,\cdot}}^x)|(x,t)\in
M, \pi\in G^*\}. }
\end{equation}
\end{defi}\noindent
The cashflow intensities defining invertible transforms constitute
an Abelian group
\begin{equation}
\boxed{ G^*:=\{\pi\in G|\text{ it exists } \nu\in G\text{ such that
}\pi\ast\nu=[0]\}\subset \mathcal{E}^{\prime}([0,
+\infty[,\mathbf{R}).}\end{equation} From Proposition \ref{conv} we
obtain
\begin{theorem} The market fibre bundle $\mathcal{B}$ has the
structure of a $G^*$-principal fibre bundle  given by the action
\begin{equation}
\begin{split}
\mathcal{B}\times G^* &\longrightarrow\mathcal{B}\\
 ((D,P), \pi)&\mapsto (D,P)^{\pi}=(D^{\pi},P^{\pi}).
\end{split}
\end{equation}
\noindent The group $G^*$ acts freely and differentiably on
$\mathcal{B}$ to the right.
\end{theorem}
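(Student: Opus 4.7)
The plan is to verify the four defining properties of a principal $G^*$-bundle: (i) the map is a well-defined right action on $\mathcal{B}$, (ii) the action is free, (iii) the action is differentiable, and (iv) the orbits coincide with the fibres of a natural projection $\mathcal{B}\to M$ admitting (local) trivializations.

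First I would check the group-action axioms. By the very definition of $\mathcal{B}$, any gauge of the form $(D^{\pi\ast\nu}, P^{\pi\ast\nu})$ with $\pi,\nu\in G^*$ remains in $\mathcal{B}$ (since $G^*$ is closed under convolution), so the map indeed takes values in $\mathcal{B}$. Associativity $((D,P)^\pi)^\nu=(D,P)^{\pi\ast\nu}$ is a direct consequence of Proposition \ref{conv}, while the neutral-element property is verified by a one-line computation: setting $\pi=[0]=\delta$ yields $D^{[0]}_t = D_t\int_0^{+\infty}\delta_h P_{t,t+h}\,dh = D_t\,P_{tt} = D_t$ and $P^{[0]}_{t,s}=P_{t,s}$ using property $(ii)$ of Definition \ref{defi1}. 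Since $G^*$ is Abelian, there is no distinction between left and right action; we declare it right for notational convenience.

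Next comes freeness. Assume $(D,P)^\pi=(D,P)$ for some $(D,P)\in\mathcal{B}$ and $\pi\in G^*$. The equality of deflators forces $\int_0^{+\infty}\pi_h P_{t,t+h}\,dh=1$ almost surely for every $t$, and the equality of term structures then gives $\int_0^{+\infty}\pi_h P_{t,s+h}\,dh=P_{t,s}$ almost surely for all $s\ge t$. Viewing this pathwise as a Laplace-type integral equation in the variable $s$ for the compactly supported distribution $\pi$, one invokes injectivity of the corresponding integral transform on $\mathcal{E}^{\prime}([0,+\infty[)$ to conclude $\pi=[0]$. Differentiability is then automatic: $G\subset\mathcal{E}^{\prime}([0,+\infty[,\mathbf{R})$ carries a natural Fr\'{e}chet differentiable structure (with $G^*$ open in $G$), and the pairings defining $D^\pi$ and $P^\pi$ are bilinear and continuous in $(\pi,(D,P))$. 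The orbits of the action are, by construction, exactly the fibres of the projection $\mathcal{B}\to M$, $((D^\pi)^x,(P^\pi)^x)\mapsto(x,t)$, and once freeness is in hand the map $(x,t,\pi)\mapsto((D^\pi)^x_t,(P^\pi)^x_{t\cdot})$ becomes an equivariant diffeomorphism $M\times G^*\to\mathcal{B}$, furnishing a (globally trivial) bundle chart.

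The main obstacle I anticipate is the freeness step: extracting $\pi=[0]$ from an integral identity that has only to hold pathwise for a single gauge requires a mild non-degeneracy assumption on $P$ together with a careful injectivity argument for a Laplace-type transform acting on compactly supported distributions. A secondary technical point is to make precise the infinite-dimensional differentiable structure on $G^*$ so that \emph{differentiable action} has an unambiguous meaning; once one works in the category of Fr\'{e}chet (or convenient) manifolds, the rest of the argument reduces to the elementary computations sketched above.
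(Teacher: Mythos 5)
Your proposal is correct in outline and does considerably more than the paper, whose entire proof of this theorem is the single phrase ``From proposition \ref{conv} we obtain\dots'': the paper takes associativity of the action from Proposition \ref{conv} and leaves the identity axiom, freeness, differentiability and local triviality entirely unargued. Your verification of the action axioms (associativity via Proposition \ref{conv}, the neutral element via $[0]=\delta$ and $P_{tt}=1$) coincides with what the paper implicitly relies on, and your identification of the orbits with the fibres of $\mathcal{B}\to M$ and of the global trivialization is exactly the structure the paper asserts without comment. So on the portion the paper actually addresses, you and the paper agree.

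The one place where your argument has a real soft spot is the freeness step, and you are right to flag it as the main obstacle --- but the proposed repair (injectivity of a Laplace-type transform on $\mathcal{E}'([0,+\infty[)$, applied pathwise to a single gauge) does not work as stated. For a flat exponential term structure $P_{t,s}=e^{-(s-t)}$ the stabilizer condition reduces to the single scalar equation $\int_0^{+\infty}\pi_h e^{-h}\,dh=1$, which is satisfied by a large family of compactly supported distributions, so no non-degeneracy assumption on a \emph{fixed} gauge can rescue injectivity for arbitrary $\pi\in\mathcal{E}'$. The clean resolution is to look more closely at what $G^*$ actually is: by the Titchmarsh--Lions support theorem, the convex hull of the support of a convolution is the sum of the convex hulls of the supports, so $\pi\ast\nu=[0]$ with both supports in $[0,+\infty[$ forces $\mathrm{supp}(\pi)=\mathrm{supp}(\nu)=\{0\}$, and comparing the polynomial symbols of distributions supported at the origin forces $\pi=c\delta$ with $c\neq 0$. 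Hence $G^*\cong\mathbf{R}^*$, the action on deflators is just $D\mapsto cD$ with $P$ unchanged, and freeness is immediate from $D\not\equiv 0$; the differentiability question also becomes finite-dimensional and trivial. This observation both closes your gap and shows that the principal bundle in the theorem is structurally much smaller than the semigroup $G$ of all gauge transforms suggests (note in particular that the perpetuity transform $[+1]=\Theta$ invoked elsewhere in the paper is not even an element of $G$, since $\Theta$ does not have compact support).
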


\subsection{Num\'{e}raire as Global Section of the Bundle of
Gauges} If we want to make an arbitrary  portfolio of the given
assets specified by the nominal vector $x^{\text{Num}}$ to our
num\'{e}raire, we have to renormalize all deflators by an
appropriate gauge transform $\pi^{\text{Num}, x}$ so that:
\begin{itemize}
\item The portfolio value is constantly over time normalized to one:
  \begin{equation}D_t^{x^{\text{Num}},\pi^{\text{Num}}}\equiv 1.\end{equation}
\item All other assets' and portfolios' are expressed in terms of
the num\'{e}raire:
\begin{equation}D_t^{x,\pi^{\text{Num}}}=\text{FX}_t^{x\rightarrow x^\text{Num}}:=\frac{D^x_t}{D^{x^\text{Num}}_t}.\end{equation}
\end{itemize}
It is easily seen that the appropriate choice for the gauge
transform $\pi^{\text{Num}}$ making the portfolio $x^{\text{Num}}$
to the num\'{e}raire is given by the global section of the bundle
of gauges defined by
\begin{equation}\boxed{\pi^{\text{Num},x}_t:=\text{FX}_t^{x\rightarrow x^{\text{Num}}}.}\end{equation}
\noindent Of course such a gauge transform is well defined if and
only if the num\'{e}raire deflator is a positive semimartingale.

\subsection{Cashflows as Sections of the Associated Vector
Bundle} By choosing the fiber $V:=\mathbf{R}^{[0, +\infty[}$
and the representation $\rho:G\rightarrow \text{GL}(V)$ induced by
the gauge transform definition, and therefore satisfying the
homomorphism relation $\rho(g_1\ast g_2)=\rho(g_1)\rho(g_2)$, we
obtain the associated vector bundle $\mathcal{V}$. Its sections
represents cashflow streams - expressed in terms of the deflators
- generated by portfolios of the base assets. If
$v=(v^x_t)_{(x,t)\in M}$ is the \textit{deterministic} cashflow
stream, then its value at time $t$ is equal to
\begin{itemize}
\item the deterministic quantity $v_t^x$, if the value is measured in terms of
the deflator $D_t^x$,
\item the stochastic quantity $v^x_tD^x_t$, if the value is measured in terms of the
num\'{e}raire (e.g. the cash account for the choice
$D_t^j:=\hat{S}_t^j$ for all $j=1,\dots,N$).
\end{itemize}
 \noindent In the general theory of
principal fibre bundles, gauge transforms are bundle automorphisms
preserving the group action and equal to the identity on the base
space. Gauge transforms of $\mathcal{B}$ are naturally isomorphic to
the sections of the bundle $\mathcal{B}$ (See Theorem 3.2.2 in
\cite{Bl81}). Since $G^*$ is Abelian, right multiplications are
gauge transforms. Hence, there is a bijective correspondence between
gauge transforms and cashflow intensities admitting an inverse. This
justifies the terminology introduced in Definition
\ref{gaugeTransforms2}.

\subsection{Derivatives of Stochastic
Processes}\label{Derivatives} One of the main contribution of this
paper is to reformulate stochastic finance in a natural geometric
language. In stochastic differential geometry one would like to lift
the constructions of stochastic analysis from open subsets of
$\mathbf{R}^N$ to  $N$ dimensional differentiable manifolds. To that
aim, chart invariant definitions are needed and hence a stochastic
calculus satisfying the usual chain rule and not It\^{o}'s Lemma is required.
(cf. \cite{HaTh94}, Chapter 7, and the remark in Chapter 4 at the
beginning of page 200). That is why we will be mainly concerned in
the following by stochastic integrals and derivatives meant in the
sense of \textit{Stratonovich} and not of \textit{It\^{o}}. Following \cite{Gl11} and \cite{CrDa07} we introduce the following
\begin{defi}\label{Nelson}
Let $I$ be a real interval and $Q=(Q_t)_{t\in I}$ be a  vector valued stochastic process on the probability space
$(\Omega, \mathcal{A}, P)$. The process $Q$ determines three families of $\sigma$-subalgebras of the $\sigma$-algebra $\mathcal{A}$:
\begin{itemize}
\item[(i)] ''Past'' $\mathcal{P}_t$, generated by the preimages of Borel sets in $\mathbf{R}^N$  by all mappings $Q_s:\Omega\rightarrow\mathbf{R}^N$ for $0<s<t$.
\item[(ii)] ''Future'' $\mathcal{F}_t$, generated by the preimages of Borel sets in $\mathbf{R}^N$  by all mappings $Q_s:\Omega\rightarrow\mathbf{R}^N$ for $0<t<s$.
\item[(iii)] ''Present'' $\mathcal{N}_t$, generated by the preimages of Borel sets in $\mathbf{R}^N$  by the mapping $Q_t:\Omega\rightarrow\mathbf{R}^N$.
\end{itemize}
Let $Q=(Q_t)_{t\in I}$ be a $\mathcal{S}^0(I)$-process, i.e. a process with continuous sample paths and adapted to $\mathcal{P}$ and $\mathcal{F}$, so that $t\mapsto Q_t$ is a continuous mapping continuous mappings from $I$ to $L^2(\Omega, \mathcal{A})$.
 Assuming that the following limits exist,
\textbf{Nelson's stochastic derivatives} are defined as
\begin{equation}
\boxed{
\begin{split}
&\mathfrak{D}Q_t:=\lim_{h\rightarrow
0^+}\mathbb{E}\Br{\left.\frac{Q_{t+h}-Q_t}{h}\right|
\mathcal{P}_t}\text{: forward derivative,}\\
& \mathfrak{D}_*Q_t:=\lim_{h\rightarrow
0^+}\mathbb{E}\Br{\left.\frac{Q_{t}-Q_{t-h}}{h}\right|
\mathcal{F}_{t}}\text{: backward derivative,}\\
&\mathcal{D}Q_t:=\frac{\mathfrak{D}Q_t+\mathfrak{D}_*Q_t}{2}\text{: mean derivative}.
\end{split}
}
\end{equation}
Let $\mathcal{S}^1(I)$ the set of all $\mathcal{S}^0(I)$-processes $Q$ such that
 $t\mapsto \mathfrak{D}Q_t$ and $t\mapsto \mathfrak{D}_*Q_t$ are continuous
mappings from $I$ to $L^2(\Omega, \mathcal{A})$. Let
$\mathcal{C}^1(I)$ the completion of $\mathcal{S}^1(I)$ with respect
to the norm
\begin{equation}\|Q\|:=\sup_{t\in I}\br{\|Q_t\|_{L^2(\Omega, \mathcal{A})}+\|\mathfrak{D}Q_t\|_{L^2(\Omega, \mathcal{A})}+\|\mathfrak{D}_*Q_t\|_{L^2(\Omega, \mathcal{A})}}.\end{equation}

\end{defi}

\begin{remark}
The stochastic derivatives $\mathfrak{D}$, $\mathfrak{D}_*$ and  $\mathcal{D}$
correspond to It\^{o}'s, to the anticipative and, respectively,  to Stratonovich's integral (cf. \cite{Gl11}). The process space $\mathcal{C}^1(I)$ contains all It\^{o} processes. If $Q$ is a Markov process, then the sigma algebras $\mathcal{P}_t$ (''past'') and $\mathcal{F}_t$ (''future'') in the definitions of forward and backward derivatives can be substituted by the sigma algebra $\mathcal{N}_t$ (''present''), see Chapter 6.1 and 8.1 in (\cite{Gl11}).
\end{remark}

\noindent Stochastic derivatives can be defined pointwise in $\omega\in\Omega$ outside the class $\mathcal{C}^1$ in terms of generalized functions.
\begin{defi}
Let $Q:I\times\Omega\rightarrow\mathbb{R}^N$ be a continuous linear functional in the test processes $\varphi:I\times\Omega\rightarrow\mathbb{R}^N$ for $\varphi(\cdot,\omega)\in C^{\infty}_c(I,\mathbb{R}^N)$. We mean by this that for a fixed $\omega\in\Omega$ the functional $Q(\cdot,\omega)\in\mathcal{D}(I,\mathbb{R}^N)$, the topological vector space of continuous distributions. We can then define
\textbf{Nelson's generalized stochastic derivatives:}
\begin{equation}
\boxed{
\begin{split}
&\mathfrak{D}Q(\varphi_t):=-Q(\mathfrak{D}\varphi_t)\text{: forward generalized derivative,}\\
& \mathfrak{D}_*Q(\varphi_t):=-Q(\mathfrak{D}_*\varphi_t)\text{: backward generalized derivative,}\\
&\mathcal{D}(\varphi_t):=-Q(\mathcal{D}\varphi_t)\text{: mean generalized derivative}.
\end{split}
}
\end{equation}
\end{defi}
\noindent If the generalized derivative is regular, then the process has a derivative in the classic sense. This construction is nothing else than a straightforward pathwise lift of the theory of generalized functions to a wider class stochastic processes which do not a priori allow for Nelson's derivatives in the strong sense. We will utilize this feature in the treatment of credit risk, where many processes with jumps occur.

\subsection{Stochastic Parallel Transport and Holonomy} Let us
consider the projection of $\mathcal{B}$ onto $M$
\begin{equation}
\begin{split}
p:\mathcal{B}\cong M\times G^*&\longrightarrow M\\
 (x,t,g)&\mapsto (x,t)
\end{split}
\end{equation}
and its tangential map
\begin{equation}
T_{(x,t,g)}p:\underbrace{T_{(x,t,g)}\mathcal{B}}_{\cong\mathbf{R}^N\times\mathbf{R}\times
\mathbf{R}^{[0, +\infty[}}\longrightarrow
\underbrace{T_{(x,t)}M}_{\cong\mathbf{R}^N\times\mathbf{R}.}
\end{equation}
The vertical directions are
\begin{equation}\mathcal{V}_{(x,t,g)}\mathcal{B}:=\ker\left(T_{(x,t,g)}p\right)\cong\mathbf{R}^{[0, +\infty[},\end{equation}
and the horizontal ones are
\begin{equation}\mathcal{H}_{(x,t,g)}\mathcal{B}\cong\mathbf{R}^{N+1}.\end{equation}
A connection on $\mathcal{B}$ is a projection
$T\mathcal{B}\rightarrow\mathcal{V}\mathcal{B}$. More precisely, the
vertical projection must have the form
\begin{equation}
\begin{split}
\Pi^v_{(x,t,g)}:T_{(x,t,g)}\mathcal{B}&\longrightarrow\mathcal{V}_{(x,t,g)}\mathcal{B}\\
(\delta x,\delta t, \delta g)&\mapsto (0,0,\delta g +
\Gamma(x,t,g).(\delta x, \delta t)),
\end{split}
\end{equation}
and the horizontal one must read
\begin{equation}
\begin{split}
\Pi^h_{(x,t,g)}:T_{(x,t,g)}\mathcal{B}&\longrightarrow\mathcal{H}_{(x,t,g)}\mathcal{B}\\
(\delta x,\delta t, \delta g)&\mapsto (\delta x,\delta
t,-\Gamma(x,t,g).(\delta x, \delta t)),
\end{split}
\end{equation}
such that
\begin{equation}\Pi^v+\Pi^h=\mathbf{1}_{\mathcal{B}}.\end{equation}
Stochastic parallel transport on a principal fibre bundle along a
semimartingale is a well defined construction (cf. \cite{HaTh94},
Chapter 7.4 and \cite{Hs02} Chapter 2.3 for the frame bundle case)
in terms of Stratonovich's integral. Existence and uniqueness can be
proved analogously to the deterministic case by formally
substituting the deterministic time derivative $\frac{d}{dt}$ with
the stochastic one $\mathcal{D}$ corresponding to Stratonovich's
integral.\par
 Following Ilinski's idea (\cite{Il01}), we motivate the choice
of a particular connection by the fact that it allows to encode
foreign exchange and discounting as parallel transport.
\begin{theorem}With the choice of connection
\begin{equation}\label{connection}\boxed{\Gamma(x,t,g).(\delta x, \delta t):=g\left(\frac{D_t^{\delta x}}{D_t^x}-r_t^x\delta t\right),}\end{equation}
the parallel transport in $\mathcal{B}$ has the following financial
interpretations:
\begin{itemize}
\item Parallel transport along the nominal directions ($x$-lines)
corresponds to a multiplication by an exchange rate.
\item Parallel transport along the time direction ($t$-line)
corresponds to a division by a stochastic discount factor.
\end{itemize}
\end{theorem}

Recall that time derivatives needed to define the parallel transport
along the time lines have to be understood in Stratonovich's sense.
We see that the bundle is trivial, because it has a global
trivialization, but the connection is not trivial.

\begin{proof}
Let us consider a curve $\gamma(\tau)=(x(\tau),t(\tau))$ in $M$ for
$\tau\in[\tau_1,\tau_2]$ and an element of the fiber over the
starting point $g_1\in p^{-1}(\gamma(\tau_1))\cong G$. The parallel
transport of $g_1$ along $\gamma$ is the solution $g=g(\tau)$ of the
first order differential equation
\begin{equation}
\begin{cases}
\Pi^v_{(x(\tau),t(\tau),g(\tau))}(\mathcal{D}x(\tau),\mathcal{D}t(\tau),\mathcal{D}g(\tau))=0\\
g(\tau_1)=g_1,
\end{cases}
\end{equation}
\noindent which in our case writes

\begin{equation}
\begin{cases}
\mathcal{D}g(\tau) = -g(\tau)\left(\frac{D_{t(\tau)}^{\mathcal{D}x(\tau)}}{D_{t(\tau)}^{x(\tau)}}-r_{t(\tau)}^{x(\tau)}\mathcal{D}t(\tau)\right) \\
g(\tau_1)=g_1,
\end{cases}
\end{equation}
Recall that the time derivative $\mathcal{D}$ is Nelson's derivative
corresponding to Stratonovich's integral, see subsection
\ref{Derivatives}. Now, if $\gamma$ is a nominal direction, then
$t(\tau)\equiv t$ and $\mathcal{D}t(\tau)\equiv 0$. Thus
\begin{equation}
\begin{cases}
\mathcal{D}g(\tau) = -g(\tau)\frac{\sum_{j=1}^N\mathcal{D}x_j(\tau)D_{t}^j}{\sum_{j=1}^Nx_j(\tau)D_{t}^j} \\
g(\tau_1)=g_1,
\end{cases}
\end{equation}
which means
\begin{equation}g(\tau)=g_1\frac{\sum_{j=1}^Nx_j(\tau_1)D_{t}^j}{\sum_{j=1}^Nx_j(\tau)D_{t}^j}\end{equation}
corresponding to a multiplication by an exchange rate at time $t$
from portfolio $x(\tau_1)$ to portfolio $x(\tau)$.\\
If $\gamma$ is the time direction, then $x(\tau)\equiv x$, $t(\tau)=
\tau$  and $\mathcal{D}x(\tau)\equiv 0$, $\mathcal{D}t(\tau)\equiv
1$ . Thus
\begin{equation}
\begin{cases}
\mathcal{D}g(\tau) = g(\tau)r_{\tau}^{x} \\
g(\tau_1)=g_1,
\end{cases}
\end{equation}
which means
\begin{equation}g(\tau)=g_1\exp\left(\int_{\tau_1}^{\tau}r^x_udu\right)\end{equation}
corresponding to a division by the  stochastic discount rate for
portfolio $x$
 from time $\tau_1$ to time $\tau$.
\end{proof}

\begin{remark}
Malaney and Weinstein (\cite{Ma96}) already introduced a connection
in the deterministic case in the context of self-financing basket of
goods for divisa indices. Recently, \cite{FaVa12} have elaborated a
stochastic version of the Malaney-Weinstein connection proving its
equivalence with the connection defined in (\ref{connection}).
\end{remark}

Holonomy is the group generated by the parallel transport along
closed curves. We distinguish the local from the global case.
\begin{definition}
The \textbf{holonomy group} based at $b\in \mathcal{B}$ is defined
as
\begin{equation}
\boxed{ \Hol_{b}(\chi):=\{g\in G\mid\,b \text{ and } b.g \text{
can be joined by an horizontal curve in }\mathcal{B} \}.}
\end{equation}
The \textbf{local holonomy group} based at $b\in \mathcal{B}$ is
defined as
\begin{equation}
\boxed{
\begin{split}
\Hol_{b}^0(\chi):=&\left\{g\in G\mid\,b \text{ and } b.g \text{ can be joined by a contractible horizontal}\right.\\
&\left.\text{ curve in }\mathcal{B} \right\}.
\end{split}
}
\end{equation}
\end{definition}
If $M$ and $\mathcal{B}$ are connected, then holonomy and local
holonomy depend on the base point $b$ only up to conjugation. In
this paper we will always assume connectivity for both $M$ and
$\mathcal{B}$ and therefore drop the reference to the basis point
$b$, with the understanding that the definition is good up to
conjugation.

\subsection{Nelson $\mathcal{D}$ Differentiable Market Model} We continue to
reformulate the classic asset model introduced in subsection
\ref{StochasticPrelude} in terms of stochastic differential
geometry.
\begin{definition}\label{weakMM}
 A \textbf{Nelson $\mathcal{D}$ weak differentiable market model} for
$N$ assets is described by $N$ gauges which are Nelson $\mathcal{D}$
weak differentiable with respect to the time variable. More exactly, for
all $t\in[0,+\infty[$ and $s\ge t$ there is an open time interval
$I\ni t$ such that for the deflators
$D_t:=[D_t^1,\dots,D_t^N]^{\dagger}$ and the term structures
$P_{t,s}:=[P_{t,s}^1,\dots,P_{t,s}^N]^{\dagger}$, the latter seen as
processes in $t$ and parameter $s$, there exist a $\mathcal{D}$ weak
$t$-derivative. The short rates are defined by
$r_t:=\lim_{s\rightarrow t^{+}}-\frac{\partial}{\partial s}\log
P_{ts}$.\par A strategy is a curve $\gamma:I\rightarrow X$ in the
portfolio space parameterized by the time. This means that the
allocation at time $t$ is given by the vector of nominals
$x_t:=\gamma(t)$. We denote by $\bar{\gamma}$ the lift of $\gamma$
to $M$, that is $\bar{\gamma}(t):=(\gamma(t),t)$. A strategy is said
to be \textbf{closed} if it represented by a closed curve.   A
\textbf{weak $\mathcal{D}$-admissible strategy} is predictable and
$\mathcal{D}$- weak differentiable.
\end{definition}
\noindent In general the allocation can depend on the state of the
nature i.e. $x_t=x_t(\omega)$ for $\omega\in\Omega$. Unless
otherwise specified strategies will always be weak
$\mathcal{D}$-admissible for an appropriate time interval.

\begin{proposition}
A weak $\mathcal{D}$-admissible strategy is
self-financing if and only if
\begin{equation}\label{sf}\mathcal{D}(x_t\cdot D_t)=x_t\cdot
\mathcal{D}D_t-\frac{1}{2}\mathfrak{D}_*\left<x,D\right>_t\quad\text{or}\qquad
\mathcal{D}x_t\cdot D_t=-\frac{1}{2}\mathfrak{D}_*\left<x,D\right>_t,\end{equation} almost
surely.
\end{proposition}
\begin{proof}
The strategy is self-financing if and only if
\begin{equation}
x_t\cdot D_t=x_0\cdot D_0+\int_0^tx_u\cdot dD_u,
\end{equation}
which is, symbolizing $d$ It\^o's ''differential'', equivalent to
\begin{equation}\label{D}
\mathfrak{D}(x_t\cdot D_t)=x_t\cdot \mathfrak{D}D_t.
\end{equation}
The selfinancing condition can be expressed by means of the anticipative ''differential'' $d_*$ as
\begin{equation}
x_t\cdot D_t=x_0\cdot D_0+\int_0^tx_u\cdot d_*D_u-\int_0^td\left<x,D\right>_u,
\end{equation}
which is equivalent to
\begin{equation}\label{DStar}
\mathfrak{D}_*(x_t\cdot D_t)=x_t\cdot \mathfrak{D}_*D_t-\mathfrak{D}_*\left<x,D\right>_t.
\end{equation}
By summing equations (\ref{D}) and (\ref{DStar}) we obtain
\begin{equation}
\mathcal{D}(x_t\cdot D_t)=\frac{1}{2}(\mathfrak{D}+\mathfrak{D}_*)(x_t\cdot D_t)=x_t\cdot \mathcal{D}D_t-\frac{1}{2}\mathfrak{D}_*\left<x,D\right>_t.
\end{equation}
To prove the second statement in expression \ref{sf} we consider the integration by part formula for It\^o's integral
\begin{equation}
\int_0^tx_u\cdot dD_u+\int_0^tD_u\cdot dx_u=x_t\cdot D_t-x_0\cdot D_0-\left<x,D\right>_t,
\end{equation}
which, expressed in terms of Stratonovich's integral, leads to
\begin{equation}
\int_0^tx_u\circ dD_u-\frac{1}{2}\left<x,D\right>_t+\int_0^tD_u\circ dx_u-\frac{1}{2}\left<x,D\right>_t=x_t\cdot D_t-x_0\cdot D_0-\left<x,D\right>_t.
\end{equation}
By taking Stratonovich's derivative $\mathcal{D}$ on both side we get
\begin{equation}
\mathcal{D}(x_t\cdot D_t)=\mathcal{D}x_t\cdot D_t+x_t\cdot \mathcal{D}D_t,
\end{equation}
which, together with the first statement in expression (\ref{sf}) proves the second one.
\end{proof}

For the reminder of this paper unless otherwise stated we will deal
only with weak $\mathcal{D}$ differentiable market models, weak $\mathcal{D}$
differentiable strategies, and, when necessary, with weak $\mathcal{D}$
differentiable state price deflators. All It\^{o} processes are weak
$\mathcal{D}$ differentiable, so that the class of considered
admissible strategies is very large.

\subsection{Arbitrage as Curvature}
 The Lie algebra of $G$ is
\begin{equation}\mathfrak{g}=\mathbf{R}^{[0, +\infty[}\end{equation}
and therefore commutative. The $\mathfrak{g}$-valued connection
$1$-form writes as
\begin{equation}\chi(x,t,g)(\delta x, \delta t)=\left(\frac{D_t^{\delta x}}{D_t^x}-r_t^x\delta t\right)g,\end{equation}
or as a linear combination of basis differential forms as
\begin{equation}\boxed{\chi(x,t,g)=\left(\frac{1}{D_t^x}\sum_{j=1}^ND_t^jdx_j-r_t^xdt\right)g.}\end{equation}
The $\mathfrak{g}$-valued curvature $2$-form is defined as
\begin{equation}R:=d\chi+[\chi,\chi],\end{equation} meaning by this,
that for all $(x,t,g)\in \mathcal{B}$ and for all $\xi,\eta\in
T_{(x,t)}M$
\begin{equation}R(x,t,g)(\xi,\eta):=d\chi(x,t,g)(\xi,\eta)+[\chi(x,t,g)(\xi),\chi(x,t,g)(\eta)]. \end{equation}
Remark that, being the Lie algebra commutative, the Lie bracket
$[\cdot,\cdot]$ vanishes. After some calculations we obtain
\begin{equation}\boxed{R(x,t,g)=\frac{g}{D_t^x}\sum_{j=1}^ND_t^j\left(r_t^x+\mathcal{D}\log(D_t^x)-r_t^j-\mathcal{D}\log(D_t^j)\right)dx_j\wedge dt,}\end{equation}
and can prove following results which characterizes arbitrage as
curvature.
\begin{theorem}[\textbf{No Arbitrage}]\label{holonomy}
The following assertions are equivalent:
\begin{itemize}
\item [(i)] The market model satisfies the no-free-lunch-with-vanishing-risk condition.
\item[(ii)] There exists a positive martingale $\beta=(\beta_t)_{t}$ such that deflators and short rates satisfy for all portfolio nominals and all times the condition
\begin{equation}\label{diffarb}\boxed{r_t^x=-\mathcal{D}\log(\beta_tD_t^x).}\end{equation}
\item[(iii)] There exists a positive martingale $\beta=(\beta_t)_{t}$ such that deflators and term structures satisfy for all portfolio nominals and all times the condition
\begin{equation}\label{intarb}\boxed{P^x_{t,s}=\frac{\mathbb{E}_t[\beta_sD^x_s]}{\beta_tD^x_t}.}\end{equation}
\end{itemize}
The following assertions are equivalent and follow from the above ones:
\begin{itemize}
\item[(iv)] The local holonomy group $\Hol^0(\chi)$ of the principal fibre bundle
$\mathcal{B}$ is trivial.
\item[(v)] The curvature form $R$ vanishes everywhere on $\mathcal{B}$.
\end{itemize}
\end{theorem}
\begin{proof}\text{}
\begin{itemize}
\item (i)$\Leftrightarrow$(iii): By Theorems
\ref{ThmDeSch} and \ref{ThmZ} the
no-free-lunch-with-vanishing-risk property is equivalent to the
existence of a positive state price deflator, that is of a
positive martingale $\beta=(\beta_t)_{t}$ such that the
market value at time $t$ of the any contingent claim  at time $s>t$ of the form $D_s^x$
is
\begin{equation}\frac{1}{\beta_t}\mathbb{E}_{t}[\beta_{s}D_s^x],\end{equation}
where $\mathbb{E}_t$ denotes conditional expectation. Since prices
are expressed in units of the deflator to which they relate the
formula writes
\begin{equation}D_t^xP_{t,s}^x=\frac{1}{\beta_t}\mathbb{E}_{t}[\beta_{s}D_s^x].\end{equation}
\item (ii)$\Leftrightarrow$(iii): the first equation is the integral version of the second, which is the differential version of the first. This can be seen by the following reasoning. Differentiating equation (\ref{intarb}) with respect to $\mathcal{D}_s$ on both sides leads to
\begin{equation}
D^x_t\exp\left(\int_t^sf^x_{t,u}du\right)(-f_{t,s}^x)=\frac{1}{\beta_t}\mathcal{D}_s\mathbb{E}_t[\beta_sD_s^x].
\end{equation}
By taking the limit for $s\rightarrow t^+$ on both sides, one obtains, by continuity,
\begin{equation}
-D_t^xr_t^x=\frac{1}{\beta_t}\mathcal{D}_t(\beta_tD_t^x),
\end{equation}
which is equation (\ref{diffarb}). This proves the implication $(ii)\Rightarrow(iii)$.\\
If we integrate (\ref{diffarb}), we obtain
\begin{equation}
-\int_t^sdu\,r_u^x=\log\left(\frac{\beta_sD_s^x}{\beta_t D_t^x}\right),
\end{equation}
which means
\begin{equation}
D^x_t\exp\left(-\int_t^sdu\,r_u^x\right)=\frac{1}{\beta_t}\beta_sD_s^x.
\end{equation}
Taking the conditional expectation $\mathbb{E}_t[\cdot]$ on both sides leads to
\begin{equation}
D^x_t\mathbb{E}_t\left[\exp\left(-\int_t^sdu\,r_u^x\right)\right]=\frac{1}{\beta_t}\mathbb{E}_t[\beta_sD_s^x],
\end{equation}
which is equivalent to equation(\ref{intarb}). Therefore, the implication $(iii)\Rightarrow(ii)$ is proved.
\item (ii)$\Rightarrow$(v):
\begin{equation}  \mathcal{D}\log(D_t^x)+r_t^x =-\mathcal{D}\log\beta_t=: \text{Const}_t\Rightarrow R\equiv0,\end{equation} where $\text{Const}_t$ depends only on the time
$t$ but not on the portfolio $x$.
\item (iv)$\Leftrightarrow$(v): The bundle is trivial. The
assertion is then a standard result in differential geometry (the
Ambrose-Singer Theorem), see f.i. \cite{KoNo96} Chapters II.4 and
II.8.
\end{itemize}
\end{proof}
\noindent The preceding Theorem motivates the following definition
\begin{defi}
The market model satisfies the \textbf{zero curvature condition (ZC)} if and only if
the curvature vanishes a.s.
\end{defi}
\noindent Therefore, we have following implications relying the three different definitions of no-arbitrage:
\begin{corollary}
\begin{equation}
\boxed{
  \begin{split}
       &\text{(NFLVR)}\Rightarrow \text{(NA)}
       \\
       &\text{(NFLVR)}\Rightarrow\text{(ZC)}
  \end{split}
        }
\end{equation}
\end{corollary}
\begin{proof}
The first implication is well known in mathematical finance (see Definition \ref{def1}). That (NFLVR) implies (ZC) is a consequence of Theorem \ref{holonomy}.
\end{proof}

\subsection{No Arbitrage Condition, Flows and Continuity Equation}
The geometric language introduced enlightens similarities between
the asset model on one side and hydro- or electrodynamics on the
other. The counterpart of a liquid or charge flow in physics is a
value flow in mathematical finance. The associated continuity
equation is satisfied if and only if the
no-free-lunch-with-vanishing-risk condition is fulfilled.
\begin{definition}
Let $M_t(x):=\{y|(y,t)\in M,\,y\le x\}$ be the set of all possible
portfolios at time $t$ bounded from above by the portfolio $y$, and $M_t^j(x)\subset ]-\infty,x_j]$ its projection onto the $j$th axis.
The \textbf{log value current}  for the market model is defined as
a vector field $J$ on $M$ by
\begin{equation}\boxed{J^j(x,t):=\left(\int_{M_t^j(x)} dy_j\,\frac{y_jD_t^j}{D_t^{(y_je_j+\sum_{i\neq j}x_ie_i)}}\right)r_t^j 
,}
\end{equation} where
$r_t:=[r_t^1,\dots,r_t^N]^{\dagger}$ and $r_t\,u $ the componentwise
multiplication. Let $\beta=(\beta_t)_t$ be a positive
semimartingale. The $\beta$-scaled \textbf{log value density} for
the market model is defined on $M$ as
\begin{equation}\boxed{\rho^{\beta}(x,t):=\log(\beta_tD_t^x).}\end{equation}
\end{definition}
The results of the preceding subsection can be reformulated in terms
of a \textit{continuity} equation analogously to classical
electrodynamics (cf. \cite{Ja98}, 5.1, p. 175) or continuum
mechanics (cf. \cite{Si02}, 3.3.1, pp. 67-68).
\begin{theorem}[\textbf{Continuity Equation}]
The market model satisfies the
no-free-lunch-with-vanishing-risk-condition if and only if
 there exists a positive martingale $\beta$ such that one of
following equations is satisfied:
\begin{equation}\label{cont}
\boxed{
  \begin{split}
   &\mathcal{D}\rho^{\beta}+\dive_xJ=0\\
   &\mathcal{D}\int_{X_{t_0}}dx^N\rho^{\beta}=-\oint_{\partial X_{t_0}} dn\cdot J.
\end{split}
}
\end{equation}
The first expression  is the differential version of the continuity
equation and the second the integral one, which must hold text for
any $1-$codimensional submanifold $X_{t_0}$ lying in the hyperplane
$t\equiv t_0$.
\end{theorem}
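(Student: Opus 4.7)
The plan is to reduce the continuity equation, in either form, to condition (iv) of Theorem \ref{holonomy}, which characterizes NFLVR as the existence of a positive semimartingale $\beta=\{\beta_t\}_t$ satisfying
\begin{equation*}
r_t^x \;=\; -\mathcal{D}\log(\beta_t D_t^x) \;=\; -\mathcal{D}\rho^{\beta}(x,t)
\end{equation*}
for every portfolio $x$ and every $t$. Since this is a pointwise statement in $(x,t)$, the differential continuity equation $\mathcal{D}\rho^{\beta}+\dive_x J=0$ will follow at once from the identity $\dive_x J(x,t)=r_t^x$, and conversely this identity together with the differential continuity equation recovers condition (iv) and hence NFLVR via Theorem \ref{holonomy}.

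First I would compute $\dive_x J$ directly from $J(x,t)=\int_{M_t(x)}du^N\,r_t\,u$ by applying the fundamental theorem of calculus in each nominal coordinate: differentiating the iterated integral over $M_t(x)=\{u\le x\}$ with respect to $x_j$ collapses the $j$-th dimension of integration into an evaluation on the corresponding hyperface, and summing these $N$ boundary contributions produces exactly $r_t^x=\sum_j \tfrac{x_j D_t^j}{D_t^x}r_t^j$. Combined with the already established $\mathcal{D}\rho^{\beta}=-r_t^x$ this yields the differential form in one direction; the converse is immediate, since if the differential continuity equation holds pointwise in $x$ for some positive semimartingale $\beta$, then condition (iv) of Theorem \ref{holonomy} follows and NFLVR is implied.

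The integral form is then obtained from the differential form by the classical divergence theorem in $\mathbf{R}^N$, applied pathwise on each $1$-codimensional submanifold $X_{t_0}\subset\{t\equiv t_0\}$, once one may commute $\mathcal{D}$ with the spatial integration $\int_{X_{t_0}}dx^N(\cdot)$. This interchange is licensed by the weakly differentiable market model hypothesis of Definition \ref{weakMM}, which places all deflators and term structures in $\mathcal{C}^1(I,\mathbf{R}^N)$ and supplies the $L^2$-uniformity in the nominal parameter needed for a Leibniz-type differentiation under the integral sign.

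The principal obstacle is exactly this interchange of $\mathcal{D}$ and the spatial integral, because Nelson's derivative is a conditional $L^2$-limit of one-sided difference quotients rather than a classical pathwise derivative; controlling these quotients uniformly in $x$ requires precisely the regularity built into $\mathcal{C}^1(I,\mathbf{R}^N)$. Once this technical point is disposed of, the remainder of the argument is the direct calculation of $\dive_x J$, the invocation of Theorem \ref{holonomy}(iv), and an application of the classical divergence theorem.
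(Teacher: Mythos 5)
Your proposal follows essentially the same route as the paper: compute $\dive_x J = r_t^x$ from the definition of the log value current, observe that the differential continuity equation then reads $\mathcal{D}\log(\beta_t D_t^x) + r_t^x = 0$, which is precisely condition (iv) of Theorem \ref{holonomy} and hence equivalent to NFLVR, and obtain the integral version by Gauss' theorem. The paper's proof is a two-line version of exactly this argument; your additional attention to the interchange of $\mathcal{D}$ with the spatial integral is a reasonable elaboration of a step the paper leaves implicit, not a different approach.
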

The integral version of the continuity equation has a beautiful
financial interpretation: a market satisfies the no-free-lunch-with-
vanishing-risk condition if and only if the log total value change
of any submarket is due to the log value current flow through its
boundary.
\begin{proof}
It is an application of the vector field divergence definition:
\begin{equation}
\dive_xJ(x,t)=\sum_{j=1}^N\frac{\partial}{\partial
x_j}J(x,t)=r_t^x=-\mathcal{D}\log(\beta_tD_t^x).
\end{equation}
Gauss' Theorem proves the integral version.
\end{proof}
The left hand side expression in the differential version of the
continuity equation (\ref{cont}) is a natural candidate for a local
arbitrage measure. The link with the definition of arbitrage given
in the preceding section is given by
\begin{proposition}[\textbf{Curvature Formula}]\label{curvature}
Let $R$ be the curvature, $\rho^{\beta}$ the log value density and
$J$ the log value current. Then, the following quality holds:
\begin{equation}\boxed{R(x,t,g)=g dt\wedge d_x\left[\mathcal{D} \rho^{\beta}+\dive_x J\right]=g dt\wedge d_x\left[\mathcal{D} \log (D_t^x)+r_t^x\right].}\end{equation}
\end{proposition}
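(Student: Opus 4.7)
The plan is to work from both sides and show they coincide by a direct componentwise calculation, exploiting the already-derived formula for $R$ and the divergence computation from the proof of the continuity equation.

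First, I would reduce the right-hand side to a purely $x$-dependent expression. Since $\rho^\beta(x,t)=\log(\beta_t D_t^x)=\log\beta_t+\log D_t^x$ and $\beta_t$ is independent of the portfolio $x$, one has $d_x\mathcal{D}\log\beta_t=0$, so $d_x[\mathcal{D}\rho^\beta+\dive_x J]=d_x[\mathcal{D}\log D_t^x+\dive_x J]$. The continuity equation proof already established $\dive_x J(x,t)=r_t^x$, which gives the second equality in the stated formula for free. So it suffices to prove $R(x,t,g)=g\,dt\wedge d_x[\mathcal{D}\log D_t^x+r_t^x]$.

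Next I would compute $d_x[\mathcal{D}\log D_t^x+r_t^x]$ coefficient by coefficient. Using $D_t^x=\sum_k x_k D_t^k$ and $r_t^x=\sum_k x_k D_t^k r_t^k/D_t^x$, direct differentiation gives $\partial_{x_j}\log D_t^x = D_t^j/D_t^x$ and $\partial_{x_j}r_t^x = D_t^j(r_t^j-r_t^x)/D_t^x$. Under the weakly differentiable market model of Definition~\ref{weakMM} (which lets us interchange Nelson's derivative $\mathcal{D}$ with the parameter derivative $\partial_{x_j}$), the chain rule for Stratonovich-type derivatives yields
\begin{equation*}
\partial_{x_j}\mathcal{D}\log D_t^x=\mathcal{D}\frac{D_t^j}{D_t^x}=\frac{D_t^j}{D_t^x}\bigl(\mathcal{D}\log D_t^j-\mathcal{D}\log D_t^x\bigr).
\end{equation*}
Collecting terms, $\partial_{x_j}[\mathcal{D}\log D_t^x+r_t^x]=\frac{D_t^j}{D_t^x}\bigl(\mathcal{D}\log D_t^j-\mathcal{D}\log D_t^x+r_t^j-r_t^x\bigr)$.

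Finally I would compare with the boxed formula for $R$ given just before Theorem~\ref{holonomy}, namely $R=\frac{g}{D_t^x}\sum_j D_t^j\bigl(r_t^x+\mathcal{D}\log D_t^x-r_t^j-\mathcal{D}\log D_t^j\bigr)dx_j\wedge dt$. Swapping to the ordering $dt\wedge dx_j=-dx_j\wedge dt$ introduces a sign that exactly absorbs the sign flip between the coefficient above and the one inside $R$. Matching the two expressions term by term completes the identification. The only genuine subtlety is justifying the commutation $\partial_{x_j}\mathcal{D}=\mathcal{D}\partial_{x_j}$, which is the principal obstacle; this is where the $\mathcal{C}^1(I,\mathbf{R}^N)$ regularity hypothesis built into the weakly differentiable market model is essential, since $\partial_{x_j}$ is trivial to apply to the linear combination $\sum_k x_k D_t^k$ and the interchange then follows because each $D_t^j$ lies in $\mathcal{C}^1$ and the parameter differentiation in $x$ is pointwise linear.
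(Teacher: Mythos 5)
Your proposal is correct and all the coefficient computations check out, but it is organized differently from the paper's proof. The paper derives the identity from scratch: it writes the connection one-form as $\chi=g\left(\frac{\partial\log D_t^x}{\partial x}\cdot dx-r_t^x\,dt\right)$, identifies the coefficients with $\partial_x\rho^\beta$ and $\dive_x J$, and computes $R=d\chi$ directly; the pure-$dx_i\wedge dx_j$ part drops out by symmetry of the second partials of $\rho^\beta$, and the surviving $dt\wedge dx_j$ part is exactly $\partial_{x_j}\left(\mathcal{D}\rho^\beta+\dive_x J\right)$ after commuting $\mathcal{D}$ with $\partial_{x_j}$. You instead take the boxed expression for $R$ displayed just before Theorem~\ref{holonomy} as an established fact and verify, coefficient by coefficient, that the right-hand side reproduces it. This is legitimate since the paper asserts that formula beforehand, but be aware that the ``some calculations'' producing that boxed formula are essentially the same exterior-derivative computation the paper carries out inside this proof, so your argument quietly outsources the real work to an unproved display; the paper's route has the advantage of proving both statements at once. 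On the other hand, you are more careful than the paper about the one genuine analytic subtlety, the interchange $\partial_{x_j}\mathcal{D}=\mathcal{D}\partial_{x_j}$, which the paper performs silently in passing from $\mathcal{D}\frac{\partial\rho^\beta}{\partial x_j}$ to $\frac{\partial}{\partial x_j}\mathcal{D}\rho^\beta$; flagging the $\mathcal{C}^1$ regularity of Definition~\ref{weakMM} as the hypothesis that licenses this step is a genuine improvement in rigor.
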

\begin{proof}
We develop the expression for the curvature as:
\begin{equation}
\begin{split}
R(x,t,g) &= \frac{g}{D_t^x}\sum_{j=1}^ND_t^j\left(r_t^x+\mathcal{D}\log(D_t^x)-r_t^j-\mathcal{D}\log(D_t^j)\right)dx_j\wedge dt=\\
&=g\sum_{j=1}^N\frac{\partial}{\partial x_j}\left(-\mathcal{D}\log(D_t^x)-r_t^x\right)dx_j\wedge dt=\\
&=g\sum_{j=1}^N\frac{\partial}{\partial x_j}\left(-\mathcal{D}\log(\beta_tD_t^x)-r_t^x\right)dx_j\wedge dt=\\
&=g\sum_{j=1}^N\frac{\partial}{\partial
x_j}\left(\mathcal{D} \rho^{\beta} +  \dive_x (J)\right)dt\wedge dx_j=\\
&=g dt\wedge d_x\left[\mathcal{D}\rho^{\beta}+\dive_x J\right]=\\
&=gdt\wedge d_x\left[\mathcal{D}\log(D_t^x)+r_t^x\right].
\end{split}
\end{equation}
\end{proof}
\begin{corollary}[\textbf{No Arbitrage Revisited}]
The following assertions are equivalent:
\begin{itemize}
\item [(i)] The market model satisfies the no-free-lunch-with-vanishing-risk condition.
\item[(ii)] There exist a positive martingale $\beta$ for which the continuity equation is satisfied.
\end{itemize}
\end{corollary}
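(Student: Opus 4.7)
The plan is to reduce the corollary to the two major results already established immediately before it, namely Theorem~\ref{holonomy} and the Curvature Formula of Proposition~\ref{curvature}, so that the proof becomes essentially a bookkeeping exercise. Concretely, I will establish the equivalences (i)$\Leftrightarrow$(ii) and (ii)$\Leftrightarrow$(iii) separately. The first of these is already done for us: conditions (i) and (iii) of Theorem~\ref{holonomy} are literally the present (i) and (ii), so no further argument is needed. Only (ii)$\Leftrightarrow$(iii) requires a short computation, which is driven entirely by the explicit formula
\begin{equation*}
R(x,t,g)=g\,dt\wedge d_x\bigl[\mathcal{D}\rho^{\beta}+\dive_x J\bigr]=g\,dt\wedge d_x\bigl[\mathcal{D}\log D_t^x+r_t^x\bigr]
\end{equation*}
of Proposition~\ref{curvature}.

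For the implication (iii)$\Rightarrow$(ii), I would simply observe that if $\mathcal{D}\rho^{\beta}+\dive_x J\equiv 0$ for some positive semimartingale $\beta$, then its $d_x$-differential also vanishes identically, and by the curvature formula $R\equiv 0$. For the converse (ii)$\Rightarrow$(iii), I would use that the vanishing of $R$ forces $d_x[\mathcal{D}\log D_t^x+r_t^x]\equiv 0$, so that for each fixed $t$ the quantity $\mathcal{D}\log D_t^x+r_t^x$ is independent of the portfolio direction $x$. Denoting this common (stochastic, $t$-dependent) value by $-\mathcal{D}\log\beta_t$, one then defines the candidate state price deflator by Stratonovich integration,
\begin{equation*}
\beta_t:=\beta_0\exp\!\left(-\int_0^t\bigl(\mathcal{D}\log D_u^x+r_u^x\bigr)\,du\right),
\end{equation*}
with the right-hand side independent of $x$ by (ii). Substituting back, $\mathcal{D}\rho^{\beta}+\dive_x J=\mathcal{D}\log\beta_t+\mathcal{D}\log D_t^x+r_t^x=0$, giving (iii).

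The only delicate point, and hence the place where I would slow down, is checking that $\beta$ constructed in (ii)$\Rightarrow$(iii) really is a \emph{positive semimartingale} rather than merely a formal Nelson-primitive. This is where the \emph{weakly differentiable market model} assumption of Definition~\ref{weakMM} earns its keep: since $D_t^x$ and $r_t^x$ lie in $\mathcal{C}^1$, the integrand $\mathcal{D}\log D_u^x+r_u^x$ is a continuous $L^2$-process in $u$, the Stratonovich integral is well defined, and the exponential is automatically a strictly positive semimartingale. Having thus closed the loop (i)$\Leftrightarrow$(ii)$\Leftrightarrow$(iii), the corollary is complete; no separate argument linking (iii) directly to (i) is needed, as the route runs through (ii).
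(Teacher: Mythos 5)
Your proposal is correct, and the mathematics you use is exactly the mathematics the paper has already established; the only difference is which edge of the equivalence triangle you prove directly. The paper treats this corollary as a pure assembly job: (i)$\Leftrightarrow$(ii) is conditions (i) and (iii) of Theorem~\ref{holonomy}, and (i)$\Leftrightarrow$(iii) is \emph{verbatim} the statement of the Continuity Equation theorem, so nothing remains to be checked. You instead route through (ii)$\Leftrightarrow$(iii) via the curvature formula of Proposition~\ref{curvature}, re-deriving in the process the argument that the paper has already packaged twice: your (ii)$\Rightarrow$(iii) construction of $\beta$ from the $x$-independent quantity $\mathcal{D}\log D_t^x+r_t^x$ is exactly the step (iii)$\Rightarrow$(iv) in the proof of Theorem~\ref{holonomy}, and the identity $\dive_xJ=r_t^x=-\mathcal{D}\log(\beta_tD_t^x)$ you substitute at the end is the entire content of the proof of the Continuity Equation theorem. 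What your route buys is self-containedness at the level of this corollary (one sees concretely why zero curvature and the continuity equation are the differential and exact-form versions of the same condition); what the paper's route buys is brevity, since the (i)--(iii) edge is already a theorem. Your attention to whether the constructed $\beta$ is genuinely a positive semimartingale is a point the paper glosses over in the corresponding step of Theorem~\ref{holonomy}, and invoking the $\mathcal{C}^1$ regularity of Definition~\ref{weakMM} there is a reasonable way to close it.
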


\section{A Guiding Example}\label{GuidEx}
We want now to construct an example to demonstrate how the most
important geometric concepts of section \ref{section2} can be
applied. Given a filtered probability space $(\Omega,\mathcal{A},
P)$, where $P$ is the statistical (physical) probability measure,
we assume that all processes introduced in this example are
adapted to the filtration
$\mathcal{A}=(\mathcal{A}_t)_{t\in[0,+\infty[}$ satisfying the
usual conditions. Let us consider a market consisting of  $N+1$
assets labeled by $j=0,1,\dots,N$, where the $0$-th asset is the
cash account utilized as a num\'{e}raire. Therefore, as explained
in the introductory subsection \ref{StochasticPrelude}, it
suffices to model the price dynamics of the other assets
$j=1,\dots,N$ expressed in terms of the $0$-th asset. As vector
valued semimartingale for the discounted price process
$\hat{S}:[0,+\infty[\times\Omega\rightarrow\mathbf{R}^N$, we chose
the
 multidimensional It\^{o}-process given by
\begin{equation}\label{SDES}
d\hat{S}_t=\hat{S}_t(\alpha_tdt+\sigma_tdW_t),
\end{equation}
where
\begin{itemize}
\item $(W_t)_{t\in[0,+\infty[}$ is a standard $P$-Brownian motion in $\mathbb{R}^K$, for some $K\in\mathbb{N}$, and,
\item $(\sigma_t)_{t\in[0,+\infty[}$,  $(\alpha_t)_{t\in[0,+\infty[}$ are  $\mathbb{R}^{N\times K}$-, and respectively,  $\mathbb{R}^{N}$- valued  stochastic processes, $\sigma_t$ as maximal rank, i.e. $\text{rank}(\sigma_t)=K$,
\end{itemize}
The processes $\alpha$ and $\sigma$ generalize  drift and volatility
of a multidimensional geometric Brownian motion. Therefore, we have
modelled assets satisfying the zero liability assumptions like
stocks, bonds and commodities. The solution of the SDE (\ref{SDES})
can be obtained by means of It\^{o}'s Lemma and reads
\begin{equation}
\hat{S}_t=\hat{S}_0\exp\left(\int_0^t\left(\alpha_u-\frac{1}{2}\diag(\sigma_u\sigma_u^{\dagger})\right)du+\int_0^t\sigma_udW_u
\right),
\end{equation}
\noindent where integration and exponentiation are meant
componentwise. To define the corresponding deflators to meet
Definition \ref{defi1}, we can just set
\begin{equation}
D:=\hat{S}.
\end{equation}
In order to construct term structures representing future contracts
on the assets, we pass by the definition of their short rates as in
Definition \ref{int}, assuming that they follow the multidimensional
It\^{o}-process
\begin{equation}\label{SDEr}
dr_t=a_tdt+b_tdW_t,
\end{equation}
where $W$ is the multidimensional $P$-Brownian motion introduced above
and $(b_t)_{t\in[0,+\infty[}$,  $(a_t)_{t\in[0,+\infty[}$ are
$\mathbf{R}^{N\times K}$-, and respectively,  $\mathbf{R}^{N}$-
valued locally bounded predictable stochastic processes, the drift
and the instantaneous volatility of the multidimensional short rate.
The solution of the SDE (\ref{SDEr}) writes
\begin{equation}
r_t=r_0+\int_0^ta_udu+\int_0^tb_udW_u.
\end{equation}
Term structures are defined via
\begin{equation}P_{t,s}:=\mathbb{E}_t\left[\exp\left(-\int_t^sr_udu\right)\right].\end{equation}
 At time $t$, the price of synthetic zero bonds delivering at time $s$
one unit of the base asset $j$ is
\begin{equation}
\bar{S}_t^j:=\hat{S}_tP^j_{t,s}.
\end{equation}
This means that we have constructed $N$ gauges
\begin{equation}
(D^j,P^j)\quad j=1,\dots,N,
\end{equation}
satisfying Definition \ref{defi1}. Moreover, if drifts $\alpha, a$
and volatilities $\sigma, b$ satisfy appropriate regularity
assumptions, then we have a Nelson $\mathcal{D}$ differentiable market model as in
Definition \ref{weakMM} with  nominal space $X=\mathbf{R}^N$ and
base manifold $M=\mathbf{R}^N\times[0,+\infty[$. The dynamics of
asset prices, short rates and term structures read
\begin{equation}\label{dyn}
\begin{split}
\hat{S}_t^x&=x^{\dagger}\hat{S}_0\exp\left(\int_0^t(\alpha_u-\frac{1}{2}\diag({\sigma_u}\sigma_u^{\dagger}))du+\int_0^t\sigma_udW_u\right)\\
r_t^x&=\frac{x^{\dagger}}{\hat{S}_t^x}\hat{S}_t(r_0+\int_0^ta_udu+\int_0^tb_udW_u)\\
P_{t,s}^x&=\mathbb{E}_t\left[\exp\left(-\int_t^sr_u^xdu\right)\right],
\end{split}
\end{equation}
for any nominals $x\in\mathbf{R}^N$. The curvature of the
market principal fibre bundle $\mathcal{B}$ can be computed with
Proposition \ref{curvature}:
\begin{equation}\label{curvEx}
R(x,t,g)=gdt\wedge d_x\left(\mathcal{D}\log \hat{S}_t^x+r_t^x\right)
\end{equation}
The zero curvature condition is equivalent to
\begin{equation}\label{exprr}
\mathcal{D}\log \hat{S}_t^x+r_t^x=C_t,
\end{equation}
where $C$ is a stochastic processes which does not depend on $x$.
Inserting equation (\ref{exprr}) into the expression for the short
rate in equation (\ref{dyn}) allows us to compute the term structure
as
\begin{equation}\label{raw}
P_{t,s}^x=\mathbb{E}_t\left[\exp\left(-\int_t^sr_u^xdu\right)\right]=\mathbb{E}_t\left[\frac{\hat{S}_s^x}{\hat{S}^x_t}\frac{\beta_s}{\beta_t}\right],
\end{equation}
\noindent where we have introduced  the positive stochastic process
\begin{equation}
\beta_t:=\exp\left(-\int_0^tC_udu\right).
\end{equation}
Equation (\ref{raw}) can be rewritten as
\begin{equation}\label{mar}
\beta_t\hat{S}^x_tP_{t,s}^x=\mathbb{E}_t\left[\beta_s\hat{S}_s^xP_{s,s}^x\right],
\end{equation}
meaning that for the price of the synthetic zero bond
\begin{equation}
\bar{S}_t^{x,T}:=\hat{S}_t^xP^x_{t,T},
\end{equation}
the process $(\beta_t\bar{S}_t^{x,T})_{t\in[0,T[}$ is a
$P$-martingale for all maturities $T\in[0,+\infty[$. Therefore, if
the positive stochastic process $\beta$ is a martingale, then
it is a pricing kernel and the no-free-lunch-with-vanishing-risk
condition is satisfied. here below we will investigate under what
conditions this is the case. Conversely, from (NFLVR) one can
infer the vanishing of the curvature. We have thus rediscovered
Theorem \ref{holonomy}.

\begin{proposition}\label{PropIto}
Let the dynamics of a market model be specified by following It\^o's processes as in (\ref{SDES}, \ref{SDEr}), where we additionally assume that  the coefficients
\begin{itemize}
\item $(\alpha_t)_t,(\sigma_t)_t$, and  $(r_t)_t$ satisfy
\begin{equation}
\lim_{s\rightarrow t^+}\mathbb{E}_s[\alpha_t]=\alpha_t,\quad\lim_{s\rightarrow t^+}\mathbb{E}_s[r_t]=r_t,\quad\lim_{s\rightarrow t^+}\mathbb{E}_s[\sigma_t]=\sigma_t,
\end{equation}
\item $(\sigma_t)_t$ is an It\^o's process,
\item $(\sigma_t)_t$ and $(W_t)_t$ are independent processes.
\end{itemize}
Then, the market model satisfies the (ZC) condition if and only if
\begin{equation}\label{ZCCond}
\alpha_t+r_t \in {\rm Range} (\sigma_t).
\end{equation}
\end{proposition}
\begin{remark}In the case of the classical model, where there are no term structures (i.e. $r\equiv0$), the condition (\ref{ZCCond}) reads as $\alpha_t\in{\rm Range} (\sigma_t)$.
\end{remark}

\begin{proof}
Let us consider the expression for It\^o's integral with respect to Stratonovich's
\begin{equation}
\int_0^t\sigma_udW_u=\int_0^t\sigma_u\circ dW_u-\frac{1}{2}\int_0^td\left<\sigma,W\right>_u,
\end{equation}
and take Nelson's derivative corresponding to the Stratonovich's integral:
\begin{equation}
\mathcal{D}\int_0^t\sigma_udW_u=\sigma_t\mathcal{D}W_t-\frac{1}{2}\mathcal{D}\left<\sigma,W\right>_t.
\end{equation}
Since
\begin{equation}
\mathcal{D}W_t=\frac{W_t}{2t}
\end{equation}
and, because of the independence assumption for the two It\^{o}s processes $(\sigma_t)_t$ and $(W_t)_t$,
\begin{equation}
\left<\sigma,W\right>_t\equiv0,
\end{equation}
we obtain
\begin{equation}
\mathcal{D}\int_0^t\sigma_udW_u=\sigma_t\frac{W_t}{2t},
\end{equation}
which, inserted into the asset dynamics
\begin{equation}
\hat{S}_t=\hat{S}_0\exp\left(\int_0^t(\alpha_u-\frac{1}{2} {\rm diag} ({\sigma_u}\sigma_u^{\dagger}))du+\int_0^t\sigma_udW_u\right),
\end{equation}
\noindent leads to
\begin{equation}
\mathcal{D}\log\hat{S}_t=\alpha_t-\frac{1}{2} {\rm diag}({\sigma_t}\sigma_t^{\dagger})+\sigma_t\frac{W_t}{2t}.
\end{equation}
By Proposition \ref{curvature} the curvature vanishes if and only if for all $x\in\mathbb{R}^N$
\begin{equation}
\mathcal{D}\log \hat{S}_t^x+r_t^x=C_t,
\end{equation}
for a real valued stochastic process $(C_t)_{t\ge0}$, or, equivalently
\begin{equation}
\mathcal{D}\log \hat{S}_t+r_t=C_te,
\end{equation}
where $e:=[1,\dots,1]^{\dagger}$ or
\begin{equation}\label{noarbalphabeta}
\alpha_t+r_t-\frac{1}{2} {\rm diag} (\sigma_t{\sigma_t}^{\dagger})+\sigma_t\frac{W_t}{2t}=C_te.
\end{equation}
Equation (\ref{noarbalphabeta}) is the formulation of the (ZC) condition for the market model (\ref{SDES}).
By taking on both sides of (\ref{noarbalphabeta}) $\lim_{h\rightarrow0^+}\mathbb{E}_{t-h}[\cdot]$, and utilizing the independence assumption, from which
\begin{equation}
\mathbb{E}_{t-h}\left[\sigma_t\frac{W_t}{2t}\right]=\mathbb{E}_{t-h}\left[\sigma_t\right]\underbrace{\mathbb{E}_{t-h}\left[\frac{W_t}{2t}\right]}_{=0}=0
\end{equation}
follows, we obtain, using the continuity assumption for $(\alpha_t)_t,(\sigma_t)_t$, and  $(r_t)_t$,
\begin{equation}
\alpha_t+r_t-\frac{1}{2} {\rm diag}(\sigma_t{\sigma_t}^{\dagger})=\beta_te,
\end{equation}
where $\beta_t:=\lim_{h\rightarrow0^+}\mathbb{E}_{t-h}[C_t]$ is a predictable process. Therefore, equation (\ref{noarbalphabeta}) becomes
\begin{equation}
\sigma_t\frac{W_t}{2t}=(C_t-\beta_t)e,
\end{equation}
and, thus
\begin{equation}\
e\in{\rm Range} (\sigma_t),
\end{equation}
the space spanned by the column vectors of $\sigma_t$. Since $\sigma_t$ has maximal rank, the $K$column vectors of $\sigma_t$ are linearly independent and $C_t-\beta_t\neq0$.\par Let $P_{\sigma_t}$, $P_{\sigma_t^\bot}$ denote the orthogonal projections onto ${\rm Range} (\sigma_t)$ and its orthogonal complement ${\rm Range} (\sigma_t)^{\bot}$, respectively. Then, we can decompose
\begin{equation}
\alpha_t+r_t=P_{\sigma_t}(\alpha_t+r_t)+P_{\sigma_t^\bot}(\alpha_t+r_t),
\end{equation}
and
\begin{equation}\label{eqP}
P_{\sigma_t^\bot}(\alpha_t+r_t)=P_{\sigma_t^\bot}\left(C_te\right)-P_{\sigma_t^\bot}\left(\sigma_t\frac{W_t}{2t}\right)+P_{\sigma_t^\bot}\left(\frac{1}{2} {\rm diag} (\sigma_t{\sigma_t}^{\dagger})\right).
\end{equation}
Since $e$ and $\sigma_tW_t$ lie in ${\rm Range} (\sigma_t)$, the first two addenda on the right hand side of (\ref{eqP}) vanish. By Lemmata \ref{diag} and \ref{proj} the third one vanishes as well, so that $P_{\sigma_t^\bot}(\alpha_t+r_t)=0$, i.e. $\alpha_t+r_t\in{\rm Range} (\sigma_t)$. Conversely, if  $\alpha_t+r_t\in{\rm Range} (\sigma_t)$, then equation (\ref{noarbalphabeta}) holds true, and
the proof of the equivalence between the (ZC) condition and (\ref{ZCCond}) is completed.
\end{proof}

\begin{lemma}\label{diag}
Let $A$ be a linear operator on the euclidean $\mathbf{R}^N$. The vector
\begin{equation}
\diag(A):=\sum_{j=1}^N(Ae_j\cdot e_j)e_j
\end{equation}
does not depend on the choice of the o.n.B $\{e_1,\dots,e_n\}$ of $\mathbf{R}^N$ and defines the \textbf{diagonal} of $A$.
\end{lemma}
\begin{proof}
The coordinates of $\diag(A)$ with respect to the o.n.B $\{e_1,\dots,e_N\}$ can be written as
\begin{equation}
[\diag(A)]_{\{e\}}=\sum_{j=1}^N([e_j]_{\{e\}}^\dagger[A]_{\{e\}}[e_j]_{\{e\}})[e_j]_{\{e\}}
\end{equation}
Let us consider another o.n.B $\{f_1,\dots,f_n\}$ of $\mathbf{R}^N$. This means that there exists an orthogonal linear operator $U$ on $\mathbf{R}^N$ such that $Ue_j=f_j$ for all $j=1,\dots,N$. Therefore we can write
\begin{equation}
\begin{split}
[\diag(A)]_{\{e\}}&=\sum_{j=1}^N\left(([U]_{\{f\}}^\dagger[f_j]_{\{f\}})^\dagger[A]_{\{e\}}[U]_{\{f\}}^\dagger[f_j]_{\{f\}}\right)[U]_{\{f\}}^\dagger[f_j]_{\{f\}}=\\
&=\sum_{j=1}^N\left([f_j]_{\{f\}}^\dagger\left([U]_{\{f\}}[A]_{\{e\}}[U]_{\{f\}}^\dagger\right)[f_j]_{\{f\}}\right)[U]_{\{f\}}^\dagger[f_j]_{\{f\}}=\\
&=[U]_{\{f\}}^\dagger\left(\sum_{j=1}^N[f_j]_{\{f\}}^\dagger[A]_{\{f\}}[f_j]_{\{f\}}\right)=\\
&=[U]_{\{f\}}^\dagger[\diag(A)]_{\{f\}}.
\end{split}
\end{equation}
Therefore, the coordinates of the diagonal transforms like a vector during a change of basis, and hence the diagonal is well defined.\\
\end{proof}
\begin{lemma}\label{proj}
Let $\sigma$ be a $\mathbf{R}^{N\times K}$ real matrix of rank $K$ and $P$ the orthogonal projection onto the orthogonal complement to the subspace generated by the column vectors of $\sigma$. Then,
\begin{equation}
P\diag(\sigma\sigma^{\dagger})=0\in\mathbf{R}^N.
\end{equation}
\end{lemma}
\begin{proof}
The real symmetric matrix $\sigma\sigma^{\dagger}\in\mathbf{R}^{N\times N}$ induces via standard o.n.b a selfadjoint linear operator on $\mathbf{R}^N$, which by Lemma \ref{diag} has a well defined diagonal. Let us enlarge $\sigma$ to an $\mathbf{R}^{N\times N}$ matrix, by adding $N-K$ zero column vectors. The matrix $\sigma\sigma^{\dagger}\in\mathbf{R}^{N\times N}$ remains the same. Let us consider an o.n.b of $\mathbf{R}^N$, $\{f_1,\dots,f_N\}$, where $\{f_1,\dots,f_K\}$ is a basis of $\text{Range}(\sigma)$ and $\{f_{K+1},\dots,f_N\}$ is a basis of its orthogonal complement, $\text{Range}(\sigma)^\bot$. The diagonal of $\sigma\sigma^{\dagger}$ reads
\begin{equation}
\diag(\sigma\sigma^{\dagger})=\sum_{j=1}^N(\sigma\sigma^{\dagger}f_j\cdot f_j)f_j=\sum_{j=1}^N(\sigma^{\dagger}f_j\cdot \sigma^{\dagger}f_j)f_j=\sum_{j=1}^K(\sigma^{\dagger}f_j\cdot \sigma^{\dagger}f_j)f_j,
\end{equation}
because $\sigma^{\dagger}f_j=0$ for $j=K+1,\dots,N$, being $f_j$ in the orthogonal complement of $\text{Range}(\sigma)$. Therefore,
\begin{equation}
P\diag(\sigma\sigma^{\dagger})=\sum_{j=1}^K(\sigma^{\dagger}f_j\cdot \sigma^{\dagger}f_j)Pf_j=0,
\end{equation}
because $f_j$ is in $\text{Range}(\sigma)$ for $j=1,\dots,K$ and $P$ is the projection onto $\text{Range}(\sigma)^\bot$ .
\end{proof}
\noindent Next, we show the equivalence of the (ZC) condition with (NFLVR) in the case of It\^o's dynamics.
\begin{proposition}\label{NovikovThm}
Under the same assumptions as Proposition \ref{PropIto}, the zero curvature condition for the market model specified by (\ref{SDES}, \ref{SDEr}) , that is
\begin{equation}
\mathcal{D}\log \hat{S}_t+r_t =C_te,
\end{equation}
is equivalent to the no-free-lunch-with-vanishing-risk condition if the positive stochastic process $(\beta_t)_{t\ge0}$, defined as
\begin{equation}
\beta_t:=\exp\left(-\int_0^tC_udu\right)
\end{equation}
is a martingale.
\end{proposition}
\begin{proof}
By Proposition \ref{curvature} the zero curvature (ZC) condition $R=0$ is equivalent with the existence of a stochastic process $(C_t)_{t\ge0}$ such that for all $i=1,\dots,N$ the equation
\begin{equation}
\mathcal{D}\log \hat{S}_t^i+r_t ^i=C_t
\end{equation}
\noindent holds. This means that
\begin{equation}
\begin{split}
&\mathcal{D}\log \hat{S}_t^i=C_t-r_t^i\\
&\log\frac{S_t^i}{S_0^i}=\int_0^t(C_u-r_u^i)du\\
&S_t^i=S_0^i\exp\left(\int_0^tC_udu\right)\exp\left(-\int_0^tr_u^idu\right).
\end{split}
\end{equation}
Therefore,
\begin{equation}
\mathcal{D}\log(\beta_tD_t^i)+r_t^i=0
\end{equation}
for all $i=1,\dots,N$. By Theorem \ref{holonomy}, if $(\beta_t)_{t\ge0}$ is a martingale, then we have proved (NFLVR).
\end{proof}

\begin{proposition}
For the market model whose dynamics is specified by the SDEs
\begin{equation}
\begin{split}
d\hat{S}_t&=\hat{S}_t(\alpha_tdt+\sigma_tdW_t)\\
dr_t&=a_tdt+b_tdW_t,
\end{split}
\end{equation}
the no-free-lunch-with-vanishing risk condition (NFLVR) is satisfied if \textbf{Novikov's condition}
\begin{equation}\label{Novikov}
\mathbb{E}_0\left[\exp\left(\int_0^T\frac{1}{2}\left|{\sigma_t}^{\dagger}(\sigma_t{\sigma_t}^{\dagger})^{-1}(\alpha_t+r_t)\right|^2du\right)\right]<+\infty,
\end{equation}
is fulfilled.
\end{proposition}
\begin{proof}
The asset price dynamics reads
\begin{equation}\label{eqS}
d\log \hat{S}_t =\alpha_t dt+\sigma_tdW_t.
\end{equation}
Since
\begin{equation}
P_{t,s}=\exp\left(-\int_t^sf_{t,u}du\right),
\end{equation}
the term structure dynamics reads
\begin{equation}\label{eqr}
d_t\log P_{t,s}=f_{t,t}dt=r_tdt,
\end{equation}
where we consider $s$ as a parameter. Putting (\ref{eqS}) and (\ref{eqr}) together leads to
\begin{equation}\label{eqS}
d\log\left(\hat{S}_tP_{t,s}\right) =(\alpha_t+r_t) dt+\sigma_tdW_t=\sigma_t \left(\underbrace{{\sigma_t}^{\dagger}(\sigma_t{\sigma_t}^{\dagger})^{-1}(\alpha_t+r_T)}_{:=\gamma_t}dt+dW_t\right)
\end{equation}
If the Novikov condition (\ref{Novikov}) is satisfied then, by
Girsanov's Theorem, the process
\begin{equation}
m_t^*:=\exp\left(-\int_0^t\frac{1}{2}\left|\gamma_u\right|^2du+
\int_0^t\gamma_u^{\dagger}dW_u\right)
\end{equation}
is a martingale and the Radon-Nykodym derivative of a probability measure $\mathbb{P}^*$ equivalent to the statistical probability measure $\mathbb{P}$:
\begin{equation}
\mathbb{E}_t\left[\frac{d\mathbb{P}^*}{d\mathbb{P}}\right]=m_t^*.
\end{equation}
Therefore
\begin{equation}
W_t=W^*_t+\int_0^t\gamma_udu,
\end{equation}
where $(W_t^*)_{t\ge0}$ is a $\mathbb{P}^*$ standard multivariate
Brownian motion, and
\begin{equation}
dW_t=dW^*_t+\gamma_tdt,
\end{equation}
which leads to
\begin{equation}
d\log (\hat{S}_tP_{t,s})=\sigma_tdW_t^*.
\end{equation}
We conclude that $(\hat{S}_tP_{t,s})_{t\ge0}$ is an exponential
$\mathbb{P}^*$-martingale and hence that the market satisfies
\begin{equation}
\hat{S}_tP_{t,s}=\mathbb{E}_t^*[\hat{S}_s\underbrace{P_{s,s}}_{=1}]=\mathbb{E}_t\left[\frac{m_s^*}{m_t^*}\hat{S}_s\right],
\end{equation}
which, by Theorem \ref{holonomy} is equivalent, being $(m_t^*)_t$ a positive martingale, to the no free-lunch-with-vanishing-risk.
\end{proof}
\section{Market Model and Differential Topology}

Till now we have not considered any concrete asset dynamics for the
market model. In our differential geometric framework a market
dynamics should be specified as an infinite dimensional stochastic
process for deflators $D_t:=[D^1_t,\dots,D^N_t]^{\dagger}$ and term
structures $P_{ts}:=[P_{ts}^1,\dots,P_{ts}^N]^{\dagger}$ or
equivalently as $2N-$dimensional stochastic process for deflators
and short rates $r_t:=[r^1_t,\dots,r^N_t]^{\dagger}$. Inspired by
\cite{Il01}, we modify Ilinski's five principles characterizing the
market dynamics to obtain
\begin{definition}[\textbf{Principles of Market
Dynamics}]\label{axioms}\text{}
\begin{itemize}
\item \textbf{(A1) Intrinsic Uncertainty: } Deflators, term structures and short rates are random
variables:
\begin{equation}D=D(\omega),\quad P=P(\omega),\quad r=r(\omega), \end{equation}
where $\omega\in \Omega$ represents a state of nature and
$(\Omega,\mathcal{A},P)$ is a probability space.

\item \textbf{(A2) Causality :} Time dynamics of deflators,
term structures and short rates depend on their history such that
future events can be influenced by past events only. Formally, we
assume the existence of a filtration $(\mathcal{A}_t)_{t\ge 0}$ of
the $\sigma$-algebra $\mathcal{A}_{\infty}$ such that $D_t$, $r_t$
and $(P_{t,s})_{s\ge t}$ are $\mathcal{A}_t$-measurable.

\item \textbf{(A3) Gauge Invariance: }Assume that deflator-term-structure stochastic process for $(D,P)=((D_t,P_{t,s}))_{t\ge
0, s\ge t}$ satisfies the equation
\begin{equation}f(D,P)=0\end{equation} almost surely. Thereby, $f$ denotes a possibly stochastic function, that is  $f=f(\omega,D,P)$.
Then, for every invertible gauge transform $\pi$ the equation
\begin{equation}f(D^{\pi},P^{\pi})=0\end{equation} must hold almost surely as well.

\item \textbf{(A4) Minimal Arbitrage: } The most likely configurations
of the random connections among deflators and term structures (or
short rates) are those minimizing the arbitrage for the market
portfolio strategy for almost every state of the nature
$\omega\in\Omega$.

\item \textbf{(A5) Extension Consistency :} The theory has to
contain stochastic finance theory.
\end{itemize}
\end{definition}

\noindent We will try to realize this program in the rest of this
paper. We remark that our framework already fulfills principles
$(A1)$ and $(A2)$. In section \ref{LagrangianTheory} we will obtain
$(A3)$, $(A4)$ and $(A5)$ .

\subsection{Arbitrage Action as Homotopic Invariant}
We will investigate what happens to arbitrage properties of
self-financing strategies in case of smooth deformations. We will
assume for the remaining of this paper that the space of allocations
$X$ is connected.

\begin{definition}
Two $\mathcal{D}$-admissible self-financing strategies
$\gamma_{1,2}:[0,T]\rightarrow X$ with the same start and end points
(i.e. $\gamma_1(0)=\gamma_2(0)$ and $\gamma_1(T)=\gamma_2(T)$)are
said to be \textbf{homotopic} if and only if one is a smooth
deformation of the other. This means that there exists a
$\mathcal{D}$-differentiable function $\Gamma:[0,1]\times[0,T]\rightarrow X$
such that $\gamma_1=\Gamma(0,\cdot)$ and $\gamma_2=\Gamma(1,\cdot)$.
A $\mathcal{D}$-admissible self-financing strategy is said to be
\textbf{contractible} if it is null-homotopic that is homotopic to a
point. The relation \textbf{homotopy} is an equivalence relation in
the set of self-financing strategies and its quotient space, that is
the set of all equivalence classes, is called the first
\textbf{self-financing differentiable fundamental group} of the
portfolio space and denoted by $\Pi_1(X, D)$. The market is said to
be \textbf{simply connected} in the case of a trivial first
fundamental group or equivalently if and only if every
closed self-financing strategy is contractible.
\end{definition}
\begin{figure}[h!]
\centering
\includegraphics[width = 12.5cm, angle=-90]{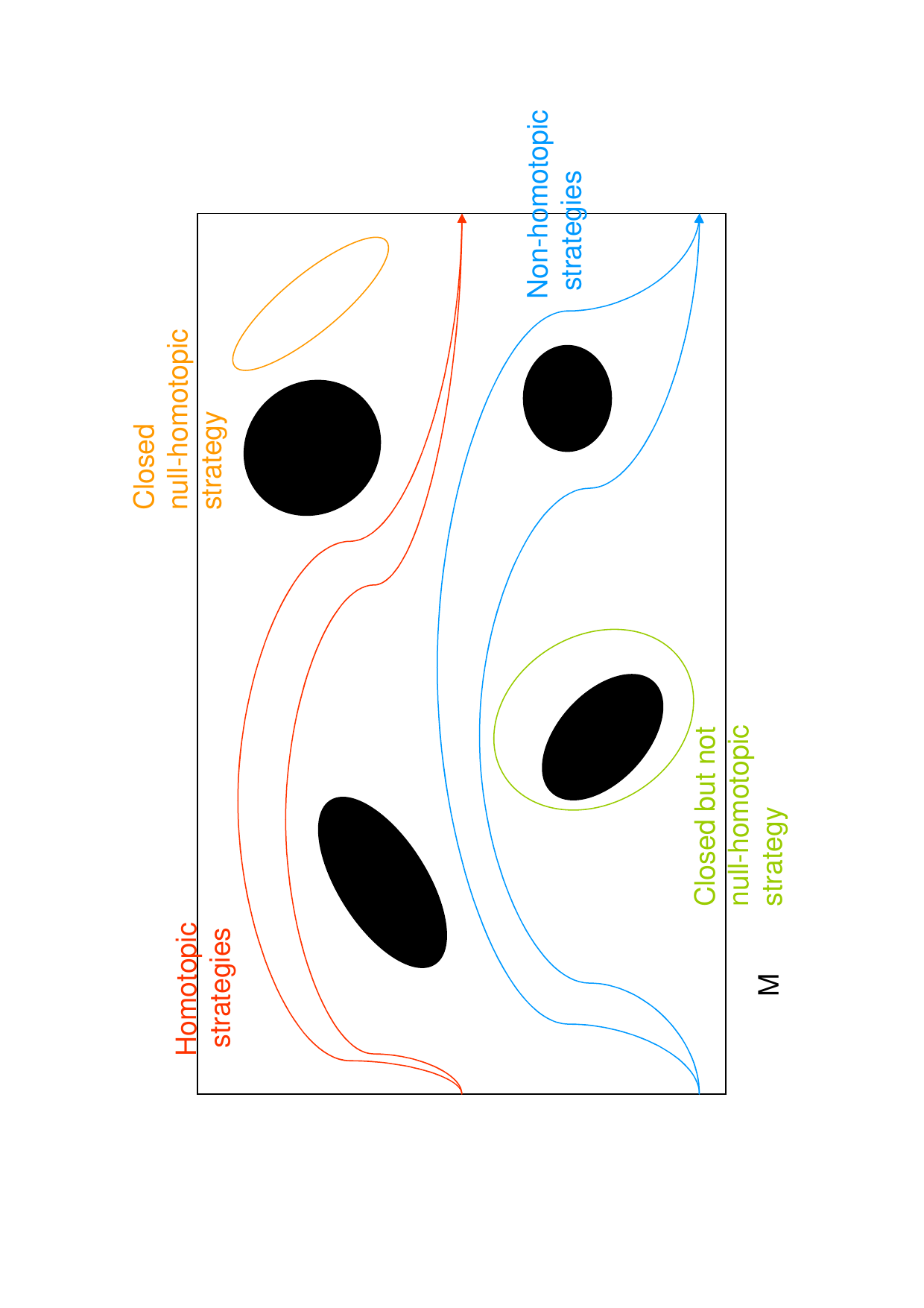}
\caption{Homotopy}\label{DH}
\end{figure}
The intuition behind this definition is that homotopic equivalent strategies generate the same quantity of arbitrage.
To see it, we introduce the following

\begin{definition}\label{defAA}
Let $\beta$ be a positive martingale. The \textbf{arbitrage
action} of the $\mathcal{D}$-differentiable strategy $\gamma$ is
defined as
\begin{equation}A^{\beta}(\gamma;D,r):=\int_{\gamma}dt\BR{\mathcal{D}\rho^{\beta}+\dive_xJ}=\int_0^1dt\BR{\mathcal{D}
\log(\beta_tD_t^{x_t})+r_t^{x_t}}.
\end{equation}
\end{definition}

\begin{theorem}[\textbf{Arbitrage Action Formula for Self-financing Strategies}]
Let $\gamma:[0,T]\rightarrow X$ be a $\mathcal{D}$-admissible
self-financing strategy and
$d_{0T}^{\gamma}:=\exp(-\int_0^T\,r_u^{x_u}du)$ the stochastic
discount factor. Then
\begin{equation}A^{\beta}(\gamma;D,r)=\log\left(\frac{\beta_TD_T^{x_T}}{\beta_0D_0^{x_0}d_{0T}^{\gamma}}\right)\end{equation}
almost surely.
\end{theorem}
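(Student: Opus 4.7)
The plan is to recognize the integrand in the arc-length parameterized form of $A^\beta$ as the total Stratonovich derivative along the lifted curve $\bar\gamma(t)=(x_t,t)$, apply the fundamental theorem of calculus for Nelson's symmetric derivative $\mathcal{D}$ (which, as stressed in subsection \ref{Derivatives}, corresponds to the Stratonovich integral and therefore obeys the ordinary FTC and chain rule), and handle the short-rate term by a direct integration against the definition of the stochastic discount factor.

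Concretely, starting from
\begin{equation*}
A^{\beta}(\gamma;D,r)=\int_0^1 dt\,\BR{\mathcal{D}\log(\beta_t D_t^{x_t})+r_t^{x_t}}
\end{equation*}
I would split the integral. For the short-rate piece the definition $d_{01}^{\gamma}:=\exp(-\int_0^1 r_u^{x_u}\,du)$ immediately gives $\int_0^1 r_t^{x_t}\,dt=-\log d_{01}^{\gamma}$. For the derivative piece, the Stratonovich/Nelson FTC applied to the scalar process $t\mapsto \log(\beta_t D_t^{x_t})$ yields
\begin{equation*}
\int_0^1 \mathcal{D}\log(\beta_t D_t^{x_t})\,dt=\log(\beta_1 D_1^{x_1})-\log(\beta_0 D_0^{x_0})
\end{equation*}
almost surely. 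Adding the two pieces gives $A^{\beta}(\gamma;D,r)=\log\bigl(\beta_1 D_1^{x_1}/(\beta_0 D_0^{x_0} d_{01}^{\gamma})\bigr)$, which is the claim.

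The substantive step — and the one I expect to be the real obstacle — is to justify that $\mathcal{D}\log(\beta_t D_t^{x_t})$ may legitimately be read as the total Stratonovich derivative along the lifted curve $\bar\gamma$, rather than merely as the partial time derivative of the scalar field $\rho^{\beta}(x,t)=\log(\beta_t D_t^x)$ (at fixed $x$) evaluated on the curve, which is what literally appears in the definition $A^{\beta}(\gamma)=\int_\gamma ds\,\{\mathcal{D}\rho^{\beta}+\dive_x J\}$. Applying the Stratonovich product rule to $\mathcal{D}(x_t\cdot D_t)$ shows that the two interpretations differ a priori by $\frac{D_t\cdot \mathcal{D}x_t}{D_t^{x_t}}$. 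This is exactly where the $\mathcal{C}^1$-self-financing hypothesis enters: the identity $\mathcal{D}x_t\cdot D_t=[x_t,D_t]$ reconciles the partial and total forms so that the integrand the FTC is applied to is the same as the one in the definition of $A^{\beta}$. Once that identification is in place, the two line computation above produces the stated closed form, and the $\omega$-wise nature of the FTC delivers the almost-sure statement.
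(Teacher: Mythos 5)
Your proof is correct and follows essentially the same route as the paper's: split the action into the term $\mathcal{D}\log(\beta_t D_t^{x_t})$, integrated via the fundamental theorem of calculus for $\mathcal{D}$, and the short-rate term, which equals $-\log d_{01}^{\gamma}$ directly from the definition of the stochastic discount factor. Your additional remark on why the $\mathcal{C}^1$-self-financing condition is what allows $\mathcal{D}\rho^{\beta}$ evaluated along $\gamma$ to be read as the total derivative of $t\mapsto\log(\beta_t D_t^{x_t})$ addresses a point the paper's proof passes over silently, and is a worthwhile clarification rather than a deviation.
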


\begin{proof}
We rewrite the arbitrage action as
\begin{equation}
\begin{split}
A^{\beta}(\gamma;D,r) &=\int_{\gamma}dt\BR{\mathcal{D}
\log(\beta_tD_t^{x_t})+r_t^{x_t}} =\\
&=\int_{\gamma}dt\,\mathcal{D}
\log(\beta_tD_t^{x_t})+\int_{\gamma}ds\,r_t^{x_t}=\\
&=\log\left(\frac{D_T^{x_T}\beta_T}{D_0^{x_0}\beta_0}\right)-\log(d_{0T}^{\gamma})=\log\left(\frac{\beta_TD_T^{x_T}}{\beta_0D_0^{x_0}d_{0T}^{\gamma}}\right)
\end{split}
\end{equation}
and the proof is completed.
\end{proof}

\begin{lemma}[\textbf{Homotopy Invariance Property of Stochastic Discount Factor for Self-financing Strategies}]
The stochastic discount factor  satisfies $d_{0T}^{\gamma}\equiv1$
for any contractible self-financing strategy $\gamma$. The
stochastic discount factor is a homotopy invariant, that is for any
homotopy class of $\mathcal{D}$-admissible self-financing
strategies $[\gamma]\in\Pi_1(X,D)$ the following statement holds
a.s.
\begin{equation}\gamma_{1,2}\in[\gamma]\Rightarrow d_{0T}^{\gamma_1}=d_{0T}^{\gamma_2}=d_{0T}^{\gamma}.\end{equation}
\end{lemma}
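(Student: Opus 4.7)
The plan is to derive both claims from the Arbitrage Action Formula combined with a Stokes-type argument on the homotopy surface. Using that formula, I rewrite
\[
  -\log d_{01}^{\gamma} = A^{\beta}(\gamma)-\log\br{\frac{\beta_1 D_1^{x_1}}{\beta_0 D_0^{x_0}}},
\]
so that for two strategies sharing endpoints the equality $d_{01}^{\gamma_1}=d_{01}^{\gamma_2}$ is equivalent to $A^{\beta}(\gamma_1)=A^{\beta}(\gamma_2)$. The contractibility claim then reduces to the homotopy-invariance claim: a $\mathcal{C}^1$-contractible self-financing strategy $\gamma$ is homotopic to the constant reference strategy $\gamma_0(t)\equiv x_0:=\gamma(0)$, for which $\mathcal{D}x_t\equiv 0$ forces $ds=|\mathcal{D}x_t|\,dt\equiv 0$. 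Hence $A^{\beta}(\gamma_0)=0$ and $d_{01}^{\gamma_0}=1$ directly from the arc-length definition, so once homotopy invariance is known, the first claim follows.

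The core step is thus to prove homotopy invariance of $A^{\beta}$ restricted to self-financing strategies. Given a $\mathcal{C}^1$-homotopy $\Gamma:[0,1]^2\to X$ between $\gamma_1$ and $\gamma_2$ fixing endpoints, I lift it to $M$ via $\bar\Gamma(s,t):=(\Gamma(s,t),t)$. By Proposition \ref{curvature}, the integrand of the action is a primitive of the curvature 2-form, namely
\[
  R(x,t,g)=g\,dt\wedge d_x\BR{\mathcal{D}\rho^{\beta}+\dive_x J}.
\]
Setting $\omega:=\BR{\mathcal{D}\rho^{\beta}+\dive_x J}\,dt$ on $M$, Stokes' theorem applied to $\bar\Gamma$, whose boundary consists of $\bar\gamma_1$, $-\bar\gamma_2$, and two degenerate vertical segments above the fixed endpoints (which contribute nothing since $dx$ vanishes along them), yields
\[
  A^{\beta}(\gamma_1)-A^{\beta}(\gamma_2)=\int_{\bar\Gamma}\,(d\omega) = \pm\int_{\bar\Gamma}R/g.
\]

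The main obstacle I foresee is showing that this surface integral vanishes for homotopies taken through self-financing strategies. Using the explicit form
\[
  R=\frac{g}{D_t^x}\sum_{j=1}^N D_t^j\br{r_t^x+\mathcal{D}\log D_t^x-r_t^j-\mathcal{D}\log D_t^j}\,dx_j\wedge dt,
\]
one pulls back to $\bar\Gamma$ and applies the identity $\mathcal{D}\log D_t^x=\br{[x,D]_t+x\cdot\mathcal{D}D_t}/D_t^x$ that follows from the self-financing condition $\mathcal{D}x_t\cdot D_t=[x_t,D_t]$. The hope is that along each self-financing intermediate curve $\Gamma(s,\cdot)$ the bracketed coefficient of $dx_j\wedge dt$ telescopes so that the $dt$-contribution integrated against $ds\wedge dt$ of the surface cancels pointwise in $s$. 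Making this cancellation rigorous is the technical heart of the proof, since it requires handling the Stratonovich derivatives of composed processes on the homotopy surface and verifying that the self-financing constraint propagates consistently across the one-parameter family of deformed strategies.
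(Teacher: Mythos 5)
Your route is genuinely different from the paper's, and it does not close. The paper argues directly on the discount factor: for a closed strategy it writes $-\log d_{01}^{\gamma}=\int_0^1 r_u^{x_u}\,du=\int_{\gamma}ds\,r_t^x=\int_{\gamma}ds\,\dive_x J$ and kills this by Gauss' theorem because $\partial\gamma=\emptyset$; invariance on a homotopy class then follows by concatenating $\gamma_1$ with the reversal of $\gamma_2$ to obtain a closed contractible loop. It never invokes the curvature $2$-form or a homotopy surface. Your preliminary reduction --- using the Arbitrage Action Formula to trade $d_{01}^{\gamma_1}=d_{01}^{\gamma_2}$ for $A^{\beta}(\gamma_1)=A^{\beta}(\gamma_2)$ --- is legitimate, since that formula is proved before the lemma and the endpoint term is common to both curves. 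The two steps you build on it, however, each contain a genuine gap.

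First, the contractible case: you claim $d_{01}^{\gamma_0}=1$ for the constant curve $\gamma_0\equiv x_0$ because $ds\equiv 0$. But $ds$ enters only the \emph{action}; the stochastic discount factor is defined by the time integral $d_{01}^{\gamma}=\exp(-\int_0^1 r_u^{x_u}\,du)$, which for a constant portfolio equals $\exp(-\int_0^1 r_u^{x_0}\,du)\neq 1$ in general, so your base case is not established. Second, and more seriously, the Stokes step: the integrand $\mathcal{D}\rho^{\beta}+\dive_x J$ evaluated along a strategy is the \emph{total} Stratonovich derivative $\mathcal{D}\log(\beta_t D_t^{x_t})$ plus $r_t^{x_t}$, which depends on the curve through $\mathcal{D}x_t$; it is not the pullback of a fixed $1$-form $\omega$ on $M$, so the identity $A^{\beta}(\gamma_1)-A^{\beta}(\gamma_2)=\pm\int_{\bar\Gamma}R/g$ does not follow from Stokes as stated. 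Moreover, if that identity did hold you would be committed to proving that the curvature flux through every homotopy surface vanishes, whereas $R$ is generically nonzero (it vanishes exactly under NFLVR) and the lemma must hold in arbitrage markets as well. You yourself flag the required cancellation as a ``hope''; carrying it out would additionally need every intermediate curve $\Gamma(s,\cdot)$ to be self-financing, a hypothesis absent from the paper's definition of $\mathcal{C}^1$-homotopy. The missing idea is the paper's much more elementary one: isolate the discount factor, identify $r_t^x$ with $\dive_x J$, and exploit the emptiness of the boundary of a closed loop directly, with no two-dimensional integration at all.
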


\begin{proof}
Let $\gamma$ be a closed strategy. Then, for the stochastic discount
factor we have:
\begin{equation}-\log(d_{0T}^{\gamma})=\int_0^Tdu\,r_u^{x_u}=\int_{\gamma}dt\,r_t^x=\int_{\gamma}dt\,\dive_x J=\int_{\partial \gamma}\,dn\cdot J=0,\end{equation}
since $\partial \gamma=\emptyset$. The Lemma follows.
\end{proof}

\begin{corollary}[\textbf{Homotopy Invariance Property of  Action for Self-financing Strategies}]\label{corHom}
The arbitrage action $A^{\beta}(\gamma)$ vanishes for any
contractible self-financing strategy $\gamma$. The arbitrage action
is a homotopy invariant, that is for any homotopy class of
self-financing strategies $[\gamma]\in\Pi_1(X,D)$ the following
statement holds a.s.
\begin{equation}\gamma_{1,2}\in[\gamma]\Rightarrow A^{\beta}(\gamma_1;D,r)=A^{\beta}(\gamma_2;D,r)=A^{\beta}(\gamma;D,r).\end{equation}

\end{corollary}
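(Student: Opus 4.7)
The plan is to derive both assertions as immediate corollaries of the Arbitrage Action Formula combined with the Differentiable Homotopy Invariance of the stochastic discount factor established in the preceding lemma. The only genuine geometric content has already been done in the lemma; what remains is algebraic bookkeeping on the action formula.

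First I would prove the homotopy invariance statement. Let $\gamma_1, \gamma_2 : [0,1] \to X$ be $\mathcal{C}^1$-self-financing strategies belonging to the same class $[\gamma] \in \Pi_1(X,D)$, so by definition they share both start point $x_0 = \gamma_1(0) = \gamma_2(0)$ and end point $x_1 = \gamma_1(1) = \gamma_2(1)$. By the Arbitrage Action Formula applied to each,
\begin{equation*}
A^{\beta}(\gamma_i;D,r) = \log\!\left(\frac{\beta_1 D_1^{x_1}}{\beta_0 D_0^{x_0}\, d_{01}^{\gamma_i}}\right), \qquad i=1,2.
\end{equation*}
The factors $\beta_0 D_0^{x_0}$ and $\beta_1 D_1^{x_1}$ depend only on the common endpoints $(x_0,0)$ and $(x_1,1)$ in $M$, so they coincide for $i=1,2$. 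By the homotopy invariance of the stochastic discount factor (the preceding lemma), $d_{01}^{\gamma_1} = d_{01}^{\gamma_2}$ almost surely. Substituting gives $A^{\beta}(\gamma_1;D,r) = A^{\beta}(\gamma_2;D,r)$ a.s., so $A^{\beta}$ descends to a well-defined function on $\Pi_1(X,D)$.

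For the first assertion, let $\gamma$ be $\mathcal{C}^1$-contractible, i.e. homotopic to a constant strategy $\gamma_0 \equiv x$. On such a constant path $\mathcal{D}x \equiv 0$, hence the arc-length element $ds = |\mathcal{D}x(t)|\,dt$ vanishes identically, so from the original integral definition
\begin{equation*}
A^{\beta}(\gamma_0;D,r) = \int_{\gamma_0} ds\,\left\{\mathcal{D}\rho^{\beta} + \dive_x J\right\} = 0
\end{equation*}
a.s., irrespective of the value of the integrand. Combining this with the homotopy invariance just shown, $A^{\beta}(\gamma;D,r) = A^{\beta}(\gamma_0;D,r) = 0$ a.s., which is the claimed vanishing.

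The one mild obstacle is notational: the Arbitrage Action Formula was derived under arc-length reparameterization, which degenerates on the constant path where $ds \equiv 0$. I sidestep this by evaluating the vanishing directly on the unparameterized integral (where the result is manifest) rather than through the closed-form expression, and then transporting the zero value back along the homotopy via part two. No further analytic estimates are needed since the lemma supplies all the pathwise equalities used.
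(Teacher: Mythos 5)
Your first half — homotopy invariance of $A^{\beta}$ — is exactly the argument the paper intends: the Arbitrage Action Formula expresses $A^{\beta}(\gamma_i;D,r)=\log\bigl(\beta_1D_1^{x_1}/(\beta_0D_0^{x_0}d_{01}^{\gamma_i})\bigr)$, the endpoint factors agree because homotopic strategies share endpoints, and the preceding Lemma gives $d_{01}^{\gamma_1}=d_{01}^{\gamma_2}$ a.s. No complaints there.

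The contractible half is where your route breaks down. You evaluate $A^{\beta}(\gamma_0)$ on the constant representative $\gamma_0\equiv x$ directly from the unparameterized integral (getting $0$ because $ds\equiv 0$), and then transport that value back to $\gamma$ using the homotopy invariance you just proved. But that invariance was proved \emph{via the Arbitrage Action Formula applied to both curves}, and the Action Formula assigns to $\gamma_0$ the value $\log\bigl(\beta_1D_1^{x}/(\beta_0D_0^{x}d_{01}^{\gamma_0})\bigr)$, which with $d_{01}^{\gamma_0}=1$ (Lemma) equals $\log(\beta_1D_1^{x}/\beta_0D_0^{x})$ — generically nonzero, since the two endpoint evaluations occur at different \emph{times} even though the nominal $x$ is the same. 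So your two evaluations of $A^{\beta}(\gamma_0)$ disagree, and the chain $A^{\beta}(\gamma)=A^{\beta}(\gamma_0)=0$ uses the $ds$-integral value for one link and the action-formula value for the other; the "notational obstacle" you flag is in fact the substantive obstruction, and declaring it sidestepped does not close it. The paper's own (implicit) route is different: it applies the Action Formula and the Lemma directly to the contractible loop $\gamma$ itself, which reduces the claim to the vanishing of $\int_{\gamma}ds\,\mathcal{D}\rho^{\beta}$ over a closed loop — the term your argument never engages with — and disposes of it by the same boundary/Stokes-type argument used for $\int_{\gamma}ds\,\dive_xJ$ in the Lemma's proof. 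To repair your write-up you would need either to prove $\int_{\gamma}ds\,\mathcal{D}\rho^{\beta}=0$ for closed contractible $\gamma$ directly, or to establish homotopy invariance by an argument (e.g.\ a genuine homotopy/Stokes argument on the family $\Gamma(\epsilon,\cdot)$) that remains valid when one member of the family is a degenerate constant path.
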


We can now give an alternative definition of arbitrage strategy end
extend it to positive and negative arbitrage.

\begin{definition}
A $\mathcal{D}$-admissible self-financing strategy $\gamma$ is
said to be a \textbf{positive, zero, negative $\beta$-arbitrage
strategy} if and only if the arbitrage action $A^{\beta}(\gamma)$ is
a.s. positive, zero, negative, respectively.
\end{definition}

A zero arbitrage strategy is of course a no-arbitrage strategy in
the usual sense. A strategy $x$ is a negative $\beta$-arbitrage
strategy if and only if $-x$ is a positive $\beta$-arbitrage
strategy.

\begin{corollary}[\textbf{Arbitrage Strategies}]
A $\mathcal{D}$-admissible self-financing strategy $\gamma:[0,T]\rightarrow X$ is a
positive, zero, negative $\beta$-arbitrage strategy if and only if
\begin{equation}\beta_TD_T^{x_T}(\omega)\left\{
                     \begin{array}{ll}
                       > \\
                       = \\
                       <
                     \end{array}
                   \right\}
\beta_0D_0^{x_0}(\omega)d_{0T}^{\gamma}(\omega)\quad \text{
a.s.}\end{equation}
\end{corollary}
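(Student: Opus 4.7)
The plan is to derive the corollary as an immediate consequence of the Arbitrage Action Formula established in the previous theorem, combined with the definition of $\beta$-arbitrage strategies stated just above. First, I would invoke the formula
\begin{equation*}
A^{\beta}(\gamma;D,r) = \log\left(\frac{\beta_1 D_1^{x_1}}{\beta_0 D_0^{x_0} d_{01}^{\gamma}}\right)\quad\text{a.s.}
\end{equation*}
which already contains all the content. Then I would translate each of the three cases in the definition by applying the strict monotonicity of $\log$.

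Concretely, by the definition, $\gamma$ is a positive (resp.\ zero, negative) $\beta$-arbitrage strategy iff $A^{\beta}(\gamma)>0$ (resp.\ $=0$, $<0$) almost surely. Using the formula above, these three conditions are respectively equivalent to $\frac{\beta_1 D_1^{x_1}}{\beta_0 D_0^{x_0} d_{01}^{\gamma}}>1$, $=1$, $<1$ almost surely. I would then clear the denominator to obtain the inequalities in the statement. For this last step to preserve the direction of inequality, I need to observe that the denominator is a.s.\ strictly positive: $\beta$ is a positive semimartingale by hypothesis, $D_0^{x_0}>0$ a.s.\ (as the base-point value of the num\'eraire deflator, tacitly required by the finiteness of the right-hand side of the action formula), and $d_{01}^{\gamma}=\exp(-\int_0^1 r_u^{x_u}\,du)>0$ pointwise in $\omega$.

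There is essentially no obstacle here beyond this positivity bookkeeping: the corollary is a one-line restatement of the action formula under the equivalence $\log(z)\gtreqless 0 \iff z\gtreqless 1$ for $z>0$. I would write the three cases in a single combined display using the brace notation already adopted in the statement, so that the proof is a two-line chain of equivalences: first applying the action formula, then exponentiating and multiplying through by $\beta_0 D_0^{x_0} d_{01}^{\gamma}>0$. No additional analytic or geometric machinery is required.
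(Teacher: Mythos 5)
Your proposal is correct and is exactly the argument the paper intends: the corollary is an immediate consequence of the Arbitrage Action Formula $A^{\beta}(\gamma;D,r)=\log\bigl(\beta_1D_1^{x_1}/(\beta_0D_0^{x_0}d_{01}^{\gamma})\bigr)$ together with the strict monotonicity of the logarithm, and the paper itself supplies no separate proof because it regards this as a one-line restatement. Your extra remark on the a.s.\ positivity of $\beta_0D_0^{x_0}d_{01}^{\gamma}$ (needed to clear the denominator without reversing inequalities) is a sensible piece of bookkeeping that the paper leaves tacit.
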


\subsection{Parametrization of Strategies and Differential-Topological Version of the Fundamental Arbitrage Pricing Theorem}

Now we will investigate the relationships between homotopy for
self-financing strategies and holonomy of the connection for the
principal fibre bundle describing the market model. Let us assume
that we have fixed a num\'{e}raire by choosing an appropriate global
cross section $g^{\text{Num}}$ of the gauge bundle $\mathcal{B}$.
From the Ambrose-Singer Theorem (see \cite{St82}, Theorem VII.1.2.)
we now that the Lie algebra $\LAhol(\chi)$ of the  holonomy group
$\Hol(\chi)$ is spanned by the values of the curvature form. More
exactly, assuming that both $M$ and $\mathcal{B}$ are connected, for
any $b\in\mathcal{B}$
\begin{equation}
\begin{split}
\LAhol(\chi)&=\spa\left(\left\{R(c)(\eta,\xi)\mid\eta,\xi\in T_{c}\mathcal{B},\text{ for an horizontal curve}\right.\right.\\
&\qquad\qquad\left.\left.\text{$\gamma$ joining $b$ to
$c$}\right\}\right).
\end{split}
\end{equation}

\begin{theorem}[\textbf{Holonomic Parametrization of Self-financing Strategies}]
Let $M$ and $\mathcal{B}$ be connected. The Lie algebra
$\LAhol(\chi)$ parameterizes all homotopic self-financing strategies
in the market model in the following sense: it exists a group
isomorphism
\begin{equation}\phi:\LAhol(\chi)\rightarrow\Pi_1(X,D)\end{equation} mapping

\begin{itemize}
\item $0\in\LAhol(\chi)$ to the equivalence class of no arbitrage strategies which are null-contractible.
\item Non trivial elements of $\LAhol(\chi)$  to different equivalence classes of $\mathcal{D}$-admissible self-financing arbitrage strategies with the same start and end points.
\end{itemize}

\end{theorem}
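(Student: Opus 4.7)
The plan is to construct $\phi$ as the composite of two maps: the abelian-group exponential $\exp\colon \LAhol(\chi) \to \Hol(\chi)$, and the inverse of a holonomy representation $\Phi_{h}\colon \Pi_{1}(X,D) \to \Hol_{b_{0}}(\chi)$ attached to a fixed lift $b_{0} \in \mathcal{B}$ of a basepoint $(x_{0},t_{0}) \in M$. Concretely, for a $\mathcal{C}^{1}$-self-financing loop $\gamma$ based at $x_{0}$, the element $\Phi_{h}([\gamma])$ is the unique $h_{\gamma} \in \Hol_{b_{0}}(\chi)$ such that the horizontal lift of $\bar{\gamma}$ starting at $b_{0}$ ends at $b_{0} \cdot h_{\gamma}$. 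By the \emph{Geometric Interpretation of Discounting and Foreign Exchange}, $h_{\gamma}$ is exactly the combined exchange-rate-and-discount-factor transport accumulated around the loop, and its logarithm in the Lie algebra equals, up to sign, the arbitrage action of $\gamma$.

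First I would verify that $\Phi_{h}$ is a well-defined group homomorphism. Well-definedness on $\mathcal{C}^{1}$-homotopy classes of self-financing strategies is a consequence of the \emph{Differentiable Homotopy Invariance Property of the Stochastic Discount Factor}: homotopic loops share the same $d_{01}^{\gamma}$, and by the \emph{Arbitrage Action Formula} this forces $h_{\gamma_{1}} = h_{\gamma_{2}}$. Multiplicativity $\Phi_{h}([\gamma_{1}] \cdot [\gamma_{2}]) = h_{\gamma_{1}} \cdot h_{\gamma_{2}}$ follows from concatenating parallel transports and using that $G^{*}$ is abelian. Surjectivity of $\Phi_{h}$ onto $\Hol_{b_{0}}(\chi)$ is tautological: every holonomy element arises, by definition, from a horizontal closed loop, which projects to a loop in $M$ that, thanks to the freedom to reparameterize in $t$ and to the triviality of $\mathcal{B}$, can be recast as a self-financing loop.

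Next comes injectivity, which is the heart of the argument. If $h_{\gamma_{1}} = h_{\gamma_{2}}$ then $A^{\beta}(\gamma_{1}; D, r) = A^{\beta}(\gamma_{2}; D, r)$ a.s.\ for every positive semimartingale $\beta$ by the Action Formula, so the reversed concatenation $\gamma_{2} \cdot \gamma_{1}^{-1}$ is a closed self-financing strategy with zero arbitrage action. Applying Proposition \ref{curvature} together with the Ambrose-Singer theorem, the curvature flux through any spanning surface vanishes, and such a surface provides a $\mathcal{C}^{1}$-homotopy $\Gamma\colon [0,1]^{2} \to X$ between $\gamma_{1}$ and $\gamma_{2}$ whose intermediate curves can be chosen self-financing by a constrained-deformation adjustment in the time component. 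Setting $\phi := \Phi_{h}^{-1} \circ \exp$ then yields the required group isomorphism. The two bullet points follow immediately: $0 \in \LAhol(\chi)$ exponentiates to the identity of $\Hol(\chi)$, whose preimage under $\Phi_{h}$ is the null-$\mathcal{C}^{1}$-contractible self-financing class, which by Theorem \ref{holonomy} coincides with the NFLVR class; non-zero Lie-algebra elements produce non-trivial holonomy and hence non-contractible classes whose representatives carry non-zero arbitrage action and are thus strict positive or negative $\beta$-arbitrage strategies.

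The hardest step will be injectivity of $\Phi_{h}$: while the topological analogue for arbitrary $\mathcal{C}^{1}$-curves is classical from flat-bundle theory, restricting the homotopy $\Gamma$ to stay inside the affine subspace of self-financing strategies requires showing that the constraint $\mathcal{D}x_{t} \cdot D_{t} = [x_{t}, D_{t}]$ can be maintained throughout the deformation — a constrained-variation problem that relies essentially on the triviality of $\mathcal{B}$ and the connectedness of $X$. A secondary technical point is justifying that $\exp\colon \LAhol(\chi) \to \Hol(\chi)$ is an isomorphism of abelian groups, not merely a surjection; this reduces to ruling out a non-trivial lattice kernel in the convolution semigroup $G^{*} \subset \mathcal{E}'([0,+\infty[,\mathbf{R})$, which is plausible given its infinite-dimensional vector-space structure but still requires care.
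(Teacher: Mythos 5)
The paper states this theorem without supplying any proof at all --- the text passes directly from the statement to the corollary on counting strategies --- so your proposal cannot be measured against an argument of the author's; it has to stand on its own. Your overall architecture (factor $\phi$ through $\Hol(\chi)$ via the exponential map and a holonomy representation of $\Pi_1(X,D)$) is the natural one and is consistent with the corollary that follows the theorem, where the author composes $\mathrm{Exp}$ with $\phi$. The well-definedness and multiplicativity steps, resting on the homotopy invariance of the stochastic discount factor and the commutativity of $G^*$, are sound.

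The genuine gap is in your injectivity argument, and it is a circularity rather than a missing detail. You argue that if $h_{\gamma_1}=h_{\gamma_2}$ then the loop $\gamma_2\cdot\gamma_1^{-1}$ has zero arbitrage action, and that ``the curvature flux through any spanning surface vanishes, and such a surface provides a $\mathcal{C}^1$-homotopy $\Gamma$.'' But a spanning surface for $\gamma_2\cdot\gamma_1^{-1}$ exists precisely when that loop is already null-homotopic in $X$, which is the conclusion you are trying to reach; Stokes' theorem lets you pass from a known homotopy to a statement about holonomy, not the reverse. Indeed, injectivity of a holonomy representation on the fundamental group is false in general: the trivial flat connection has trivial holonomy on every loop while $\pi_1(X)$ may be arbitrary, so equality of holonomies cannot by itself force homotopy equivalence without some market-specific input that your proposal does not identify (and that the paper does not supply either). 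The two issues you explicitly defer --- propagating the self-financing constraint $\mathcal{D}x_t\cdot D_t=[x_t,D_t]$ through the deformation $\Gamma$, and showing $\exp\colon\LAhol(\chi)\to\Hol(\chi)$ has trivial kernel in the convolution group $G^*$ --- remain open as well, and your surjectivity step glosses over the fact that the projection to $M=X\times[0,+\infty[$ of a closed horizontal curve must return to its initial time, whereas a strategy is by definition parameterized monotonically by time, so ``recasting'' it as a self-financing loop needs an actual construction. As it stands the proposal is a reasonable skeleton but does not close the central step.
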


\noindent This Theorem allows us to count the different
$\mathcal{D}$-admissible self-financing strategies which are
equivalent in terms of arbitrage. In fact the Lie algebra
$\LAhol(\chi)$ is mapped iso- and diffeomeorphically to the holonomy
Lie group $\Hol(\chi)$ by the exponential map. Therefore it follows

\begin{corollary}[\textbf{Number of Different Equivalent Self-Financing Strategies}]
The maps
\begin{equation}
\begin{CD}
\Hol(\chi)@>{\text{Exp}^{-1}}>>\LAhol(\chi)@>{\phi}>>\Pi_1(X,D)
\end{CD}
\end{equation}
are  group isomorphisms and manifold diffeomorphisms. In particular
\begin{equation}\left|\Pi_1(X,D)\right|=\left|\Hol(\chi)\right|.\end{equation}
\end{corollary}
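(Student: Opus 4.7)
The plan is to treat the corollary as an essentially formal consequence of the preceding Holonomic Parametrization theorem together with standard Lie theory applied to the (Abelian) holonomy group. The outline has three routine ingredients and one delicate verification.

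First, I would establish that the exponential map provides a group isomorphism and diffeomorphism between $\LAhol(\chi)$ and $\Hol(\chi)$. The structure group $G^*$ consists of compactly supported distributions on $[0,+\infty[$ under convolution, which is commutative; hence $\Hol(\chi)\subset G^*$ is Abelian, and by the observation already made in the text when computing curvature, the Lie bracket on $\LAhol(\chi)$ vanishes. For a connected Abelian Lie group the exponential map is a surjective homomorphism with discrete kernel, and the Baker--Campbell--Hausdorff formula collapses to simple addition because $[\xi,\eta]=0$, so Exp intertwines the vector-space addition on $\LAhol(\chi)$ with the group operation on $\Hol(\chi)$.

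Second, I would invoke the preceding theorem verbatim, which supplies the group isomorphism $\phi:\LAhol(\chi)\to\Pi_1(X,D)$ and identifies $0$ with the contractible no-free-lunch-with-vanishing-risk class. Composing $\phi$ with the inverse of Exp yields the claimed isomorphism $\Hol(\chi)\to\Pi_1(X,D)$, and the equality of cardinalities $|\Pi_1(X,D)|=|\Hol(\chi)|$ follows at once from any of these bijections.

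The main obstacle sits in the first step: verifying that the kernel of Exp is in fact trivial in this distribution-valued setting, so that Exp is bijective (not merely locally a diffeomorphism). This would fail for an Abelian group with a toroidal factor. I would argue that $G^*\subset \mathcal{E}'([0,+\infty[,\mathbf{R})$ has no nontrivial compact subgroups under convolution: any element $\pi\in G^*$ satisfying $\pi^{*n}=[0]$ for some $n\ge 1$ must, by taking Fourier--Laplace transforms, have $\widehat\pi$ an $n$-th root of unity pointwise, which together with analyticity of $\widehat\pi$ (compactly supported distributions have entire Fourier--Laplace transforms) forces $\widehat\pi\equiv 1$ and hence $\pi=[0]$. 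Consequently every one-parameter subgroup $t\mapsto\mathrm{Exp}(t\xi)$ is injective and Exp is a diffeomorphism onto its image $\Hol(\chi)$. Once this is in hand, the remainder of the proof is a one-line composition of isomorphisms.
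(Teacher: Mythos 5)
Your overall route coincides with the paper's: the text offers no more for this corollary than the assertion that the exponential map identifies $\LAhol(\chi)$ with $\Hol(\chi)$ iso- and diffeomorphically, followed by composition with the isomorphism $\phi$ from the Holonomic Parametrization theorem. So your first two ingredients reproduce the intended argument, and you go further than the paper by trying to actually justify the injectivity of $\mathrm{Exp}$, which is the one nontrivial point.

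That added verification, however, contains a concrete error. The element $-[0]=-\delta$ lies in $G^*$ (it is a compactly supported real distribution and is its own convolution inverse, since $(-\delta)\ast(-\delta)=\delta$), and it is a nontrivial $2$-torsion element. This contradicts your conclusion that $\pi^{\ast n}=[0]$ forces $\pi=[0]$: the Paley--Wiener argument only shows that $\widehat{\pi}$ is an entire function valued in the finite set of $n$-th roots of unity, hence constant, and the reality of $\pi$ only pins that constant down to $\pm 1$; for even $n$ the value $-1$ survives, giving exactly $\pi=-\delta$. The gap is reparable: a nontrivial kernel of $\mathrm{Exp}$ would produce a circle subgroup of $\Hol(\chi)$, and a circle contains torsion of \emph{every} order, in particular of order $3$; for odd $n$ your argument does force the constant to be $1$ and hence $\pi=[0]$, so no circle subgroup exists and $\mathrm{Exp}$ is injective. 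You should restate the lemma as ``$G^*$ contains no elements of odd prime order'' (or ``no circle subgroups'') rather than ``no torsion.'' A further caveat, which applies equally to the paper: $G^*$ is an infinite-dimensional group of distributions under convolution, and the existence and local surjectivity of an exponential map, as well as the manifold structure making these maps diffeomorphisms, are being used without any construction; your proof inherits this unproved foundation rather than closing it.
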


\noindent This Corollary can be rephrased by saying that the
foliation by holonomy leaves of the market principal fiber bundle
$\mathcal{B}$ are in bijective correspondence with the equivalence
classes of $\mathcal{D}$-admissible self-financing arbitrage
strategies with the same start and end points. \par The results seen
so far in this section corroborate the belief that there is a deep
relationship between the market topology and the
no-free-lunch-with-vanishing-risk condition. As a matter of fact we
can complete Theorem \ref{holonomy} to

\begin{theorem}
The following assertions are equivalent:
\begin{itemize}
\item The market model satisfies the no-free-lunch-with-vanishing-risk condition.
\item The market homotopy group is trivial.
\item The market global holonomy group is trivial.
\item There exists a positive martingale and every point in the market has a neighborhood such that the arbitrage action vanishes for all closed strategies lying in that neighborhood.
\end{itemize}
\end{theorem}

The interpretation of this result is that for a market an arbitrage
possibility can only exist, if the market topology is not trivial.
That is true if and only if there are restrictions in the nominal
space acting as a topological obstruction, preventing \textit{every}
$\mathcal{D}$-admissible self-financing closed strategy from
being contractible.

\begin{example}[\textbf{Pension Funds' Market}]\text{}\\
Let us consider a market whose agents are all pension funds in the
world. The asset side of a pension fund is subject to several
regulatory constraints. Beside the usual no short sales constraints,
there are mixed linear constraints limiting the part of allocation
in  specific currencies, in regional markets, in the fixed-income or
equity market and so on. These restrictions translate into
hyperplanes cutting the nominal space $X$ and the market $M$. A
 situation as in Figure \ref{PF} can occur, where the pension fund
restrictions determine a ''hole'', that is, a non-trivial first
fundamental group: the strategy $\alpha$ is contractible but
strategy $\gamma$ not.
\end{example}
\begin{figure}[h!]
\centering
\includegraphics[width = 12.5cm, angle=-90]{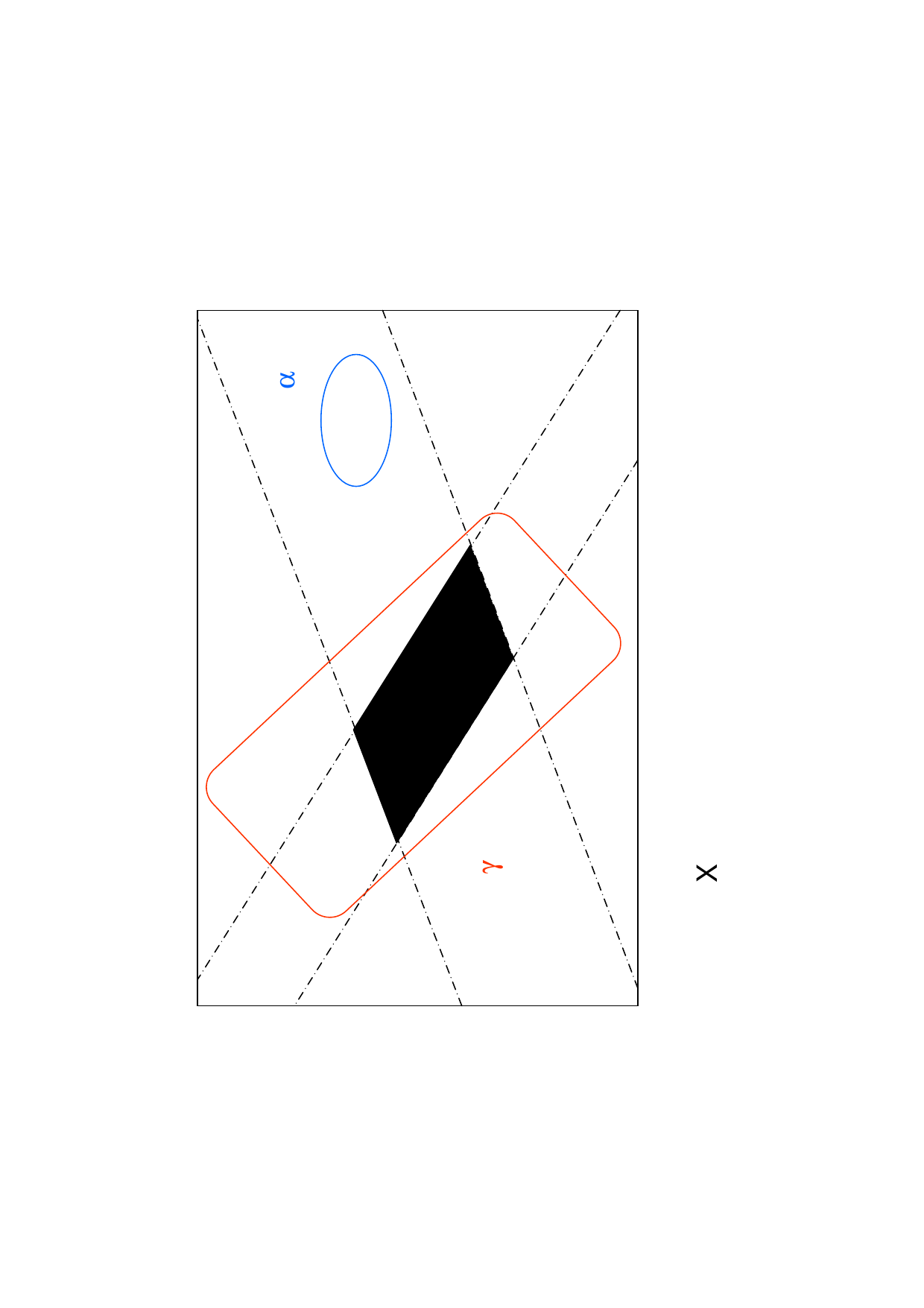}
\caption{Pension Funds' Market}\label{PF}
\end{figure}

\section{Lagrangian Theory of a Closed Market}\label{LagrangianTheory}

\subsection{Hamilton Principle and Lagrange Equations}\label{Hamilton}
We will now enforce the minimal arbitrage principle $(A4)$ for the
market portfolio strategy to derive the market dynamics for
deflators and short rates. There exists an interaction between
market portfolio and market dynamics: the market portfolio
allocation is determined by the choices of all market participants
and influence therefore the asset values. These -on their part- will
influence the choices of the market participants and therefore the
market portfolio. Everything happens simultaneously in time.
Summarizing:
\begin{equation}
\boxed{
\begin{array}{c}
  \text{Market dynamics for deflators and short rates } (D,r)=(D_t,r_t) \\
  \upharpoonleft\downharpoonright\\
  \text{Market Portfolio Strategy } x=x_t
\end{array}
}
\end{equation}

\noindent If the market is closed, that is, if there is no external
leverage, the market strategy must be self-financing. We denote by
$x_t$ the dynamics of the market portfolio and by $(D_t,r_t)$ that
of the deflators and short rates. By principle $(A4)$ these dynamics
are characterized by the fact that they must be a. s. minimizer of
the arbitrage action on the set of all self-financing strategies
which are candidates for the market portfolio. Remark that we do not
make any assumptions whether the market portfolio strategy allows
for arbitrage or not, we just assume that it is a minimizer.
The problem is intrinsically stochastic, but to tackle its solution by leveraging
on the techniques developed by Cresson and Darses (\cite{CrDa07}), we first analyze
the deterministic case in order to later construct a solution for the stochastic case.
To perform formal calculations we first introduce the Hamilton-Lagrange formalism of
classical mechanics and follow Chapter $3$ of the beautiful
\cite{Ar89}. We first treat the deterministic case and  embed the
 market portfolio strategy, deflators and short
rate dynamics into one parameter family strategies, deflators
and short rates. Then, we pass to stochastic processes following the results of \cite{CrDa07},
that we extend here to account for constraints to which the Lagrangian system is subject to.

\begin{definition}
Let $\gamma$ be the market $\mathcal{D}$-admissible strategy,
 and $\delta\gamma$, $\delta D$,
$\delta r$ be perturbations of the market strategy,
deflators' and short rates' dynamics. The \textbf{variation} of
$(\gamma,D,r)$ with respect to the given perturbations is the
following one parameter family:

\begin{equation}\epsilon\longmapsto(\gamma^\epsilon,D^\epsilon,r^\epsilon):=(\gamma,D,r)+ \epsilon(\delta \gamma,\delta D,\delta r).\end{equation}
Thereby, the parameter $\epsilon$ belongs to some open neighborhood
of $0\in\mathbf{R}$. The \textbf{arbitrage action} with respect to a positive martingale $\beta$ can be
consistently defined by
\begin{equation}
\begin{split}
A^{\beta}(\gamma;D,r)&:=\int_{\gamma}dt\BR{\mathcal{D}\rho^{\beta}+\dive_xJ}=\\
&=\int_{\gamma}dt\BR{\mathcal{D}\log(\beta_tD_t^{x_t})+r_t^x}=\\
&=\int_0^Tdt\frac{x_t\cdot \mathcal{D}D_t+x_t\cdot(r_tD_t)}{x_t\cdot D_t}+\log\frac{\beta_T}{\beta_0}.
\end{split}
\end{equation} and the first
variation of the arbitrage action as
\begin{equation}\delta A^{\beta}(\gamma;D,r):=\frac{d}{d\epsilon}A^{\beta}(\gamma^{\epsilon};D^{\epsilon},r^{\epsilon})\mid_{\epsilon:=0}.\end{equation}
\end{definition}
\noindent This leads to the following
\begin{definition}
Let us introduce the notation $q:=(x,D,r)$ and
$q^{\prime}:=(x^{\prime},D^{\prime},r^{\prime})$ for two vectors in
$\mathbf{R}^{3N}$. The \textbf{Lagrangian (or Lagrange function)} is
defined as
\begin{equation}L(q,q^{\prime}):=L(x,D,r,x^{\prime},D^{\prime},r^{\prime}):=\frac{x\cdot (D^{\prime}+ rD)}{x\cdot D}.\end{equation}

\end{definition}

\begin{lemma}
The arbitrage action for a self-financing
strategy $\gamma$ is the integral of the Lagrange function along the
$\mathcal{D}$-admissible strategy:
\begin{equation}
\begin{split}
A^{\beta}(\gamma;D,r)&=\int_{\gamma}dt\,L(q_t,q_t^{\prime})+\log\frac{\beta_1}{\beta_0}=\\
&=\int_{\gamma}dt\,L(x_t,D_t,r_t,x_t^{\prime},D_t^{\prime},r_t^{\prime})+\log\frac{\beta_1}{\beta_0}.
\end{split}
\end{equation}
\end{lemma}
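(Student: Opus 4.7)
The plan is to start from the arc-length form of the arbitrage action already recorded in the definition,
$$A^{\beta}(\gamma;D,r) = \int_0^1 dt\,\BR{\mathcal{D}\log(\beta_t D_t^{x_t}) + r_t^{x_t}},$$
and rewrite the integrand so that it equals $L(q_t,q_t')$ up to an exact differential in $t$. First I would split the logarithm as $\mathcal{D}\log(\beta_t D_t^{x_t}) = \mathcal{D}\log\beta_t + \mathcal{D}\log D_t^{x_t}$. Because $\beta$ is a positive semimartingale, the fundamental-theorem-of-calculus property of Nelson's Stratonovich derivative (see the setup in subsection \ref{Derivatives}) gives $\int_0^1 dt\,\mathcal{D}\log\beta_t = \log\beta_1-\log\beta_0$, which is exactly the boundary term $\log(\beta_1/\beta_0)$ we need to isolate.

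Next I would substitute the defining relations $D_t^{x_t} = x_t\cdot D_t$ and $r_t^{x_t} = (x_t\cdot r_tD_t)/(x_t\cdot D_t)$ from the market-bundle setup. For the remaining term $\mathcal{D}\log(x_t\cdot D_t) = \mathcal{D}(x_t\cdot D_t)/(x_t\cdot D_t)$ I would invoke the $\mathcal{C}^1$-self-financing condition, which in the Stratonovich formulation reduces the numerator to $x_t\cdot\mathcal{D}D_t$ (the $\mathcal{D}x\cdot D$ piece being absorbed by the bracket correction $[x,D]$ built into the self-financing definition). Writing $D_t' := \mathcal{D}D_t$ this yields
$$\mathcal{D}\log D_t^{x_t} + r_t^{x_t} = \frac{x_t\cdot D_t' + x_t\cdot r_tD_t}{x_t\cdot D_t} = \frac{x_t\cdot(D_t'+r_tD_t)}{x_t\cdot D_t}.$$

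Finally I would restore the explicit arc-length factor. In arc-length parameterization one has $|x_t'|\equiv 1$, so the right-hand side is literally $L(q_t,q_t')$; under a generic parameterization the change-of-variables formula $ds = |\mathcal{D}x(t)|\,dt = |x'|\,dt$ together with the definition $A^{\beta}(\gamma;D,r) = \int_\gamma ds\{\mathcal{D}\rho^{\beta}+\dive_x J\}$ absorbs $|x'|$ into $L$ exactly as written. Assembling the three contributions gives the claimed identity, and the fact that the arbitrage action is a $\mathcal{C}^1$-homotopy invariant (already established in the preceding corollary) certifies that evaluating on any representative of $[\gamma]$ yields the same number, so writing $A^{\beta}([\gamma];D,r) := A^{\beta}(\gamma;D,r)$ is consistent.

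The main obstacle, and the only place where care is needed, is the second step: one has to use the self-financing condition in exactly the form $\mathcal{D}(x_t\cdot D_t) = x_t\cdot\mathcal{D}D_t + [x_t,D_t]$ together with the Stratonovich product rule to conclude that the $\mathcal{D}x$-contribution cancels and only $x\cdot\mathcal{D}D$ survives in $\mathcal{D}\log(x\cdot D)$; without this identification the Lagrangian would inherit an extra quadratic-covariation term and would not match the density $\mathcal{D}\rho^{\beta}+\dive_x J$. Everything else is bookkeeping.
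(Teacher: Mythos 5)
The paper states this lemma with no proof at all (it is presented as immediate from the definitions), so there is nothing to compare against line by line. Your reconstruction follows the only natural route -- expand the definition of $A^{\beta}$, peel off $\int_0^1 dt\,\mathcal{D}\log\beta_t=\log(\beta_1/\beta_0)$, substitute $D_t^{x_t}=x_t\cdot D_t$ and $r_t^{x_t}=x_t\cdot(r_tD_t)/(x_t\cdot D_t)$, and absorb $\left|\mathcal{D}x_t\right|$ via the arc-length convention -- and all of that bookkeeping is correct.

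The one step that does not go through as you state it is precisely the one you single out as delicate. The $\mathcal{C}^1$-self-financing condition is $\mathcal{D}(x_t\cdot D_t)=x_t\cdot\mathcal{D}D_t+[x_t,D_t]$, equivalently $\mathcal{D}x_t\cdot D_t=[x_t,D_t]$; it does \emph{not} say that the $\mathcal{D}x_t\cdot D_t$ contribution vanishes, only that it equals the covariation rate $[x_t,D_t]$. Hence
\begin{equation*}
\mathcal{D}\log(x_t\cdot D_t)+r_t^{x_t}=\frac{x_t\cdot(\mathcal{D}D_t+r_tD_t)}{x_t\cdot D_t}+\frac{[x_t,D_t]}{x_t\cdot D_t},
\end{equation*}
so your argument, carried out literally, establishes
\begin{equation*}
A^{\beta}(\gamma;D,r)=\int_{\gamma}dt\,L(q_t,q_t^{\prime})+\log\frac{\beta_1}{\beta_0}+\int_{\gamma}dt\,\frac{[x_t,D_t]}{x_t\cdot D_t}.
\end{equation*}
The extra term disappears only under the additional hypothesis $[x_t,D_t]=0$, which holds trivially in the deterministic case (where the constraint degenerates to $x_t^{\prime}\cdot D_t=0$) and is imposed explicitly later in the paper through the requirement $[\delta x_t,\delta D_t]=0$ on the admissible perturbations, but it is not a consequence of $\mathcal{C}^1$-self-financing alone: diffusions lie in $\mathcal{C}^1$ and generically have nonvanishing covariation. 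You should either add that hypothesis, or instead read the $\mathcal{D}$ in the action density $\mathcal{D}\rho^{\beta}+\dive_xJ$ as the partial time derivative at frozen nominal $x$ (in which case $\mathcal{D}\rho^{\beta}=\mathcal{D}\log\beta_t+x_t\cdot\mathcal{D}D_t/(x_t\cdot D_t)$ directly, no self-financing condition is needed, and no bracket ever appears). As written, your proof asserts the cancellation of a term that the self-financing identity does not cancel.
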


\noindent A fundamental result of classical mechanics allows to
compute the extrema of the arbitrage action in the
\textit{deterministic} case as the solution of a system of ordinary
differential equations.

\begin{theorem}[\textbf{Hamilton Principle}]\label{ham}
Let us denote the derivative with respect to time as
$\frac{d}{dt}=:\prime$ and assume that all quantities observed are
deterministic. The local extrema of the arbitrage action satisfy the
Lagrange equations under the self-financing constraints
\begin{equation}
{\begin{split}
&{\delta A^{\beta}(\gamma;D,r)=0\text{ for all } (\delta\gamma,\delta D, \delta r)}\\
&\text{such that } {x_t^{\prime}}^{\epsilon}\cdot D^{\epsilon }_t=0 \text{ for all }\epsilon
\end{split}} \Leftrightarrow \left\{\begin{array}{ll}
                                      \frac{d}{dt}\frac{\partial L}{\partial x^{\prime}}-\frac{\partial L}{\partial x}=+2\lambda D^{\prime} \\
                                      \frac{d}{dt}\frac{\partial L}{\partial D^{\prime}}-\frac{\partial L}{\partial D}=-2\lambda x^{\prime} \\
                                      \frac{d}{dt}\frac{\partial L}{\partial r^{\prime}}-\frac{\partial L}{\partial r}=0 \\
                                      x^{\prime}\cdot D=0,
                                     \end{array}
                                    \right.\\
\end{equation}
\noindent where $\lambda\in\mathbf{R}$ denotes the constraint Lagrange multiplier.
\end{theorem}

\begin{remark}
By Corollary \ref{corHom}, if the variation does not include deflators and short rates, the arbitrage action is constant on every homotopic equivalence
class. Since we are looking for the optimal market strategy and asset market dynamics which minimizes the arbitrage, we have to vary over $\gamma$, $D$ and $r$ at the same time. This means that the integral of Lagrangian function takes values over a continuum and not over a discrete set as in the case of a fixed dynamics $D$ and $r$.
\end{remark}
\subsection{Stochastic Lagrangian Systems}\label{SLS} In this
subsection we briefly summarize and extend those contents of Cresson and Darses
(\cite{CrDa07}) needed in this paper. Cresson and Darses follow
previous works of Yasue (\cite{Ya81}) and Nelson (\cite{Ne01}).
\begin{defi}
Let $L=L(q,q^{\prime})$ be the Lagrange function of a deterministic
Lagrangian system with the non holonomic constraint $C(q,q^{\prime})=0$. Setting $L_{\lambda}:=L-\lambda C$ for the constraint Lagrange multipliers the dynamics is given by the extended Euler-Lagrange
equations
\begin{equation}
\boxed{\text{(EL)}\quad\left\{
         \begin{array}{l}
          \frac{d}{dt}\frac{\partial L_{\lambda}}{\partial q^\prime}(q,q^{\prime})-\frac{\partial L_{\lambda}}{\partial q}(q,q^{\prime})=0\\
          C(q,q^{\prime})=0
         \end{array}
       \right.
}
\end{equation}
meaning by this that the deterministic solution $q=q_t$ and $\lambda\in\mathbf{R}$ satisfy the constraint and
\begin{equation}\frac{d}{dt}\frac{\partial L_{\lambda}}{\partial q^\prime}\br{q_t, \frac{dq_t}{dt}}-\frac{\partial L_{\lambda}}{\partial
q}\br{q_t, \frac{dq_t}{dt}}=0.\end{equation} The formal
\textbf{stochastic embedding of the Euler-Lagrange equations} is
obtained by the formal substitution
\begin{equation}S:\frac{d}{dt}\longmapsto\mathcal{D},\end{equation}
and allowing the coordinates of the tangent bundle to be stochastic
\begin{equation}
\boxed{
\text{(SEL)}\quad\left\{
  \begin{array}{l}
    \mathcal{D}\frac{\partial L_{\lambda}}{\partial q^\prime}(q,q^{\prime})-\frac{\partial L_{\lambda}}{\partial q}(q,q^{\prime})=0\\
    C(q,q^{\prime})=0
  \end{array}
\right.
}
\end{equation}
meaning by this that the stochastic solution $Q=Q_t$ and the stochastic process $\lambda=\lambda_t$ satisfy the constraint and
\begin{equation}\mathcal{D}\frac{\partial L_{\lambda}}{\partial q^\prime}\br{Q_t, \mathcal{D}Q_t}-\frac{\partial L_{\lambda}}{\partial
q}\br{Q_t, \mathcal{D}Q_t}=0.\end{equation}
\end{defi}

\begin{defi}
Let $L=L(q,q^{\prime})$ be the Lagrange function of a deterministic
Lagrangian system on a time interval $I$ with constraint $C=0$. Set
\begin{equation}\Xi:=\BR{Q\in\mathcal{C}^1(I)\mid\mathbb{E}\Br{\int_I|L_{\lambda}(Q_t,\mathcal{D}Q_t)|dt}<+\infty}.\end{equation}
The action functional associated to $L_{\lambda}$ defined by
\begin{equation}
\boxed{
\begin{aligned}
&F: &\Xi\longrightarrow &\mathbf{R}\\
& &Q\longmapsto&\mathbb{E}\Br{\int_IL_{\lambda}(Q_t,\mathcal{D}Q_t)dt}
\end{aligned}
}
\end{equation}
 is called \textbf{stochastic analogue of the classic action} under the constraint $C=0$.
\end{defi}
For a sufficiently smooth extended Lagrangian $L_{\lambda}$ a necessary and sufficient
condition for a stochastic process to be a critical point of the
action functional $F$ is the fulfillment of the stochastic
Euler-Lagrange equations (SEL): see Theorem 7.1 page 54 in
\cite{CrDa07}. Moreover we have the following
\begin{lemma}[\textbf{Coherence}]
The following diagram commutes
\begin{equation}
\boxed{
    \xymatrix{
        L_{\lambda}(q_t,q_t^{\prime}) \ar[r]^{S} \ar[d]_{\text{Critical Action Principle}} & L_{\lambda}(Q_t,\mathcal{D}Q_t) \ar[d]^{\text{Stochastic Critical Action Principle}} \\
        (EL) \ar[r]_{S}       & (SEL) }
}
\end{equation}

\end{lemma}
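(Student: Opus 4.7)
The plan is to verify commutativity by traversing the diagram along both paths from the upper-left corner $L(q_t,q_t')$ to the lower-right corner (SEL) and checking that they produce the same stochastic differential equation evaluated on a process $Q_t \in \mathcal{C}^1(I)$.

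First I would traverse the right-then-down path. Apply the stochastic embedding $S$ to $L(q_t,q_t')$: by definition $S$ replaces the deterministic curve $q_t$ by the process $Q_t$ and its time derivative $q_t'$ by Nelson's mean derivative $\mathcal{D}Q_t$, so the top arrow produces $L(Q_t,\mathcal{D}Q_t)$. The integrability hypothesis $Q\in\Xi$ ensures that this integrand defines the functional $F[Q]=\mathbb{E}[\int_I L(Q_t,\mathcal{D}Q_t)\,dt]$. Applying the Stochastic Critical Action Principle, i.e.\ Theorem~7.1 of \cite{CrDa07}, the critical points of $F$ are exactly the processes $Q\in\Xi$ satisfying (SEL): $\mathcal{D}\frac{\partial L}{\partial q'}(Q_t,\mathcal{D}Q_t)-\frac{\partial L}{\partial q}(Q_t,\mathcal{D}Q_t)=0$.

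Second I would traverse the down-then-right path. The classical Critical Action Principle applied to the deterministic action $\int_I L(q_t,q_t')\,dt$ yields, by standard variational calculus, the Euler--Lagrange system (EL): $\frac{d}{dt}\frac{\partial L}{\partial q'}(q_t,q_t')-\frac{\partial L}{\partial q}(q_t,q_t')=0$. Now apply $S$: by definition the only change is the formal substitution $\frac{d}{dt}\mapsto\mathcal{D}$ together with the relaxation of tangent-bundle coordinates to stochastic ones via $q_t\mapsto Q_t$, $q_t'\mapsto\mathcal{D}Q_t$. Crucially, the fibre derivatives $\partial/\partial q$ and $\partial/\partial q'$ act on the finite-dimensional variables of the Lagrangian and commute with $S$, since $S$ only intervenes on the time variable and on the evaluation map. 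The result of the bottom arrow is therefore exactly $\mathcal{D}\frac{\partial L}{\partial q'}(Q_t,\mathcal{D}Q_t)-\frac{\partial L}{\partial q}(Q_t,\mathcal{D}Q_t)=0$, which is (SEL). Comparing the two outputs establishes commutativity.

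The main obstacle is not conceptual but verificational: one must ensure the hypotheses of \cite{CrDa07} Theorem~7.1 are met so that the right-then-down path actually delivers (SEL). Concretely, one needs $L$ to be $\mathcal{C}^2$ in its arguments, admissible variations $H\in\mathcal{C}^1(I)$ vanishing at the endpoints so that $Q+\epsilon H\in\Xi$ for small $\epsilon$, and the stochastic integration-by-parts formula linking $D$ and $D_*$ under expectation, which is what makes boundary terms cancel and forces the Euler--Lagrange equation for $\mathcal{D}=(D+D_*)/2$ rather than a non-symmetric variant. Once these conditions are documented, the commutativity is essentially a matter of matching symbols, because $S$ is defined precisely so as to send (EL) to (SEL).
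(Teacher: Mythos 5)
Your proposal is correct and follows essentially the same route as the paper, which in fact offers no independent proof of this lemma: it defers entirely to Theorem 7.1 of \cite{CrDa07} for the stochastic critical action principle and lets the commutativity follow from the definition of the substitution $S$, exactly as in your two-path traversal. Your closing remarks on the hypotheses needed for the Cresson--Darses theorem (smoothness of $L$, admissible variations vanishing at the endpoints, and the $D$/$D_*$ integration-by-parts identity that forces the symmetric derivative $\mathcal{D}=(D+D_*)/2$) correctly identify where the real mathematical content of the coherence statement lies.
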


\subsection{Arbitrage Dynamics For Deflators, Short Rates And Market
Portfolio}\label{arbEq}
 By means of the stochastization procedure
(see subsection \ref{SLS}) we can extend Theorem \ref{ham} to the
stochastic case:

\begin{theorem}[\textbf{Stochastic Hamilton Principle}]\label{SHP}
Let all quantities observed be stochastic and denote Nelson's
stochastic derivative with respect to time as $\mathcal{D}$. The
local extrema of the expected arbitrage action satisfy the Lagrange
equations under the self-financing constraints
\begin{equation}\label{systemSHE}
{\begin{split}
&\delta \mathbb{E}_0[A^{\beta}(\gamma,D,r)]=0\text{ for all } (\delta\gamma,\delta D, \delta r)\\
&\text{such that } \mathcal{D}x_t^{\epsilon}\cdot D^{\epsilon}_t= -\frac{1}{2}\mathfrak{D}_*\angles{x^{\epsilon},D^{\epsilon}}_t\text{ for all }\epsilon
\end{split}} \Leftrightarrow \left\{\begin{array}{ll}
                                      \mathcal{D}\frac{\partial L}{\partial x^{\prime}}-\frac{\partial L}{\partial x}=+\mathcal{D}(\lambda D) \\
                                      \mathcal{D}\frac{\partial L}{\partial D^{\prime}}-\frac{\partial L}{\partial D}=-\lambda \mathcal{D}x \\
                                      \mathcal{D}\frac{\partial L}{\partial r^{\prime}}-\frac{\partial L}{\partial r}=0 \\
                                      (\mathcal{D}x)\cdot D=-\frac{1}{2}\mathfrak{D}_*{\angles{x,D}},
                                     \end{array}
                                    \right.\\
\end{equation}
almost surely.
\end{theorem}

\noindent Before we tackle the problem of solving the stochastic
Euler-Lagrange equations,  we remark that they satisfy the gauge
invariance principle $(A3)$. As a matter of fact the Lagrange
function definition is invariant with respect to a coordinate change
in the tangent bundle  $T\Upsilon$, where $\Upsilon:=j^*(\mathcal{B})$ is the pullback of the market bundle with respect to the embedding
$j:X\rightarrow M,x\mapsto(x,0)$ (i.e. the market bundle without time component). In other words we can write
\begin{equation}L:T\Upsilon\rightarrow\mathbf{R}.\end{equation}
We will first solve the deterministic Euler-Lagrange equations and
then construct a stochastic solution by adding appropriate
perturbations with zero mean. More exactly, we write the stochastic
optimal solution as the sum of the deterministic one and a zero mean
perturbation $\delta x,\delta D, \delta r\in\mathcal{C}^1$ (see
Subsection \ref{SLS}) satisfying the conditions given by
(\ref{cond}).
\begin{equation}\label{def}
\begin{split}
x_t&=\mathbb{E}_0[x_t] + \delta x_t\\
D_t&=\mathbb{E}_0[D_t] + \delta D_t\\
r_t&=\mathbb{E}_0[r_t] + \delta r_t,
\end{split}
\end{equation}
whereas
\begin{equation}\label{cond}
\begin{matrix}
&\mathbb{E}_0[\delta x_t]=0, &\mathcal{D}\delta x_t=0, &\delta x_t\cdot \delta D_t=0,&\mathbb{E}_0[x_t]\cdot \delta r_t\delta D_t=0\\
&\mathbb{E}_0[\delta D_t]=0, &\mathcal{D}\delta D_t=0, &\mathbb{E}_0[x_t] \cdot \delta D_t=0,&(\mathbb{E}_0[r_t]\mathbb{E}_0[D_t])\cdot \delta x_t=0\\
&\mathbb{E}_0[\delta r_t]=0, &\mathcal{D}\delta r_t=0,
&\mathbb{E}_0[D_t]\cdot \delta x_t=0,&\delta x_t\cdot (\delta
r_t\delta D_t)=0\\
&\angles{\delta x_t,\delta D_t}=0. & & &
\end{matrix}
\end{equation}

\noindent We remark the Lagrange function satisfies
$L(q,q^\prime)=L(\mathbb{E}_0[q],\mathbb{E}_0[q^\prime])$ for any
$q=(x,D,r)$ satisfying conditions (\ref{def}) and(\ref{cond}).
Now we will compute for the arbitrage case the deterministic
solution of the Lagrange equations under the self-financing
constraints, which explicitly written out read
\begin{equation}\label{odesys}
\left\{
  \begin{split}
    &\left[\lambda_t(x_t\cdot D_t)^2+(x_t\cdot D_t)\right]D_t^\prime-\left[x_t\cdot D_t^{\prime}+x_t\cdot(r_tD_t)-\lambda_t^{\prime}(x_t\cdot D_t)^2\right]D_t+\\
    &\quad+(x_t\cdot D_t)(r_tD_t)=0\\
    &\left[\lambda_t(x_t\cdot D_t)^2-(x_t\cdot D_t)\right]x_t^{\prime}+\left[(x_t^\prime\cdot D_t)-x_t\cdot(r_tD_t)\right]x_t+\\
    &\quad+(x_t\cdot D_t)(x_tr_t)=0\\
    &x_tD_t=0\qquad\text{(this equation disappears if }r=0) \\
    &x_t^{\prime}\cdot D_t=0.
  \end{split}
\right.
\end{equation}
This system of ODE can be solved and we obtain the following result in the case of the classic model model without futures.
\begin{theorem}[\textbf{Arbitrage Market Dynamics}]\label{arbitrageDynamics}
Let $r\equiv0$. Then, the minimal arbitrage dynamics reads
\begin{equation}
\left\{
\begin{split}
x_t&=x_0+\delta x_t\\
D_t&=D_0+\delta D_t\\
\end{split}
\right.
\end{equation}
where $\delta x, \delta D$ and $\delta r$ are processes satisfying
condition (\ref{cond}). In particular, the expectationat time $0$ of the portfolio market nominals and the asset deflators are constant over time and equal to their initial values.
\end{theorem}
\begin{proof}
In the case $r\equiv0$ the system of ODEs (\ref{odesys}) becomes
\begin{equation}\label{odesys2}
\left\{
  \begin{split}
    &\left[\lambda_t(x_t\cdot D_t)^2+(x_t\cdot D_t)\right]D_t^\prime-\left[x_t\cdot D_t^{\prime}-\lambda_t^{\prime}(x_t\cdot D_t)^2\right]D_t=0\\
    &\left[\lambda_t(x_t\cdot D_t)^2-(x_t\cdot D_t)\right]x_t^{\prime}+\left[(x_t^\prime\cdot D_t)\right]x_t=0\\
    &x_t^{\prime}\cdot D_t=0.
  \end{split}
\right.
\end{equation}
It is a system of $2N+1$ first order ODEs in $2N+1$ unknown real values functions $(x,D,\lambda)$ of time.
We see that
\begin{equation}
\left\{
\begin{split}
x_t&\equiv x_0\\
D_t&\equiv D_0\\
\lambda_t&\equiv \lambda_0
\end{split}
\right.
\end{equation}
Since (\ref{odesys2}) is not a DAE system, by the Picard-Lindel\"of theorem we conclude that this solution is unique.
\end{proof}

\subsection{Symmetries, Conservation Laws and No-Arbitrage
Dynamics}\label{noarbitrage} In this subsection we continue the
study of the analogies between finance and classical mechanics with
the Hamilton-Lagrange formalism and introduce the concept of
symmetries of the asset model described by a Lagrangian on the
principle fibre bundle. By Chapter $4$ of \cite{Ar89} we will derive
from symmetries conservation laws which hold true in the general
case of an asset model allowing arbitrage or not.  The short rates
follow by the no-arbitrage principle by deriving the logarithm of
the solution for deflators.
\begin{defi}[\textbf{Market Symmetry}]
A bundle map $h:\Upsilon\rightarrow\Upsilon$ is called a
 symmetry of the market described by $(\Upsilon,L_\lambda)$ if and only if there exists a real $\tilde{\lambda}$ such that
\begin{equation}
L_{\lambda}(T_bh.B)=L_{\tilde{\lambda}}(B)\quad\text{ for all }B\in T_b\Upsilon\,,\text{ for all } b\in\Upsilon.\end{equation}
\end{defi}
\begin{example}[\textbf{Market Symmetries}]\text{}\\
We represent term structures by their short rate, so that
$b=(x,D,r)$ and $h=h(x,D,r)$.
\begin{itemize}
\item \textbf{Rotation:} $h(x,D,r):=[Sx;SD;S(rD)/(rD)]$, where $S$ is a
 orthogonal matrix, i.e. $S\in O(N)$. The division of two vectors is meant componentwise.
\item \textbf{Nominal Dilation:} $h(x,D,r):=[sx;D;r]$, where
$s$ is a non vanishing real number, i.e. $s\in \mathbf{R}^*$.
\item \textbf{Deflator Dilation:} $h(x,D,r):=[x; sD;r]$, where
$s$ is a non vanishing real number, i.e. $s\in \mathbf{R}^*$.
\end{itemize}
These examples all fulfill the definition of market symmetry, as one
can see from the Lagrange function
\begin{equation}L_{\lambda}(\underbrace{x,D,r}_{=q},\underbrace{x^{\prime},D^{\prime},r^{\prime}}_{=q^{\prime}})=\frac{x\cdot (D^{\prime}+ rD)}{x\cdot D}-\lambda(x^{\prime}\cdot D).\end{equation}
\end{example}

\noindent The connection between symmetries and conservation laws in
classical mechanics can be restated for our market model.
\begin{theorem}[\textbf{Noether}]\label{DetNoet}
Assume that all quantities observed are deterministic. Let
$\{h_\epsilon\}_{\epsilon}$ be a one-parameter group of market
symmetries $h_\epsilon:\Upsilon\rightarrow\Upsilon$ for the
market described by $(\Upsilon,L_{\lambda})$. Then, the dynamics of market
the portfolio, deflators and short rates have a first integral
$I:T\Upsilon\rightarrow\mathbf{R}$. This means that there is a
function, which writes
\begin{equation}I(q,q^{\prime})=\frac{\partial L_{\lambda}}{\partial q^ {\prime}}\frac{d h_\epsilon(q)}{d \epsilon}\mid_{\epsilon=0},\end{equation}
such that
\begin{equation}\frac{d}{dt}I(q_t,q^{\prime}_t)=0\end{equation}
where $q=q_t$ is the solution of the \textit{deterministic}
Euler-Lagrange equations.
\end{theorem}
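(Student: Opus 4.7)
The plan is to adapt the classical Noether argument from Lagrangian mechanics to the market fibre bundle $\mathcal{B}$. Denote by $\xi$ the infinitesimal generator of the one-parameter symmetry group, namely
\[
\xi(q) := \left.\frac{d}{d\epsilon} h_\epsilon(q)\right|_{\epsilon=0}, \qquad q\in\mathcal{B},
\]
so that the tangent map acts on $(q,q')\in T\mathcal{B}$ by $Th_\epsilon.(q,q')=(h_\epsilon(q),\,Dh_\epsilon(q)\,q')$. Translating the market-symmetry hypothesis $L\circ Th_\epsilon = L$ into coordinates gives the pointwise identity
\[
L\bigl(h_\epsilon(q),\,Dh_\epsilon(q)\,q'\bigr) = L(q,q') \qquad \text{for all } \epsilon \text{ and all } (q,q').
\]

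First I would differentiate this invariance identity in $\epsilon$ at $\epsilon=0$ via the chain rule, producing
\[
\frac{\partial L}{\partial q}(q,q') \cdot \xi(q) + \frac{\partial L}{\partial q'}(q,q') \cdot \bigl(D\xi(q)\,q'\bigr) = 0.
\]
Next I would specialize to a solution $q_t$ of the deterministic Euler--Lagrange equations of Theorem \ref{ham}, setting $q'_t=\frac{dq_t}{dt}$. The crucial observation is that along such a curve $D\xi(q_t)\,q'_t = \frac{d}{dt}\xi(q_t)$, so the previous identity becomes
\[
\frac{\partial L}{\partial q}(q_t,q'_t) \cdot \xi(q_t) + \frac{\partial L}{\partial q'}(q_t,q'_t) \cdot \frac{d}{dt}\xi(q_t) = 0.
\]
Substituting the Euler--Lagrange equation $\frac{\partial L}{\partial q}=\frac{d}{dt}\frac{\partial L}{\partial q'}$ into the first summand reveals the two terms as a Leibniz expansion, so they collapse to
\[
\frac{d}{dt}\left[\frac{\partial L}{\partial q'}(q_t,q'_t) \cdot \xi(q_t)\right] = 0,
\]
which is exactly $\frac{d}{dt}I(q_t,q'_t)=0$ for the candidate first integral in the statement.

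The derivation is otherwise routine, so the only subtle point --- more a compatibility requirement than an obstacle --- is the presence of the self-financing constraint $x'\cdot D = 0$ alongside the Euler--Lagrange system in Theorem \ref{ham}. The argument uses only the unconstrained Euler--Lagrange equations; for it to transfer verbatim to the constrained dynamics one needs $\xi$ to be tangent to the self-financing submanifold of $T\mathcal{B}$, equivalently that each $h_\epsilon$ maps self-financing strategies to self-financing strategies. This is straightforward to verify for each of the listed examples (rotations in $O(N)$, nominal and deflator translations, deflator dilations), since the bilinear form $x'\cdot D$ transforms covariantly under all of them.
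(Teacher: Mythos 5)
Your argument is correct and is precisely the classical Noether computation (differentiate the invariance identity in $\epsilon$, substitute the Euler--Lagrange equations, recognize the Leibniz expansion), which is exactly what the paper relies on: it gives no written proof and simply defers to Chapter 4 of Arnold. Your closing remark about checking that the symmetry generator is tangent to the self-financing constraint set is a worthwhile addition that the paper leaves implicit.
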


By means of the stochastization procedure as explained in
Subsection \ref{SLS}, we can extend the preceding Theorem to the
stochastic case (see \cite{CrDa07} page 60):

\begin{theorem}[\textbf{Stochastic Version of Noether's Result}]\label{StochNoet}
Let all quantities observed be stochastic and
$\{h_\epsilon\}_{\epsilon}$ be a one-parameter group of market
symmetries $h_\epsilon:\Upsilon\rightarrow\Upsilon$ for the
market described by $(\Upsilon,L_{\lambda})$. Then, the dynamics of market
the portfolio, deflators and short rates have a first integral
$I:T\Upsilon\rightarrow\mathbf{R}$. This means that there is a
function, which writes
\begin{equation}I(q,q^{\prime})=\mathbb{E}_0\Br{\frac{\partial L_{\lambda}}{\partial q^ {\prime}}\frac{d h_\epsilon(q)}{d \epsilon}\mid_{\epsilon=0}},\end{equation}
such that
\begin{equation}\frac{d}{dt}I(Q_t,\mathcal{D}Q_t)=0\end{equation}
where $Q=Q_t$ is the solution of the \textit{stochastic}
Euler-Lagrange equations.
\end{theorem}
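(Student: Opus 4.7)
The plan is to imitate the proof of the classical (deterministic) Noether theorem stated immediately above, but at every step replace the ordinary time derivative $d/dt$ by Nelson's derivative $\mathcal{D}$ and invoke the stochastization/coherence machinery of Cresson--Darses that is summarized in subsection \ref{SLS}. The structural identity being exploited is that both sides of the classical Noether conservation law are built from (i) the symmetry of the Lagrangian under $h_\epsilon$ and (ii) the Euler--Lagrange equation; the coherence diagram in the Coherence Lemma tells us that substituting $\mathcal{D}$ for $d/dt$ takes (EL) to (SEL), so the same algebraic identity will still close after the substitution, provided one is careful about product rules and about where expectations are taken.

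First I would differentiate the symmetry relation. The invariance $L(Th_\epsilon(b))=L(b)$ for all $b\in T\mathcal{B}$, differentiated at $\epsilon=0$, yields the pointwise identity
$$\frac{\partial L}{\partial q}(q,q')\cdot \xi(q)+\frac{\partial L}{\partial q'}(q,q')\cdot d\xi(q). q' = 0,$$
where $\xi(q):=\frac{d}{d\epsilon}h_\epsilon(q)|_{\epsilon=0}$ is the infinitesimal generator of the one-parameter group $\{h_\epsilon\}$. Next, evaluate this identity along a solution $Q_t$ of the stochastic Euler--Lagrange system (SEL) and substitute $\frac{\partial L}{\partial q}(Q_t,\mathcal{D}Q_t)=\mathcal{D}\!\left[\frac{\partial L}{\partial q'}(Q_t,\mathcal{D}Q_t)\right]$ from (SEL). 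The second summand must then be rewritten as $\frac{\partial L}{\partial q'}(Q_t,\mathcal{D}Q_t)\cdot \mathcal{D}\xi(Q_t)$, so that a symmetric product rule for $\mathcal{D}$ assembles the two terms into
$$\mathcal{D}\!\left[\frac{\partial L}{\partial q'}(Q_t,\mathcal{D}Q_t)\cdot \xi(Q_t)\right]=0\qquad P\text{-a.s.}$$
Finally, apply $\mathbb{E}_0[\cdot]$ to both sides. On the process class $\mathcal{C}^1$ introduced in Definition \ref{Nelson}, conditional expectation with respect to $\mathcal{A}_0$ commutes with $\mathcal{D}$ and turns it into an ordinary $t$-derivative, so the displayed identity becomes $\frac{d}{dt}I(Q_t,\mathcal{D}Q_t)=0$ with $I$ as defined in the theorem.

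The main obstacle is the middle step: justifying the Leibniz-type rule for Nelson's symmetric derivative and the composition rule $\mathcal{D}[\xi(Q_t)]=d\xi(Q_t).\mathcal{D}Q_t$. Unlike It\^o's derivative $D$, the symmetric derivative $\mathcal{D}$ corresponds to the Stratonovich calculus and obeys the ordinary chain and product rules, but only after one has verified that all processes involved, in particular $\frac{\partial L}{\partial q'}(Q_t,\mathcal{D}Q_t)$ and $\xi(Q_t)$, lie in $\mathcal{C}^1$. This requires some smoothness/integrability of $L$ and of the generator $\xi$, together with $Q\in\mathcal{C}^1(I)$ coming from the hypotheses on SEL. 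Assuming such regularity, the calculus reduces to the deterministic one by the Coherence Lemma, and the conservation law follows. The commutation $\mathbb{E}_0\circ\mathcal{D}=\frac{d}{dt}\circ\mathbb{E}_0$ is a straightforward consequence of the definition of $D$ and $D_*$ as limits of conditional expectations, so the final passage to the first integral is automatic once the Leibniz step is in place. A fully detailed version of the same argument is carried out in \cite{CrDa07} (p.~60), which I would cite for the technical regularity conditions rather than reproduce in detail.
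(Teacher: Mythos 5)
Your proposal is correct and follows essentially the same route as the paper, which in fact offers no proof of Theorem \ref{StochNoet} at all beyond invoking the stochastization procedure of subsection \ref{SLS} and citing \cite{CrDa07}, p.~60; your sketch simply spells out the argument contained in that reference. The one point to watch is that the pointwise a.s.\ identity $\mathcal{D}\bigl[\frac{\partial L}{\partial q'}\cdot\xi(Q_t)\bigr]=0$ is stronger than what the Nelson calculus delivers in general --- the product rule for $\mathcal{D}$ only closes under the expectation (mixing $D$ and $D_*$), which is why the first integral is defined with $\mathbb{E}_0$ already inside --- but you correctly flag this as the technical obstacle and defer it to Cresson--Darses, exactly as the paper does.
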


\noindent In classical mechanics Noether's Theorem is applied to a
$N$ point particle system to derive the so called $10$ conservation
laws for energy, momentum, angular momentum, and center of mass of
an isolated system, i.e. with no external forces acting on its
particles. Now we will derive conservation laws from the
symmetries of a market with $N$ assets and no external leverage.
That for, we have to extend the symmetries of our example to one
parameter groups.

\begin{example}[\textbf{One Parameter Group of Market Symmetries}]\text{}\label{ExNoet}

\begin{itemize}
\item \textbf{Rotations' Group:} $h_\epsilon(x,D,r):=[S_\epsilon x;S_\epsilon D;S_{\epsilon}(rD)/(rD))]$, where $S_\epsilon$ is a
 one parameter group of orthogonal matrices, i.e. $S_\epsilon\in O(N)$, and $S_0=I_N$, the identity matrix in $N$ dimensions. Therefore,
 $h_0(x,D,r)=[x;D;r]$. The first integral is
 \begin{equation}I=\mathbb{E}_0\Br{\frac{\partial L_{\lambda}}{\partial q^ {\prime}}\Xi[x;D;e]},\end{equation}
 where $\Xi:=\diag(\xi,\xi,\xi)$ is a $3N \times 3N$ matrix,  $\xi$
 is a $N \times N$ antisymmetric matrix and $e=[1,\dots,1]^{\dagger}$. By Theorem (\ref{StochNoet}) we
 obtain $(N-1)N/2$ non-trivial first integrals:
\begin{equation}
  \mathbb{E}_0\left[-\lambda_tD_t\cdot\xi x_t+\frac{x_t}{x_t\cdot D_t}\cdot \xi x_t\right]\equiv\text{ Time Constant.}
\end{equation}

\item \textbf{Nominal Dilations' Group:} $h_\epsilon(x,D,r):=[(1+ \epsilon)x;D;r]$, where
$\epsilon$ is real number, so that $h_0(x,D,r)=[x;D;r]$.
The first integral is
 \begin{equation}I=\mathbb{E}_0\Br{\frac{\partial L_{\lambda}}{\partial q^ {\prime}}[x;0;0]},\end{equation}
By Theorem (\ref{StochNoet}) we obtain one non-trivial first
integral:
\begin{equation}\mathbb{E}_0\left[-\lambda_t(x_t\cdot D_t)\right]\equiv\text{ Time Constant.}\end{equation}

\item \textbf{Deflator Dilations' Group:} $h_\epsilon(x,D,r):=[x;(1+\epsilon)D;r]$, where
$\epsilon$ is real number, so that
$h_0(x,D,r)=[x;D;r]$. The first integral is
 \begin{equation}I=\mathbb{E}_0\Br{\frac{\partial L_{\lambda}}{\partial q^ {\prime}}[0;D;0]}.\end{equation}By Theorem (\ref{StochNoet}) we obtain
 one trivial first integral, the $1$-constant function:
\begin{equation}\mathbb{E}_0\Br{\frac{x_t\cdot D_t}{x_t\cdot D_t}}\equiv 1.\end{equation}

\item \textbf{Time Translations' Group:} since the time has been excluded by construction from the bundle $\Upsilon$, we cannot utilize N\"other's result as depicted in Theorems \ref{DetNoet} and \ref{StochNoet}. Therefore, we compute, for the deterministic case
    \begin{equation}
     \begin{split}
      \frac{dL_{\lambda}}{dt}&=\frac{\partial L_{\lambda}}{\partial q}\cdot q^{\prime}+\frac{\partial L_{\lambda}}{\partial q^{\prime}}\cdot q^{\prime\prime}+\frac{\partial L_{\lambda}}{\partial t}=\\
      &=\frac{\partial L_{\lambda}}{\partial q}\cdot q^{\prime}+\frac{d}{dt}\left[\frac{\partial L_{\lambda}}{\partial q^{\prime}}\cdot q^{\prime}\right]-\frac{d}{dt}\left[\frac{\partial L_{\lambda}}{\partial q^{\prime}}\right]\cdot q^{\prime}+\frac{\partial L_{\lambda}}{\partial t}.
     \end{split}
    \end{equation}
    Since the Euler-Lagrange equations are fulfilled, we obtain
    \begin{equation}
    \frac{d}{dt}\left[\frac{\partial L_{\lambda}}{\partial q^{\prime}}\cdot q^{\prime}-L_{\lambda}\right]=-\frac{\partial L_{\lambda}}{\partial t}.
    \end{equation}
    Time translation invariance of $L_{\lambda}$ means $\frac{\partial L_{\lambda}}{\partial t}=0$, and, therefore
    \begin{equation}
    \frac{\partial L_{\lambda}}{\partial q^{\prime}}\cdot q^{\prime}-L_{\lambda}\equiv\text{ Time Constant,}
    \end{equation}
    and in the stochastic case
    \begin{equation}
    \mathbb{E}_0\left[\frac{\partial L_{\lambda}}{\partial q^{\prime}}\cdot q^{\prime}-L_{\lambda}\right]\equiv\text{ Time Constant,}
    \end{equation}
    which in our case reads
    \begin{equation}
    \mathbb{E}_0\left[-\frac{x_t\cdot (r_tD_t)}{(x_t\cdot D_t)}\right]\equiv\text{ Time Constant.}
    \end{equation}
\end{itemize}

\end{example}

We can now utilize the results from the preceding example to complete Theorem \ref{arbitrageDynamics} for the general case, where forwards are allowed as assets.

\begin{theorem}[\textbf{No Arbitrage Market
Dynamics}]\label{noarbitrageDynamics2} In a closed market satisfying
the  no-free-lunch-with-vanishing-risk condition, the dynamics for
market portfolio strategy, deflators and term structures have constant expectations
over time. More exactly the following identity holds a.s.
\begin{equation}
\begin{split}
x_t&= x_0 + \delta x_t\\
D_t&= D_0 + \delta D_t\\
r_t&= r_0 + \delta r_t
\end{split}
\end{equation}
where $\delta x, \delta D$ are processes satisfying condition
(\ref{cond}).
\end{theorem}
\begin{proof}
Let us consider the deterministic case and enrich the system of ODEs \ref{odesys} with the equations obtained by Example \ref{ExNoet}.
After some computations we obtain
\begin{equation}
\left\{
\begin{split}
&x_t\cdot(D_t^{\prime}+r_tD_t)=0\\
&x_t\cdot(\lambda_tr_tD_t-\lambda_t^{\prime}D)=0\\
&x_t^{\prime}\cdot D_t=0\\
&\lambda_t (x_t\cdot D_t)\equiv\text{const}\\
&\frac{x_t\cdot (r_tD_t)}{(x_t\cdot D_t)}\equiv\text{const}\\
&-\lambda_tD_t\xi x_t+\frac{x_t\cdot \xi x_t}{(x_t\cdot D_t)}\equiv\text{const},
\end{split}
\right.
\end{equation}
\noindent where $\xi$ is an arbitrary antisymmetric $N\times N$ matrix. By differentiating the equations where on the r.h.s. there is an unknown constant we get
\begin{equation}\label{eqxi}
\left\{
\begin{split}
&x_t\cdot(D_t^{\prime}+r_tD_t)=0\\
&x_t\cdot(\lambda_tr_tD_t-\lambda_t^{\prime}D)=0\\
&x^{\prime}\cdot D_t=0\\
&\lambda_t^{\prime} (x_t\cdot D_t)+\lambda_t (x_t^{\prime}\cdot D_t)+\lambda_t (x_t\cdot D_t^{\prime})=0\\
&\left[x_t^{\prime}\cdot (r_tD_t)+x_t\cdot (r_t^{\prime}D_t)+x_t\cdot (r_tD_t^{\prime})\right](x_t\cdot D_t)+\\
&\qquad-x_t\cdot (r_tD_t)\left[(x_t^{\prime}\cdot D_t)+(x_t\cdot D_t^{\prime})\right]=0\\
&(\lambda_t^{\prime}D_t\cdot \xi x_t-\lambda_tD_t^{\prime}\xi x_t)(x_t\cdot D_t)^2+(x_t^{\prime}\cdot \xi x_t+x_t\cdot \xi x^{\prime})(x_t\cdot D_t)+\\
&\qquad-(x_t\cdot\xi x_t)(x_t\cdot D_t^{\prime})=0,
\end{split}
\right.
\end{equation}
By an appropriate choice of $\xi$ the system (\ref{eqxi}) becomes a system  of $3N+1$ first order ODEs in $3N+1$ unknown real values functions $(x,D,r, \lambda)$ of time.
We see that
\begin{equation}
\left\{
\begin{split}
x_t&\equiv x_0\\
D_t&\equiv D_0\\
r_t&\equiv r_0\\
\lambda_t&\equiv \lambda_0
\end{split}
\right.
\end{equation}
Since (\ref{eqxi}) is not a DAE system, by the Picard-Lindel\"of theorem we conclude that this solution is unique. Moreover, it fullfills the last equation of (\ref{eqxi}) for any $\xi$. The proof is completed.\\
\end{proof}

\section{Conclusion}
By introducing an appropriate stochastic differential geometric
formalism the classical theory of stochastic finance can be
embedded into a conceptual framework called \textit{Geometric
Arbitrage Theory}, where the market is modelled with a principal
fibre bundle, arbitrage corresponds to its curvature and arbitrage
strategies to its holonomy. The Fundamental Theorem of Asset
Pricing is given a differential homotopic characterization. The
market dynamics is seen to be the solution of stochastic
Euler-Lagrange equations for a choice of the Lagrangian allowing
to express Hamilton's principle of minimal action as the minimal
expected arbitrage principle, an extension of the no-arbitrage
principle. Explicit are
provided for a closed market.

\section*{Acknowledgements}
We would like to extend our gratitude to Hideyuki Takada, Gianni Arioli, Giovanni Paolinelli, Yuri Gliklikh, Josef Teichmann, Juan Pablo Ortega, Freddy
Delbaen,  Ren\'e Carmona, Lee Smolin, Robert Sch\"{o}ftner,
Samuel Vazquez, Mario Clerici, and Simone Severini  for many
discussions and ideas which influenced the results of this paper.

\end{document}